\patchcmd\algocf@Vline{\vrule}{\vrule \kern-0.4pt}{}{}
\patchcmd\algocf@Vsline{\vrule}{\vrule \kern-0.4pt}{}{}
\definecolor{darkgrey}{gray}{0.3}
\definecolor{commentcolor}{gray}{0.5}
\crefname{algocf}{Algorithm}{Algorithms}
\newcommand{\ie}{{\em i.e.,~\xspace}}
\newcommand{\eg}{{\em e.g.,~\xspace}}
\newcommand{\cA} {\ensuremath{\EuScript{A}}}
\newcommand{\cAG} {\ensuremath{\EuScript{A}}_{\textsc{G}}}
\newcommand{\cAP} {\ensuremath{\EuScript{A}}_{\textsc{P}}}
\newcommand{\cB} {\ensuremath{\mathcal{B}}}
\newcommand{\cC} {\ensuremath{\mathcal{C}}}
\newcommand{\cD} {\ensuremath{\mathcal{D}}}
\newcommand{\cE} {\ensuremath{\mathcal{E}}}
\newcommand{\cF} {\ensuremath{\mathcal{F}}}
\newcommand{\cG} {\ensuremath{\mathcal{G}}}
\newcommand{\cH} {\ensuremath{\mathcal{H}}}
\newcommand{\cM} {\ensuremath{\mathcal{M}}}
\newcommand{\cN} {\ensuremath{\mathcal{N}}}
\newcommand{\cO} {\ensuremath{\mathcal{O}}}
\newcommand{\cS} {\ensuremath{\mathcal{S}}}
\newcommand{\cX} {\ensuremath{\mathcal{X}}}
\newcommand{\cW} {\ensuremath{\mathcal{W}}}
\DeclareMathOperator*{\argmax}{arg\,max}
\newcommand{\E}[1]{\mathbb{E}\mleft[#1\mright]}
\newcommand{\subE}[2]{\mathop{\mathbb{E}}\limits_{#1}\mleft[#2\mright]}
\newcommand{\subP}[2]{\mathbb{P}_{#1}\left(#2\right)}
\newcommand{\vx}{\bm{x}}
\newcommand{\eps}{\varepsilon}
\newcommand{\ind}[1]{\mathbb{I}{\left\{#1\right\}}}
\newcommand{\gft}{\textsc{GFT}}
\newcommand{\prof}{\textsc{Profit}}
\newcommand{\term}[1]{\ensuremath{\texttt{#1}}\xspace}
\def \OPT {\term{OPT}}
\newcommand{\erevmax}{\textsc{Profit-Max}\xspace}
\newcommand{\egftmax}{\textsc{GFT-Max}\xspace}
\newcommand{\gftest}{\textsc{GFT-Est}\xspace}
\newcommand{\hedge}{\textsc{Hedge}\xspace}
\newcommand{\blockdecomposition}{\textsc{Block-Decomposition}\xspace}
\definecolor{mygreen}{rgb}{0.0, 0.5, 0.0}
\definecolor{myorange}{rgb}{0.55, 0.62, 1}
\newcommand{\initOneLiners}{%
 	\setlength{\itemsep}{0pt}
	\setlength{\parsep }{0pt}
  	\setlength{\topsep }{0pt}     	
}
\newenvironment{OneLiners}[1][\ensuremath{\bullet}]
    {\begin{list}
        {#1}
        {\initOneLiners}}
    {\end{list}}
\newcommand{\xhdr}[1]{\vspace{2mm} \noindent{\bf #1}}
\newcommand{\defeq}{=}
\newcommand{\supp}{\textnormal{supp}}
\newcommand{\bull}{\hspace*{2mm}$\bullet$\hspace*{1mm}}
\newcommand{\bmid}{\hspace*{2mm}\Bigg\vert\hspace*{2mm}}
\newcommand{\egft}{\widehat\gft}
\newcommand{\bgft}{\hat{\bm{r}}}
\newcommand{\cT}{\mathcal{T}}
\theoremstyle{plain}
\newtheorem{theorem}{Theorem}[section]
\newtheorem{proposition}[theorem]{Proposition}
\newtheorem{claim}[theorem]{Claim}
\newtheorem{lemma}[theorem]{Lemma}
\newtheorem{corollary}[theorem]{Corollary}
\theoremstyle{definition}
\newtheorem{definition}[theorem]{Definition}
\theoremstyle{remark}
\definecolor{niceRed}{RGB}{190,38,38}
\definecolor{Red2}{RGB}{219, 50, 54}
\definecolor{mgreen}{HTML}{9ECA8D}
\definecolor{blueGrotto}{HTML}{059DC0}
\definecolor{limeGreen}{HTML}{81B622}
\definecolor{myellow}{rgb}{0.88,0.61,0.14}
\definecolor{navyBlueP}{HTML}{315794}
\definecolor{Sepia}{HTML}{7F462C}
\definecolor{red2}{HTML}{1F462C}
\definecolor{orange2}{HTML}{FF8000}
\definecolor{mgray}{HTML}{ABB3B8}
\definecolor{myPurple}{RGB}{175,0,124}
\definecolor{royalBlue}{HTML}{057DCD}
\title{No-Regret Learning in Bilateral Trade via Global Budget Balance\thanks{This is the full version of \citet{longVersion}.}}
\author{
    Martino Bernasconi$^\dagger$ \quad
    Matteo Castiglioni$^\ddagger$ \quad
    Andrea Celli$^\dagger$ \quad
    Federico Fusco$^\#$  \vspace{6mm}\\
    $^\dagger$\ Bocconi university\\
    $^\ddagger$\ Politecnico di Milano\\
    $^\#$\ Sapienza University of Rome\vspace{2mm}\\
    {\textcolor{black}{\small\texttt{\{martino.bernasconi,andrea.celli2\}@unibocconi.it}, \quad \texttt{matteo.castiglioni@polimi.it,}}}\\
    {\textcolor{black}{\small\texttt{federico.fusco@uniroma1.it}}}
}
\date{}
\begin{document}

\maketitle
\thispagestyle{empty}

\begin{abstract}
    Bilateral trade models the problem of intermediating between two rational agents --- a seller and a buyer --- both characterized by a private valuation for an item they want to trade. 
    We study the online learning version of the problem, in which at each time step a new seller and buyer arrive and the learner has to set prices for them without any knowledge about their (adversarially generated) valuations.

    In this setting, known impossibility results rule out the existence of no-regret algorithms when budget balanced has to be enforced at each time step. In this paper, we introduce the notion of \emph{global budget balance}, which only requires the learner to fulfill budget balance over the entire time horizon. Under this natural relaxation, we provide the first no-regret algorithms for adversarial bilateral trade under various feedback models. First, we show that in the full-feedback model, the learner can guarantee $\tilde O(\sqrt{T})$ regret against the best fixed prices in hindsight, and that this bound is optimal up to poly-logarithmic terms. Second, we provide a learning algorithm guaranteeing a $\tilde O(T^{\nicefrac 34})$ regret upper bound with one-bit feedback, which we complement with a $\Omega(T^{\nicefrac 57})$ lower bound that holds even in the two-bit feedback model. Finally, we introduce and analyze an alternative benchmark that is provably stronger than the best fixed prices in hindsight and is inspired by the literature on bandits with knapsacks.
\end{abstract}

\clearpage

\tableofcontents

\clearpage

\pagenumbering{arabic}

\section{Introduction}

    Bilateral trade is a classic economic problem where two agents --- a seller and a buyer --- are interested in trading a good. Both agents are characterized by a private valuation for the item, and their goal is to maximize their own utility. Solving this problem requires the design of a mechanism that intermediates between the two parties, facilitating the trade.
    Ideally, the mechanism should maximize efficiency (\ie trade whenever the buyer's valuation exceeds the seller's one) while ensuring that agents behave according to their true preferences (\emph{incentive compatibility}), and that the utility for participating in the mechanism of each agent is non-negative (\emph{individual rationality}). These properties ensure favorable outcomes for the agents, yet they do not guarantee the economic viability of the mechanism. To see this, consider the following mechanism $\cM$.  $\cM$ asks the agents for their valuations, $s$ for the seller and $b$ for the buyer, and makes the trade happen if it is convenient (\ie if $s \le b$). In case of a trade, $\cM$ then charges $s$ to the buyer and pays $b$ to the buyer. It is not hard to see that $\cM$ enforces incentive compatibility and individual rationality, and is efficient by design. However, it exhibits the major drawback of allowing the intermediary to incur a net loss when $b>s$. To avoid such situations, a crucial constraint in bilateral trade is \emph{budget balance}, which restricts the mechanism from subsidizing the agents.

    As highlighted by the above example, an incentive compatible mechanism maximizing efficiency for bilateral trade may not be budget balanced. This phenomenon was first observed by \citet{Vickrey61}; subsequently \citet{MyersonS83}, provided a more general impossibility result by showing the existence of instances where a fully efficient mechanism that satisfies incentive compatibility, individual rationality, and budget balance does not exist. This result holds even when probabilistic information on the agents' valuations is available. To circumvent these impossibility results, the extensive subsequent research primarily focuses on finding approximately efficient mechanisms in the Bayesian setting. There, various incentive compatible mechanisms exist that give a constant-factor approximation to the social welfare (see, \eg \citet{BlumrosenD14,kang22fixed}, while more recent works also consider the harder problem of approximating the gain from trade \citep{Mcafee08,BlumrosenM16,brustle2017approximating,DengMSW21,fei2022improved}. While the Bayesian assumption of having perfect knowledge about the underlying distributions of valuations is, in some sense, necessary for extracting meaningful approximations to the social welfare \citep{Duetting20}, it is important to observe that this assumption is oftentimes unrealistic.  
    
    Following the recent line of work initiated by \citet{CesaBianchiCCF21}, we study this fundamental mechanism design problem through the lens of regret minimization in a repeated setting where at each time $t$, a new seller/buyer pair arrives. The seller arriving at time $t$ has a private valuation $s_t$ representing the lowest price they are willing to accept for the item. Analogously, the buyer has a private valuation $b_t$ representing the highest price they are willing to pay for the item. The learner, without any knowledge about the private valuations at the current time $t$, posts two (possibly randomized) prices: $p_t$ to the seller and $q_t$ to the buyer. A trade happens when both agents agree to trade, i.e., when $s_t \le p_t$ and $ q_t \le b_t$. After posting $(p_t,q_t)$, the learner observes some feedback about the transaction, and is awarded the \emph{gain from trade}:
    \[
        \gft_t(p,q) \defeq \ind{s_t \le p} \ind{q\le b_t} (b_t - s_t).
    \]
    
    The goal of the learner is to maximize the overall gain from trade or, equivalently, minimize the regret with respect to the best price in hindsight. Prior research has investigated the impact of different budget balance notions on the problem's learnability. When the mechanism is constrained to enforce per-round \emph{strong budget balance} (\ie $p_t = q_t$ at each time step $t$), it is possible to attain sublinear regret only when the sequence of valuations is drawn i.i.d. from some fixed unknown distribution, and the learner has either full feedback, or some stringent assumptions regarding the sequence of valuations are enforced. Specifically, in partial feedback regime, valuations have to be drawn i.i.d. from a smooth distribution, independently for the seller and the buyer \citep{CesaBianchiCCF21,cesa23mor}. If the learner is only required to enforce (step-wise) \emph{weak budget balance} (\ie $p_t \le q_t$ for each $t$), then \citet{AzarFF22} provide a learning algorithm 
    achieving sublinear $2$-regret when the sequence of valuation is generated by an oblivious adversary.\footnote{The $\alpha$-regret measures the difference between the gain from trade of the best fixed price in hindsight and $\alpha$ times that of the algorithm (see e.g., \citet{KakadeKL09}).} They also show that this result is tight: no algorithm can achieve sublinear $(2-\eps)$-regret in the adversarial case, for any constant $\eps>0$. In an attempt to overcome this barrier, \citet{CesaBianchiCCFL23} show that sublinear regret can be achieved beyond the i.i.d. stochastic setting, under the assumption that the adversary is constrained to choose randomized (possibly non-stationary) sequences of valuations that are not ``too concentrated'' (\ie under a $\sigma$-smooth adversary model).
    Inspired by the positive results obtained in the literature by transitioning from strong to weak budget balance, we investigate the following natural open question:
    \begin{center}
    \em Is it possible to achieve sublinear regret against an oblivious adversary in the repeated bilateral trade problem under a realistic notion of budget balance?
    \end{center}
    We answer this question positively by introducing \emph{global budget balance}, where the learner is required to maintain budget balance only ``overall''. The idea behind global budget balance is to allow the learner to reinvest the profit gained in previous rounds (obtained by posting a lower price for the seller compared to the buyer), with the constraint that the learner cannot subsidize the market \emph{over the whole time horizon}. Formally, a learning algorithm that posts prices $(p_1, q_1), (p_2, q_2), \dots$ is global budget balanced if the following inequality holds almost surely: $\sum_{t=1}^T \prof_t(p_t,q_t) \ge 0$. The profit $\prof_t(p_t,q_t) \defeq \ind{s_t \le p_t} \ind{q_t\le b_t} (q_t - p_t)$ is non-negative when $p_t \le q_t$, and may drop below zero only by posting prices that are not step-wise budget balanced, \ie $p_t > q_t.$   
    We argue that this constraint is more realistic than the restrictive notions of per-round budget balance. For instance, in contexts like ride-hailing platforms (such as Uber and Lyft), the platform might opt to forego some short-term profit to enhance other metrics, like the overall welfare of the system.
    \begin{table*}[t!]
    \centering
        \scalebox{.876}{
        \begin{tabular}{m{4.1cm}m{2cm}m{1.8cm}m{4.3cm}m{4.2cm}}
         & \bf Type of\newline Adversary & {\bf Budget Balance} & {\bf Regret \newline Upper Bounds}& \bf Regret  \newline Lower Bounds\\
        \toprule
        \cite{CesaBianchiCCF21} &  stochastic\newline setting~${}$ & strong & \bull Full: $\tilde O(T^{\nicefrac12})$\newline \bull Partial: $\tilde O(T^{\nicefrac23})^*$  & \bull Full: $\Omega(T^{\nicefrac12})$\newline \bull Partial: $\Omega(T^{\nicefrac23})$ \\
        \midrule
        \cite{AzarFF22} &  adversarial\newline setting & weak & \bull Full: $\tilde O(T^{\nicefrac12})^\dagger$\newline \bull Partial: $\tilde O(T^{\nicefrac34})^\dagger$ & \bull Full: $\Omega(T^{\nicefrac12})^\dagger$ \newline \bull Partial: $\Omega(T^{\nicefrac23})^\dagger$\\
        \midrule
        \cite{CesaBianchiCCFL23} & $\sigma$-smooth\newline adversary & weak & \bull Full: $\tilde O(T^{\nicefrac12})$\newline \bull Partial: $\tilde O(T^{\nicefrac34})$ & \bull Full: $\Omega(T^{\nicefrac12})$\newline \bull Partial: $\Omega(T^{\nicefrac34})$\\
        \midrule
        \rowcolor{navyBlueP!20}\textbf{This paper}& adversarial\newline setting & \bf global & \bull Full: $\tilde O(T^{\nicefrac12})$\newline \bull Partial: $\tilde O(T^{\nicefrac34})$ & \bull Full: $\Omega(T^{\nicefrac12})$\newline \bull Partial: $\Omega(T^{\nicefrac57\approx 0.714})$\\ \bottomrule
        \end{tabular}
        }
        \caption{Comparison of prior results on bilateral trade. The positive result for a stochastic adversary in the partial feedback, marked with an asterisk ($*$), holds under the assumption that the seller and buyer valuations are drawn independently from smooth distributions. All the bounds in the second row (\citet{AzarFF22}), marked with a dagger ($\dagger$), apply to $2$-regret.}\label{tab:recap results}
    \end{table*}

\subsection{Overview of Our Results}

    We report here an overview of our results, we also refer to \Cref{tab:recap results} for a comparison with the state of the art. In this paper we introduce the notion of global budget balance for the repeated bilateral trade problem, and provide the following results in terms of regret with respect to the best fixed price in hindsight in the adversarial case:
    \begin{itemize}
        \item In the full feedback model, when the learner observes seller and buyer valuations after posting prices, we design a learning algorithm characterized by a $\tilde O(T^{\nicefrac12})$ regret upper bound (\Cref{thm:full_upper}). We also prove that no learning algorithm can improve this bound by more than a poly-$\log T$ factor (\Cref{thm:full_lower}). 
        \item In the \emph{one-bit feedback} model, where the learner can observe only whether the trade happened or not, we show that it is possible to guarantee a $\tilde O(T^{\nicefrac34})$ regret upper bound (\Cref{thm:partial_upper}). Then, we provide an $\Omega(T^{\frac57\approx 0.714})$ lower bound, which holds even in the \emph{two-bit feedback} model, where the learner can observe which agent accepted and who declined the offered prices (\Cref{thm:lower2bits}).
    \end{itemize}
    
    These results demonstrate how the notion of global budget balance enables online learnability, allowing us to provide the first no-regret algorithms for repeated bilateral trade within an oblivious adversary framework,
    in contrast to the per-round approaches considered in previous works.
    Furthermore, the regret rates separate full feedback and the two partial feedback models (one or two bits). In partial feedback, the surprising lower bound of $\Omega(T^{\nicefrac 57})$, together with the $O(T^{\nicefrac 34})$ upper bound, mark a clear separation between this problem and other partial feedback models (\eg partial monitoring \citep{BartokFPRS14} and online learning with feedback graph  \citep{AlonCGMMS17}, where the minimax regret have been characterized to fall in one of three admissible rates: $\sqrt T$, $T^{\nicefrac 23}$ and $T$). This separation had already been hinted at in the special case of $\sigma$-smooth adversary by \citet{CesaBianchiCCFL23}.
    
    Finally, inspired by work on bandits with knapsacks (see \Cref{sec:rel} for detailed references), we introduce a stronger learning benchmark: the best fixed feasible distribution over prices. Such benchmark is allowed to post prices that are not per-round budget balanced, but is global budget balanced in ``expectation''.  
    \begin{itemize}
        \item We show that there exists a constant $\varepsilon_0>0$ such that it is impossible to achieve sublinear $\alpha$-regret against this benchmark for any $\alpha\in [1,1+\varepsilon_0)$ (\Cref{thm:alpha_lower}).
        \item We prove that the best feasible distribution over prices collects at most twice the gain from trade extracted by the best fixed price in hindsight (\Cref{thm:2_upper}). This implies the existence of algorithms with sublinear $2$-regret against this new benchmark.
        \item We show that the multiplicative gap of $2$ between the gain from trade attainable by the two different benchmarks is tight (\Cref{thm:2_lower}).
        \end{itemize} 

        First, we observe that the task of learning the best feasible distribution over prices is reminiscent of the problem of bandits with knapsacks in the presence of replenishment \citep{kumar2022non,SlivkinsSF23,BernasconiCCF24ICLR}. In contrast to previous work, we consider the more challenging adversarial setting and provide learning algorithms with a competitive ratio that is an absolute constant.  
        In the adversarial bandits with knapsacks literature, the only setting where sublinear $\Theta(1)$-regret can be achieved is when the available budget is $\Omega(T)$ \citep{castiglioni22online}, while in general the competitive ratio is $O(\log T)$ \citep{immorlica2022adversarial}. Second, the tight multiplicative gap of $2$ between the two benchmarks suggests that  
        to design a better learning algorithm with sublinear $\alpha$-regret with respect to the best feasible distribution (for $\alpha \in (1+\eps_0,2)$), a more direct approach is needed.

    \subsection{Challenges and Techniques}

    The key aspects that distinguish bilateral trade from standard online learning models with full or bandit feedback can be identified in two main features: the action space and the challenging partial feedback structure. The applicability of previous results to our model is significantly limited due to adversarial input sequences and the need to handle the global budget balance constraint effectively. 
    
    \xhdr{Action space.} The action space is continuous and bidimensional (prices belong to $[0,1]^2$), and neither the gain from trade nor the profit functions are continuous in the prices posted. This makes it challenging to discretize the space with a finite grid $G$ such that the best prices in $G$ perform similarly to the best prices in $[0,1]^2$, and such that grid $G$ is small enough that it is possible to learn in an online way its best pair of prices.
    In the absence of any probabilistic or smoothness assumption on the adversary, we cannot rely on a ``smoothing trick'' to induce regularity on the expected gain from trade, as in previous works \citep{CesaBianchiCCFL23}. 
    
    \xhdr{Partial Feedback.} Partial feedback models for bilateral trade are inherently challenging. The one-bit feedback model only informs the learner on whether the trade happened or not, which is significantly less informative than the traditional bandit feedback model, since the learner cannot even reconstruct the gain from trade received for the specific prices it posted. For example, if the learner posts price $\nicefrac 12$ to both agents, and they accept the trade, there is no way of distinguishing between the case in which the gain from trade is constant (\eg valuations are $(0,1)$) from the case in which the gain from trade is arbitrarily small (\eg valuations are $(\nicefrac 12-\varepsilon,\nicefrac 12+\varepsilon)$ for some small $\varepsilon$). On the other hand, if one of the two agents rejects the trade, then the learner can only infer loose bounds on the valuations.

    \xhdr{Gain from Trade vs. Profit trade-off.} Global budget balance requires that the cumulative sum of profits at the end of the time horizon must be greater than or equal to $0$. Therefore, the learner has to maximize its cumulative gain from trade, while accumulating enough profit to enforce global budget balance. Balancing this trade-off is a complex task due to the different nature of the two objectives: gain from trade is maximized by setting identical prices for both agents, whereas profit is maximized by selecting prices that are ``far from each other''. To see this, consider an instance where valuations are either $(s_t,b_t)=(0,1)$ or $(s_t,b_t)=(\nicefrac 12-\varepsilon,\nicefrac 12+\varepsilon)$ with equal probability, for some small $\varepsilon>0$. To achieve maximum expected profit, the learner would always set the price at $0$ for the seller and $1$ for the buyer. On the other hand, to maximize the expected gain from trade, the learner would always offer $\nicefrac 12$ to both agents.

    \xhdr{Our Two-Phase Approach.} Our learning algorithms follow a two-phase approach, initially focusing on maximizing profit through a carefully designed multiplicative grid $F_K$ of candidate prices and then switching to maximizing gain from trade on a different (additive) grid  $H_K$ of non-budget-balanced prices. At a high level, the first phase is used to collect budget, which can be subsequently reinvested in the second phase. This poses several challenges due to the non-stationary nature of the adversary. The pairs of prices in $H_K$, which are not per-round budget balanced, enable the algorithm to circumvent the negative results that hinder discretization in scenarios with per-round budget balance (see, \eg, the ``needle in a haystack'' phenomenon in Theorem 7 of \citet{cesa23mor}). The multiplicative nature of the grid $F_K$ is crucial in ensuring that the gain from trade accrued by the algorithm during the first phase does not yield too much regret. This last result is surprising since, in the first phase, the learning algorithm is maximizing profit, an objective that is inherently orthogonal to the gain from trade.
    Finally, the scarcity of feedback in the one-bit feedback model is addressed via a carefully designed estimation technique that allows the learner to estimate the gain from trade in one point of the grid $H_K$ posting two different prices. In contrast to the technique by \citet{AzarFF22}, our procedure is ``asymmetric'' in how it deals with the seller and buyer, and it provides biased estimates.
    
    \xhdr{Lower bounds.} Besides the typical challenges in proving lower bounds for repeated bilateral trade with respect to the best fixed price in hindsight, in our model the agent is allowed to post prices that are not per-round budget balanced (\ie it may be the case that $p_t>q_t$). This considerably complicates the construction of the hard instances, as any algorithm could sacrifice temporarily some profit by posting prices with $p_t>q_t$ to extract a large gain from trade (that the fixed price benchmark may not be able to obtain). To deter this kind of behavior, we incorporate into the hard instances certain unfavorable trade opportunities that dissuade the learner from setting prices that are not budget balanced. This additional complication comes at some cost: in the partial (two-bit) feedback model we recover a lower bound of $\Omega(T^{\nicefrac57})$, whereas the corresponding lower bound by \citet{CesaBianchiCCFL23} is $\Omega(T^{\nicefrac34})$.
        
    \subsection{Further Related Works}\label{sec:rel}

        \xhdr{Online Learning and  Economics.} 
        Regret minimization techniques have found applications across different domains motivated by economics, with the goal of overcoming unrealistic assumptions.
        For example, they have been applied to one-sided pricing \citep{kleinberg2003value,feldman2016online}, auctions \citep{morgenstern2015pseudo,Cesa-BianchiGM15,LykourisST16,weed2016online,balseiro2019learning,nedelec2022learning,DaskalakisS22,Cesa-BianchiCRFL24}, contract design \citep{HoSV16,ZhuBYWYJ22,DuettinGSW23}, brokerage \citep{bolic2023online}, and Bayesian persuasion \citep{castiglioni2020online,zu2021learning,castiglioni2023regret,bernasconi2023optimal}.

        \xhdr{Partial feedback.} Repeated bilateral trade naturally involves challenges due to partial feedback. Therefore, our work aligns with the research that explores online learning with feedback models beyond the conventional full feedback and bandit models. Our one- and two-bit feedback models share similarities with \emph{graph-structured feedback} \citep{AlonCGMMS17} and with the \emph{partial monitoring} framework \citep{cesa2006regretpartial,BartokFPRS14}.   

        \xhdr{Bandits with knapsacks.}
        Another related line of work is that of online learning under long-term constraints. Some works study the case of static constraints and develop projection-free algorithms with sublinear regret and constraint violations \citep{mahdavi2012trading,jenatton2016adaptive}, while others study the case of time-varying constraints \citep{mannor2009online,yu2017online,sun2017safety}.
        \citet{badanidiyuru2018bandits} introduced and solved the (stochastic) bandits with knapsacks (BwK) framework, in which they consider bandit feedback and stochastic objective and cost functions. In this model, the learner's objective is to maximize utility while guaranteeing that, for each of the $m$ available resources, cumulative costs are below a certain budget $B$. Other optimal algorithms for stochastic BwK were proposed by \citet{agrawal2019bandits,immorlica2022adversarial}. 
        The setting with adversarial inputs was first studied in \citet{immorlica2022adversarial}, where the baseline considered is the best fixed distribution over arms. Achieving no-regret is not possible under this baseline and, therefore, they provide no-$\alpha$-regret guarantees for their algorithm. If we denote by $\rho$ the per-iteration budget of the learner, the best-known guarantees on the competitive ratio $\alpha$ are $\nicefrac 1\rho$ in the case in which $B=\Omega(T)$ \citep{castiglioni22online}, and $O(\log m\log T)$ in the general case \citep{kesselheim2020online}.
        When considering a benchmark similar to the adversarial BwK scenario, we show that our algorithm ensures a $\alpha=2$ guarantee.
        \citet{kumar2022non} recently proposed a generalization of the stochastic BwK model in which resource consumption can be non-monotonic; that is, resources can be replenished or renewed over time. Our model also admits replenishment.
        It should be noted that, in our setting, directly utilizing techniques from BwK is not feasible due to the complex continuous action space and the limited availability of feedback, which is less informative compared to traditional bandit feedback.

\section{Repeated Bilateral Trade} \label{sec:model}

    We study repeated bilateral trade problem in an online learning setting, where the learner has to enforce global budget balance and the sequence of valuations is generated by an oblivious adversary.

    \begin{algorithm}[t]
    \NoCaptionOfAlgo
    \caption{\textbf{Learning Protocol of Repeated Bilateral Trade}}\label{a:learning-model}
        \DontPrintSemicolon
        Initial budget $B_0 = 0$\;
        \For{$t=1,2,\ldots$}{
             The adversary privately chooses $(s_t,b_t)$ in $[0,1]^2$\;
             The learner posts prices $(p_t,q_t) \in [0,1]^2$ such that $p_t - q_t \le B_t$\;
             The learner receives a (hidden) reward $ \gft_t(p_t, q_t)\in [-1,1]$\;
             The budget of the learner is updated $B_t \gets B_{t-1} + \prof_t(p_t,q_t)$\;
             Feedback $z_t$ is revealed to the learner\;
        }
    \end{algorithm} 
    
    \xhdr{The learning protocol.} The learner repeatedly interacts with the environment according to the following protocol (see also pseudocode). At each time step $t$, a new pair of buyer and seller arrives, characterized by valuations $b_t \in [0,1]$ and $s_t\in [0,1]$, respectively. Without knowing $s_t$ and $b_t$, the learner posts two prices: $p_t \in [0,1]$ to the seller, and $q_t\in [0,1]$ to the buyer. If both the seller and the buyer accept (\ie $s_t \le p_t$ and $q_t \le b_t$), then the learner is awarded the gain from trade
    \[
        \gft_t(p_t,q_t) \defeq \ind{s_t \le p_t} \ind{q_t\le b_t} (b_t - s_t),
    \]
    that corresponds to the increase in social welfare generated by the trade.
    To simplify the notation, we omit the second argument of  $\gft_t$ (and of $\prof_t$) when the same price is posted to both agents. After posting the prices, the learner does \emph{not} observe directly the gain from trade or the valuations, but receives some feedback $z_t$. 
    
    \xhdr{Global budget balance.} For each time step $t$, the notion of \emph{profit} of the learner is naturally defined: if the agents accept prices $p_t$ and $q_t$, then the learner receives a net profit of $q_t - p_t \in [-1,1]$. Unlike the case of the gain from trade, the learner naturally knows its profit at the end of each time step, as it sets the prices and always observes whether the trade occurred. The learner maintains a budget $B_t$, which is initially $0$ ($B_0 = 0$) and is updated at each time step according to the profit generated or consumed: ${B_t \gets B_{t-1} + \prof_t(p_t,q_t).}$ We restrict the learner to enforce a {\em global budget balance} property which states that the final budget $B_T$ has to be non-negative with probability $1$.
    In practice, we require the learner to always post prices $p_t,q_t$ such that $(p_t-q_t) \le B_{t-1}$.\footnote{In fact, this condition is not just sufficient, but also necessary. Indeed, if $p_t-q_t>B_t$, the adversary might select valuations $(s_t, b_t)$ such that $\prof_t(p_t,q_t)<-B_{t-1}$ and thus $B_t<0$. After that, the adversary might select valuations $(s_\tau,b_\tau)=(0,0)$ for all $\tau\ge t+1$, thereby forcing $B_T=B_{t}<0$.}

    \xhdr{Feedback models.} In this paper, we study three feedback models, that we list here in increasing order of intricacy:
    \begin{itemize}
        \item\emph{Full feedback}: at the end of each round, the agents reveal their valuations (\ie $z_t = (s_t, b_t)$).
        \item\emph{Two-bit feedback}: the agents only reveal their willingness to accept the prices offered by the learner (\ie $z_t$ is composed by the two bits $(\ind{s_t \le p_t}, \ind{q_t \le b_t})$)
        \item\emph{One-bit feedback}: the learner only observes whether the trade happened or not (\ie $z_t = \ind{s_t \le p_t} \cdot\ind{q_t \le b_t}$).
    \end{itemize}
    These feedback models are not only interesting from the theoretical learning perspective, but they are also well motivated in terms of practical applications. The full-feedback model can be used to describe sealed-bid-type auctions, while the two partial feedback settings (one- and two-bit) enforce the desirable property (for the agents) of revealing a minimal amount of information to the learner. 
    
    \xhdr{Regret with respect to the best fixed price.} The goal is to maximize the total gain from trade on a fixed and known time horizon $T$ while enforcing the global budget balance condition. Following the literature on repeated bilateral trade \citep{CesaBianchiCCF21}, we measure the performance of a learning algorithm in terms of its regret with respect to the best fixed price(s) in hindsight. For any learning algorithm $\cA$ and sequence of valuations $\cS=\{(s_t,b_t)\}_{t=1}^T$ we define:
    \begin{equation} \label{def:benchmark_1}
        R_T(\cA,\cS) \defeq \max_{\substack{(p,q) \in [0,1]^2\\ p\le q}} \sum_{t=1}^T \gft_t(p,q) - \E{\sum_{t=1}^T \gft_t(p_t,q_t)},    
    \end{equation}
    where the sequence $\cS$ induces the $\gft_t$ functions and the expectation is with respect to (possibly) randomized prices $p_t$ and $q_t$ generated by the learning algorithm $\cA$. One simple property that follows immediately by definition is that, for any sequence of valuations, there exists a fixed pair of identical prices that maximizes the gain from trade. This means that the notion of ``best price in hindsight'' is well defined, and confirms the intuition that posting two different prices only helps during learning, but does not impact the maximization of gain from trade in hindsight. Finally, we define the regret of an algorithm $\cA$ (without the dependence on a specific sequence of valuations) as its worst-case performance: $R_T(\cA) = \sup_{\cS} R_T(\cA,\cS)$, where the $\sup$ is over the set of all the possible sequences of $T$ pairs of valuations. 

    \xhdr{A stronger benchmark: the best feasible distribution over prices.} In this paper we also introduce a new (stronger) benchmark for the study of repeated bilateral trade: the best fixed budget-feasible distribution over prices. This benchmark captures the flexibility of the global budget balance condition, and it arises naturally from the literature on \emph{bandits with knapsacks}. Before proceeding with the definition, let $\Delta([0,1]^2)$ be the family of all the probability measures over the measurable space $([0,1]^2, \mathcal B([0,1]^2))$, where $\mathcal B$ denotes the Borel $\sigma$-algebra.

    \begin{definition}[Best feasible distribution]
    \label{def:benchmark_2}
        For any sequence $\cS$ of seller's and buyer's valuations, we define the best fixed budget-feasible distribution over prices as the solution of:
    \begin{align}\label{eq:benchmark_2}
        \sup\limits_{\gamma\in\Delta([0,1]^2)}	& \subE{(p,q)\sim\gamma}{\sum_{t=1}^T\gft_t(p,q)} \\
\textnormal{s.t.}\quad & \subE{(p,q)\sim\gamma}{\sum_{t=1}^T\prof_t(p,q)}\ge 0,\nonumber
    \end{align}
    where $\mathbb{E}_{(p,q)\sim\gamma}$ denotes that the expectation is with respect to prices $(p,q)$ sampled according to $\gamma$. 
    \end{definition}
    This definition is well posed and there exist optimal distributions whose support contains either one or two pairs of prices. For a formal proof of this fact we refer to \Cref{pr:benchmark_2} in \Cref{app:model}.

\section{Price Discretizations and Two-Phase Algorithm}
\label{sec: preliminaries}

    In this section we present our two-phase meta algorithm, preceeded by two key results on how to discretize the price space in a way that ensures certain essential properties about profit and gain from trade.
    First, in \Cref{subsec:additive_discretization} we prove that the gain from trade of the best fixed price in hindsight is close to that of the best pair of (non-budget-balanced) prices on a suitable ``additive'' grid. Second, in \Cref{subsec:multiplicative_discretization} we construct an hybrid ``multiplicative-additive'' grid in which each interval of a one-dimensional additive grid is further divided into sub-intervals with geometrically decreasing length. This grid has the surprising property that the profit of the best fixed pair of prices on it is close to the gain from trade generated by the best fixed price in the $[0,1]$ interval, up to a poly-logarithmic multiplicative factor. Finally, we introduce our two-phase learning via the meta-algorithm $\egftmax$.

    \subsection{Additive Grid for Gain from Trade}\label{subsec:additive_discretization}

        For any integer $K$, we denote by $G_K \defeq \left\{0,\nicefrac{1}{K}, \nicefrac{2}{K}, \dots, 1-\nicefrac{1}{K},1\right\}$ the \emph{$K$-uniform grid} over $[0,1]$. Similarly, we denote with  $H_K\defeq \mleft\{\mleft( \nicefrac{i+1}{K},\nicefrac{i}{K}\mright): i\in\{0,1,\ldots,K-1\}\mright\}$ the set of pairs formed by contiguous points in the $K$-uniform grid such that the first element of the pair is greater than the second. This latter grid can be proved to enjoy the desirable property of well-approximating the gain from trade of the best fixed price, while violating the global budget balance condition by a small amount. The argument behind the approximation guarantee is simple: if $p^*$ is the best fixed price in hindsight, then the pair of prices $(\nicefrac {(i+1)}K, \nicefrac iK)$ such that $p^*$ belongs to the interval $[ \nicefrac iK, \nicefrac {(i+1)}K]$ are nearly as good as $p^*$. We have the following result.
        \begin{proposition}
        \label{prop:additive_discretization}
            For any $K$ and sequence of valuations, we have:
            \[
                \max_{p \in [0,1]} \sum_{t=1}^T \gft_t(p) \le \max\limits_{(p,q)\in H_K} \sum_{t=1}^T \gft_t(p,q) + \frac TK.
            \]
            For any $(p,q)\in H_K$, total profit  $\sum_{t=1}^T \prof_t(p,q)$ is at least $\nicefrac {-T}K$.
        \end{proposition}
        \begin{proof}
            The optimal price in hindsight $p^*$ is contained in some interval $[\nicefrac {i^*}K, \nicefrac{(i^*+1)}K]$. For any time $t$ we have the following cases:
            \begin{itemize}
                \item[$(i)$] If $\gft_t(p^*) = 0$, then the gain from trade of $(\nicefrac{(i^*+1)}K,\nicefrac {i^*}K)$ is at least $-\nicefrac 1K$ (when the valuation of the seller is $\nicefrac{(i^*+1)}K$, and that of the buyer is $\nicefrac {i^*}K$). 
                \item[$(ii)$] If $\gft_t(p^*) > 0$, then posting the pair of prices $(\nicefrac{(i^*+1)}K,\nicefrac {i^*}K)$ makes the trade happen, and guarantees the same gain from trade. 
            \end{itemize}
            Then, by summing up the gain from trade obtained by posting $(\nicefrac{(i^*+1)}K,\nicefrac {i^*}K)$, we immediately obtain the first part of the statement by applying at each $t$ either case $(i)$ or $(ii)$.
            The second part of the statement follows from the observation that the per-round deficit for posting prices $(\nicefrac{(i^*+1)}K,\nicefrac {i^*}K)$ is at most $\nicefrac 1K$. This concludes the proof.
        \end{proof}

        A simple calculation shows that $\gft_t( \nicefrac {(i+1)}K, \nicefrac iK)$ is bounded by the sum of $\gft_t(\nicefrac iK)$ and $\gft_t(\nicefrac {(i+1)}K)$. Therefore, we obtain the following known result as a Corollary to \Cref{prop:additive_discretization}.
        
        \begin{corollary}[Claim 1 of \citet{AzarFF22}]
        \label{prop:2-discretization}   
            For any $K$ and sequence of valuations, we have:
            \[
                \max_{p \in [0,1]} \sum_{t=1}^T \gft_t(p) \le  2 \cdot \max_{p\in G_K} \sum_{t=1}^T \gft_t(p) + \frac TK.
            \] 
        \end{corollary}

    \subsection{Multiplicative Grid for Profit} \label{subsec:multiplicative_discretization}
        \begin{figure}
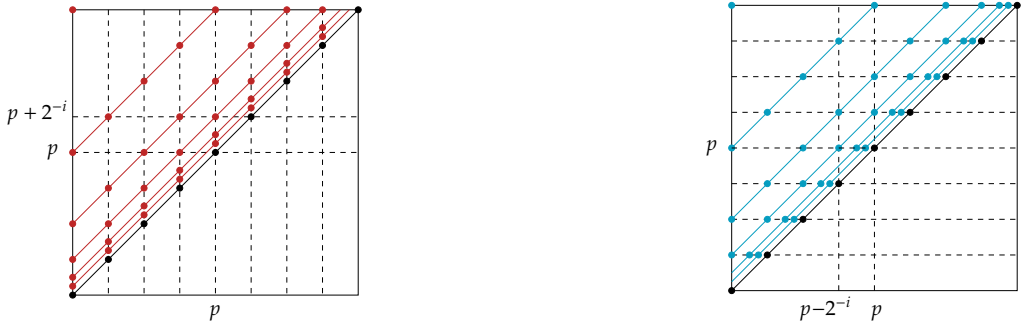
 
        \begin{subfigure}[b]{.45\columnwidth}
            \centering
            \scalebox{0.45}{\input{gridFKp}}
            \end{subfigure}
          \hspace{40pt}
            \begin{subfigure}[b]{.45\columnwidth}
            \centering
            \scalebox{0.45}{\input{gridFKm}}
      \end{subfigure}
      \caption{$F_K^+$ (left) and $F_K^-$ (right) for $K=8$, $T=32$, $i=3$.}
      \label{fig:mult_grid}
  \end{figure}
        For any $K$, we construct the two-dimensional grid $F_K$ starting from the points on the one-dimensional grid $G_K$. For each $p\in G_K$, we add to $F_K$ points of the form $(p-2^{-i},p)$ and  $(p,p+2^{-i})$, for $i = 0, 1, \dots, \lceil \log T \rceil$ so that they define intervals of geometrically decreasing length to the left and upward of $(p,p)$. Formally, we define $F_K$ as the union of $F_K^-$ and $F_K^+$ (see also \Cref{fig:mult_grid}):
        \begin{align*}
            F_K^- &\defeq \left\{(p-2^{-i},p): p \in G_K \text{ and } i \in\{0,1, \dots, \lceil \log T \rceil\}\right\} \cap[0,1]^2,  \\
            F_K^+ &\defeq \left\{(p,p + 2^{-i}): p \in G_K \text{ and } i \in \{0,1, \dots, \lceil \log T \rceil\}\right\} \cap[0,1]^2.
        \end{align*}
        The additive-multiplicative nature of $F_K$ endows it with two crucial properties: (i) its cardinality is $O(K\log T)$ an thus only depends linearly in $K$ and (ii) the profit of the best prices in $F_K$ is at least a $O(\log T)$ fraction of the $\gft$ at the best fixed price in $[0,1$], up to an additive factor of $O(\nicefrac TK)$. 
         
        \begin{proposition}
        \label{prop:multiplicative_discretization}
            For any $K$ and sequence of valuations, we have:
            \[
                \max_{p \in [0,1]} \sum_{t=1}^T \gft_t(p) \le 12\log T \cdot \max_{(p,q) \in F_K} \sum_{t=1}^T \prof_t(p,q) + \frac {5T}K.
            \]
        \end{proposition}
        \begin{proof}
            Fix the sequence $\cS$ of valuations and let $p^\ast$ be the price maximizing the gain from trade in $G_K$. We have the following chain of inequalities: 
            \begin{align}
            \nonumber
                \max_{p \in [0,1]} \sum_{t=1}^T \gft_t(p) &\le 2\sum_{t=1}^T (b_t-s_t) \ind{s_t \le p^\ast \le b_t}  + \frac{T}{K} \tag*{(by \Cref{prop:2-discretization})}\\
            \label{eq:s_tandb_t}
                &= 2\sum_{t=1}^T (b_t-p^\ast) \ind{s_t \le p^\ast \le b_t} + 2\sum_{t=1}^T (p^\ast - s_t) \ind{s_t \le p^\ast \le b_t}  + \frac{T}{K}.
            \end{align}
            We bound separately the first and second term of the right-hand side of the inequality. Starting with $(b_t-p^\ast) \ind{s_t \le p^\ast \le b_t}$, we can rewrite the expression through a case analysis depending on the interval of the discretization in which $b_t$ is located. 
            For each time step $t$, we have
            \begin{align}
            \nonumber
                (b_t-p^\ast) \ind{s_t \le p^\ast \le b_t} &\le \sum_{i=0}^{\lceil \log T \rceil} (b_t-p^\ast) \ind{s_t \le p^\ast}\ind{p^\ast + 2^{-i} \le b_t <  p^\ast + 2^{-i+1}} + \frac{1}{T},
            \end{align}
            where we used the fact that $b_t-p^\ast\le \nicefrac{1}{T}$ if $b_t$ belongs to $[p^\ast_t,p^\ast_t+\nicefrac{1}{T}]$. This yields 
            \begin{equation}
            \label{eq:b_tpow}
            (b_t-p^\ast) \ind{s_t \le p^\ast \le b_t} \le 2 \cdot \sum_{i=0}^{\lceil \log T \rceil} 2^{-i} \ind{s_t \le p^\ast}\ind{p^\ast + 2^{-i} \le b_t <  p^\ast + 2^{-i+1}} + \frac{1}{T}.  
            \end{equation}

            Let $n_i$ be the number of time steps satisfying the condition $\{s_t \le p^\ast,  p^\ast + 2^{-i} \le b_t <  p^\ast + 2^{-i+1}\}$. Summing up \Cref{eq:b_tpow} for $t=1, 2, \dots T$ we get
            \begin{align}
            \nonumber
                \sum_{t=1}^T (b_t-p^\ast) \ind{s_t \le p^\ast \le b_t} 
            &\le  2 \cdot \sum_{i=0}^{\lceil \log T \rceil} \frac{n_i}{2^i} + \frac{T}{T} \le 3 \log T\cdot \max_{i\in\mleft\{0,\dots, \log T\mright\}} \frac{n_i}{2^i} + 1 \\
            \label{eq:b_tfinal}
                &\le 3  \log T \cdot \max\limits_{(p, q) \in F_K} \sum_{t=1}^T \prof_t(p, q) + 1.
            \end{align}
            To obtain \Cref{eq:b_tfinal} we use that, for any $i\in\{0,\ldots,\lceil \log T \rceil\}$, if $n_i>0$ then it must be the case that $p^\ast+2^{-i}\in [0,1]$. Therefore, for any $i$, it it possible to obtain a profit of $2^{-i}$ by posting the pair $(p^\ast,p^\ast+2^{-i})$, which is guaranteed to belong to $F_K$ since $p^\ast\in G_K$ by construction, and $p^\ast+2^{-i}\in [0,1]$. A similar argument can be carried over for the other term of \Cref{eq:s_tandb_t}, yielding:
             \begin{equation}
             \label{eq:s_tfinal}
                \sum_{t=1}^T (p^\ast-s_t) \ind{s_t \le p^\ast \le b_t} \le 3 \log T \cdot \max\limits_{(p, q) \in F_K} \sum_{t=1}^T \prof_t(p, q) + 1.
            \end{equation}
            Finally, we plug \Cref{eq:b_tfinal,eq:s_tfinal} into \Cref{eq:s_tandb_t}, and use $K\le T$ to conclude the proof. 
        \end{proof}
        \setcounter{algocf}{0}
        \begin{algorithm}[t]        \caption{\egftmax}
         \DontPrintSemicolon
         \SetKwInput{KwData}{Input}
         \KwData{\bull budget threshold $\beta$\newline\bull integer $K$ and price-grids $F_{K}$ and $H_{K}$\newline \bull regret minimizers $\cAP$ and $\cAG$}
        Run \erevmax($\beta,F_K,\cAP$) \tcc*{\color{commentcolor}Phase I}\label{line:phase1}
        \If{\erevmax terminated at time step $\tau<T$}{
        \label{line:phase2}Initialize $\cAG$ on $H_{K}$ \tcc*{\color{commentcolor}Phase II}
                    \For{ $t = \tau + 1, 2, \dots,T$}{
                         \label{line:ag_rec}Receive from $\cAG$ the prices $(p_t, q_t)$\;
                         \label{line:ag_obs}Post prices $(p_t, q_t)$ and observe feedback $z_t$\;
                         \label{line:ag_update}Feed feedback $z_t$ to $\cAG$\;
                    }
                }
        \Hline{}
        \Fn{\erevmax($\beta,F_K,\cAP$)}{
                 \KwData{\bull budget threshold $\beta$\newline \bull grid $F_K$ of pairs of prices\newline \bull regret minimizer $\cAP$}
                 Initialize $\cAP$ on $|F_K|$ actions, one for each $(\hat p, \hat q) \in F_K$, and set $B_0 \gets 0$\;
                \For{ $t = 1, 2, \dots, T$}{
                     \label{line:ar_rec}Receive from $\cAP$ the prices $(p_t, q_t)$\;
                     \label{line:ar_obs}Post prices $(p_t, q_t)$ and observe feedback $z_t$\;
                     \label{line:ar_update}Feed feedback $z_t$ to $\cAP$\;
                     Update $B_t \gets B_{t-1} + \prof_t(p_t,q_t)$\;
                    \textbf{if} {$B_t \ge \beta$} \textbf{then}
                         Terminate the algorithm\;
                    }
                }
        \end{algorithm}

\subsection{Our Two-Phase Meta-Algorithm: \egftmax}\label{sec:metaalg}

    We describe our two-phase learning approach by presenting the meta-algorithm \egftmax. For details we refer to the pseudocode. The algorithm takes in input a budget threshold $\beta$ and an integer $K$ (which induces the two grids $F_K$ and $H_K$), and employs two regret minimizers---$\cAP$ for the profit and $\cAG$ for the gain from trade---as internal routines. 
    In the first phase (\Cref{line:phase1}), the algorithm uses function \erevmax to maximize profit until the collected budget reaches a given threshold $\beta$. This is achieved by running a regret minimizer $\cAP$ over the set $F_K$ of pairs of prices (see \Cref{subsec:multiplicative_discretization}) using profit as objective. 
    Then, in the second phase (from \Cref{line:phase2} onward), the algorithm exploits a regret minimizer $\cAG$ to maximize the gain from trade over the grid $H_K$, whose prices which are ``almost budget-balanced'' and consume only a small fraction of the previously acquired budget
    (see \Cref{prop:additive_discretization}). 
    In \Cref{sec:full} and \Cref{sec:partial} we provide regret upper bounds for this meta-algorithm in the full and one-bit feedback model, respectively. The budget threshold $\beta$, the regret minimizers, and the grid parameter $K$ are tuned according to the specific case considered.


\section{Full Feedback}\label{sec:full}

    We start by studying the \emph{full feedback} input model where the agents reveal their valuations $(s_t,b_t)$ at the end of each time step $t$. Here, the learner has counterfactual information regarding all the prices they could have posted, {\em independently} of the pair of prices actually posted at time $t$. 
    In \Cref{subsec:full_upper}, we first present a two-phase learning algorithm (\egftmax) which guarantees $\tilde O(\sqrt{T})$ regret with respect to the best fixed price in hindsight. In \Cref{subsec:full_lower} we complement this result by proving that this is tight, up to poly-logarithmic terms.
    
    \subsection{\texorpdfstring{$\tilde O(\sqrt{T})$}{sqrt T} Upper Bound with Full Feedback} \label{subsec:full_upper}

    We start the analysis by looking at the first phase of \egftmax, \erevmax (reported as a function in the pseudocode of \egftmax). We employ the \hedge algorithm (see, \eg Section 5.3 of \citet{slivkins2019introduction}) as the regret minimizer $\cAP$, which is used on the action space of the prices in $F_K$. As a first step, we note that the gain from trade of any fixed price in the first phase (which terminates at the stopping time $\tau$) is not too large.

        \begin{lemma} \label{lm:erevmax}
            Consider  \erevmax with budget threshold $\beta$, grid $F_K$, and learning algorithm \hedge as $\cAP$. Then, with probability at least $1-\nicefrac 1T$, we have
            \[
                \max_{p \in [0,1]}\sum_{t=1}^{\tau} \gft_t(p) \le 8(\beta +1)\log T   + \frac {5T}K+ 32 \log T \sqrt{T \log (T|F_K|)}.
            \]
        \end{lemma}
        \begin{proof}
    We start by observing that, by \Cref{prop:multiplicative_discretization}, there exists a pair of prices $(p^\ast,q^\ast)\in F_K$ such that
    \[
        \max_{p \in [0,1]} \sum_{t=1}^\tau \gft_t(p) \le 8\log T \cdot  \sum_{t=1}^\tau \prof_t(p^*,q^*) + \frac {5T}K.
    \]

    \hedge maintains a distribution  $\gamma_t \in \Delta(F_K)$ at each $t \in [T]$, and such distributions guarantees that the expected regret is $O(\sqrt{T\log(|F_K|)})$ \citep{slivkins2019introduction}. In particular, given $s\in [T]$, we have 
    \[ 
        \sum_{t=1}^s\prof_t(p^*,q^*) \le \sum_{t=1}^s\subE{(p,q)\sim\gamma_t}{\prof_t(p,q)} + 2 \sqrt{T \log(|F_K|)}.  
    \]
    By applying the Azuma-Hoeffding inequality for each round $s \in [T]$, and union bounding over the possible stopping times, we get that with probability at least $1-\nicefrac 1T$, we can write the following also for the stopping time $\tau$:
    \begin{align*}
    \sum_{t=1}^\tau \prof_t(p^*,q^*) &\le \sum_{t=1}^\tau \prof_t(p_t,q_t) + 2 \sqrt{T \log(|F_K|)} + 4 \sqrt{T \log (T)}.
    \end{align*}
    This yields the following chain of inequalities 
    \begin{align*}
    \max_{p \in [0,1]} \sum_{t=1}^\tau \gft_t(p) &\le 8\log T \cdot  \sum_{t=1}^\tau \prof_t(p^*,q^*) + \frac {5T}K \\
    &\le 8\log T \cdot \sum_{t=1}^\tau \prof_t(p_t,q_t) + \frac {5T}K+16\log T \sqrt{T \log(|F_K|)} + 32 \log T \sqrt{T \log (T)}  \\
    &\le   8\log T  \cdot B_\tau + \frac {5T}K+ 32 \log T \sqrt{T \log (T|F_K|)}  \\
    &\le 8\log T  \cdot (\beta +1) + \frac {5T}K++ \frac {5T}K+ 32 \log T \sqrt{T \log (T|F_K|)}
    \end{align*}
    This concludes the proof.
\end{proof}

    \Cref{lm:erevmax} helps us bounding the regret of \egftmax up to the (random) time step $\tau$, when the algorithm switches from profit to gain from trade maximization. Setting $\beta=\sqrt{T}$ and $K=\sqrt{T}$, and using \hedge as regret minimizer also in the second phase, yields the following result. 

        \begin{theorem}\label{thm:full_upper}
            Consider the repeated bilateral trade problem in the full feedback model. There exists a learning algorithm $\cA$ that respects global budget balance and whose regret with respect to the best fixed price in hindsight verifies
            \[
                R_T(\cA) \le 92 \,\log^{\nicefrac 3 2}(T)\,\sqrt{T }.
            \]
        \end{theorem}
        \begin{proof}
            We prove that algorithm \egftmax with the proper choice of budget $\beta$, grids $F_K$ and $H_K$, and algorithms $\cAP$ and $\cAG$ achieves the desired regret bound, while enforcing global budget balance.
            
            First, we show that the algorithm enforces global budget balance for any value of the stopping time $\tau\in [T]$. 
            By construction, the profit at time $\tau$ (\ie right after the end of first phase in which we employ the subroutine \erevmax) is at least $\beta$.
            Moreover, in each round $t\in \{\tau+1,\ldots, T\}$ of the second phase, the profit is at least $-\nicefrac 1 K$. Hence, the cumulative profit at time $T$ is at least
            \[
            \beta-(T-\tau)\frac 1 K \ge \sqrt{T}- T \frac{1}{\sqrt{T}}=0.
            \]

            Then, we prove the upper bound on the cumulative regret.
            We start by considering the regret accumulated in the interval $\{\tau+1,\ldots,T\}$. In particular, for any $\tau\in[T]$, we have
            \begin{align}\nonumber
            \max_{p \in [0,1]}\sum_{t=\tau+1}^T \gft_t(p) \le& \max\limits_{(p,q)\in H_K} \sum_{t=\tau+1}^T \gft_t(p,q) + \frac TK\\
            \le& \subE{(p_t,q_t)\sim \gamma_t}{\sum_{t=\tau+1}^T \gft_t(p_t,q_t)}+\frac TK+4 \sqrt{T \log(|H_K|)},\label{eq:boundGFTsecondTerm}
            \end{align}
            where the first inequality follows from Proposition~\ref{prop:additive_discretization}, and the second inequality follows from the regret bound of \textsc{Hedge} when the range of the rewards is $[-\nicefrac 1 K,1]$ and $\gamma_t$ is the probability distribution over the action set maintained by \hedge (instantiated to maximize gain from trade in the second phase). 
            
            Then, assume that the bound in \Cref{lm:erevmax} holds, which happens with probability at least $1-\nicefrac 1 T$. By employing \Cref{eq:boundGFTsecondTerm} and \Cref{lm:erevmax} we can show that, with probability at least $1-\nicefrac 1 T$,
            \begin{align*}
                \max_{p \in [0,1]} &\sum_{t=1}^T \gft_t(p)  \le \max_{p \in [0,1]} \sum_{t=1}^\tau \gft_t(p) +\max_{p \in [0,1]} \sum_{t=\tau+1}^T \gft_t(p) \\
                &\hspace{-0.5cm}\le 8\log T  (\beta +1) + \frac {5T}K+32 \log T\sqrt{T \log (T|F_K|)} +\max_{p \in [0,1]} \sum_{t=\tau+1}^T \gft_t(p) \\
                &\hspace{-0.5cm} \le  \E{\sum_{t=\tau+1}^T \gft_t(p_t,q_t)}+ 8\log T  (\beta +1) + \frac {6T}K+ 32 \log T\sqrt{T \log (T|F_K|)} +4 \sqrt{T \log(|H_K|)},
            \end{align*}
        where the second inequality follows from Lemma~\ref{lm:erevmax}, and the third one from \Cref{eq:boundGFTsecondTerm}.

        Then, by substituting $\beta = K=\sqrt T$ we can conclude that
        \begin{align*}
            \max_{p \in [0,1]} \sum_{t=1}^T \gft_t(p)& \le  \E{\sum_{t=\tau+1}^T \gft_t(p_t,q_t)}+ 8\log T  (\sqrt{T} +1) + 6\sqrt{T}\\
            &\hspace{.7cm}+32\log T \sqrt{T \log\mleft(2T^{\nicefrac 32} \mleft(\log T+1\mright)\mright)} +4 \sqrt{T \log(\sqrt{T})}  \\
            &\le \E{\sum_{t=\tau+1}^T \gft_t(p_t,q_t)}+ 90 \log^{\nicefrac 3 2}(T)\sqrt{T } ,
        \end{align*}

        By rearranging we have that, with probability at least $1-\nicefrac 1 T$, it holds
        \begin{align*}
        \max_{p \in [0,1]} \sum_{t=1}^T \gft_t(p) -\E{\sum_{t=1}^T \gft_t(p_t,q_t)}&\le \max_{p \in [0,1]} \sum_{t=1}^T \gft_t(p) -\E{\sum_{t=\tau+1}^T \gft_t(p_t,q_t)} \\
        & \le   90 \log^{\nicefrac 32}(T)\sqrt{ T },
        \end{align*}
        where the first inequality follows from the fact that the gain from trade is always non-negative.

        Finally, we can conclude that the expected regret is at most
        \[
        R_T(\egftmax)\le\left(1-\frac 1 T\right) \left(90 \log^{\frac 3 2}(T)\sqrt{T }\right)+ \frac{1}{T} \cdot 2 T \le 92 \log^{\frac 3 2}(T)\sqrt{T }. 
        \]
        This concludes the proof. 
    \end{proof}

\subsection{$\Omega(\sqrt{T})$ Lower Bound with Full Feedback} \label{subsec:full_lower}

    \begin{figure}
    \centering
    \definecolor{blueGrotto}{HTML}{059DC0}
\tikzset{every picture/.style={line width=0.75pt}} 

\begin{tikzpicture}[x=0.5pt,y=0.5pt,yscale=-1,xscale=1]

\draw  [draw opacity=0][fill=orange2  ,fill opacity=0.59 ] (130,310) -- (370,310) -- (370,390) -- (130,390) -- cycle ;
\draw  [draw opacity=0][fill=orange2  ,fill opacity=0.59 ] (370,70) -- (450,70) -- (450,310) -- (370,310) -- cycle ;
\draw  [draw opacity=0][fill=mgreen  ,fill opacity=0.6 ] (370,310) -- (450,310) -- (450,390) -- (370,390) -- cycle ;
\draw   (130,70) -- (450,70) -- (450,390) -- (130,390) -- cycle ;
\draw    (450,70) -- (130,390) ;
\draw  [dash pattern={on 0.84pt off 2.51pt}]  (370,70) -- (370,390) ;
\draw  [dash pattern={on 0.84pt off 2.51pt}]  (450,310) -- (130,310) ;
\draw [color=blueGrotto  ,draw opacity=1 ]   (130,310) ;
\draw [shift={(130,310)}, rotate = 0] [color=blueGrotto  ,draw opacity=1 ][fill=blueGrotto  ,fill opacity=1 ][line width=0.75]      (0, 0) circle [x radius= 4., y radius= 4.]   ;
\draw [color=blueGrotto  ,draw opacity=1 ]   (370,70) ;
\draw [shift={(370,70)}, rotate = 0] [color=blueGrotto  ,draw opacity=1 ][fill=blueGrotto  ,fill opacity=1 ][line width=0.75]      (0, 0) circle [x radius= 4, y radius= 4]   ;
\draw [color=Sepia, draw opacity=1 ]   (370,310) ;
\draw [shift={(370,310)}, rotate = 0] [color=Sepia, draw opacity=1 ][fill=Sepia, fill opacity=1 ][line width=0.75]      (0, 0) circle [x radius= 4, y radius= 4]   ;
\end{tikzpicture}
      \caption{Partition of $[0,1]^2$ as in the proof of \Cref{thm:full_lower}.}\label{fig:sqrtLB}
  \end{figure}
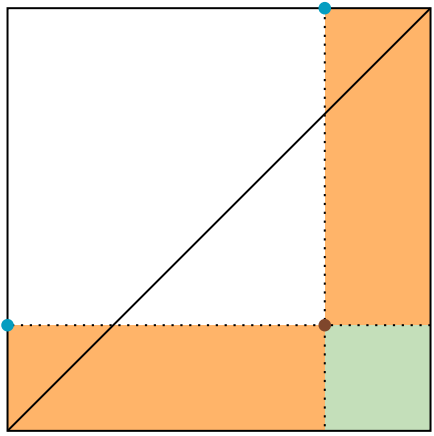
    We present a lower bound that shows how the regret rate in \Cref{thm:full_upper} is optimal up to poly-logarithmic factors. The lower bound is based on the following stochastic sequence: at each time step $t$ the pair $(s_t,b_t)$ is drawn uniformly at random between $3$ pairs of valuations: $(0,\nicefrac14)$, $(\nicefrac34,1)$ and $(\nicefrac34,\nicefrac14)$. These three points naturally partition the $[0,1]^2$ square into four regions (see \Cref{fig:sqrtLB}). Crucially, prices in the $[\nicefrac34,1]\times[0,\nicefrac 13]$ region (\textcolor{mgreen}{\textbf{green}} in \Cref{fig:sqrtLB}) incur in negative expected gain from trade, while prices in the $[0,\nicefrac34)\times(\nicefrac 13,1]$ region (white in \Cref{fig:sqrtLB}) miss all trades. Therefore, the only reasonable option for any learner is to post prices in the two remaining regions (\textcolor{orange2}{\textbf{orange}} in \Cref{fig:sqrtLB}), with an expected gain from trade of $\nicefrac{1}{12}$. This allows for a reduction to an expert problem with $2$ available actions (one for each of the two orange regions). This construction highlights a key difficulty if compared to lower bounds for per-round budget balanced algorithms: we need to disincentivize the learner from choosing non budget balanced prices below the diagonal. We have the following Theorem, which is preceded by a preliminary Lemma. 
    \begin{lemma}
    \label{lem:random}
        Let $S_n$ be a symmetric random walk on the line after $n$ steps, starting from $0$. Then, for $n$ large enough, it holds that $
          \E{|S_n|} \ge \nicefrac 23 \sqrt{n}.$
    \end{lemma}
\begin{proof}
    It is well known that the expected distance of a random walk from the origin grows like $\Theta(\sqrt{n})$. Formally, the following asymptotic result holds (see, \eg \citet{Palacios08}):
        \[
            \lim_{n \to \infty}\frac{\E{|S_n|}}{\sqrt n} = \sqrt{\frac{2}{\pi}}.
        \]
    Observe that $\sqrt{\nicefrac{2}{\pi}} > \nicefrac23$. Then, there exists a finite $n_0$ such that $\E{|S_n|} \ge \nicefrac 23 \sqrt{n}$ for all $n \ge n_0$.
\end{proof}
    
    \begin{theorem}\label{thm:full_lower}
        Consider the repeated bilateral trade problem in the full feedback model. Any learning algorithm that satisfies global budget balance suffers at least $
            \Omega(\sqrt{T})
        $ regret with respect to the best fixed price in hindsight. 
    \end{theorem}

    \begin{proof}
        We prove this result via Yao's principle~\citep{Yao77}.
        We apply the easy direction of the theorem, which reads (using our terminology) as follows: the regret $R_T(\cA)$ of a randomized learner $\cA$ against the worst-case valuations sequence is at least the regret of the optimal deterministic learner $A$ against a stochastic sequence of valuations $\cS$. Formally,
    \[
        R_T(\cA) \ge  \sup_{A} \E{\max_{p \in [0,1]}\sum_{t=1}^T \gft_t(p) - \sum_{t=1}^T \gft_t(p_t,q_t)},
    \]
    where the expectation is with respect to the stochastic valuation sequence $\cS$, while $A$ denotes deterministic learner that posts the $(p_t,q_t)$ prices.
    In particular, we construct a randomized instance $\cS$ such that any deterministic learning algorithm must suffer, in expectation with respect to the randomness of $\cS$, at least $c \sqrt{T}$ regret for some constant $c$.

        The randomized instance is constructed as follows: at each time step $t\in [T]$ the adversary selects uniformly and independently at random one of the following three points $(0, \nicefrac{1}{4})$, $(\nicefrac{3}{4}, 1)$ and $(\nicefrac{3}{4}, \nicefrac{1}{4})$.
        We first compute a lower bound on the expected gain from trade achieved by the best fixed price in hindsight, and then we provide an upper bound on the expected gain from trade which can be attained by any deterministic learning algorithm. Combining these two intermediate results will yield the statement via Yao's principle.

        Let $N_0$ be a random variable denoting the number of times that $(\nicefrac{3}{4},\nicefrac{1}{4})$ is realized. Analogously, let $N_1$ (resp., $N_2$), be the number of times in which $(0,\nicefrac{1}{4})$ (resp., $(\nicefrac{3}{4},1)$) is realized. Clearly, $N_0 + N_1 + N_2 = T$, and $\E{N_i} = \nicefrac T3$ for any $i=0,1,2$. Conditioning on $N_0$, the remaining $T-N_0$ valuations are either $(0,\nicefrac{1}{4})$ or $(\nicefrac{3}{4},1)$, sampled uniformly and independently at random. 
        
        Then, we have that
        \begin{align}
        \nonumber
            \E{ \max_{p \in [0,1]} \sum_{t=1}^T \gft_t(p) \bmid N_0} &\ge \E{\max_{p \in \left\{\nicefrac 14,\nicefrac 34\right\}}\sum_{t=1}^T \gft_t(p)\bmid N_0}\\
            \nonumber
            &=\E{\max_{p \in \left\{\nicefrac 14,\nicefrac 34\right\}} \sum_{t=1}^T \frac{1}{4} \ind{s_t \le p \le b_t}\bmid N_0}\\
            &= \frac 14 \,\E{\max\left\{N_1, T - N_0 - N_1\right\}\mid N_0}\tag*{\text{(Definitions of $N_i$)}}\\
            \nonumber
            &= \frac 18 (T-N_0) + \frac 18\E{\max\left\{2N_1 - T + N_0,\, T - N_0 - 2N_1\right\}\mid N_0}\\
            \nonumber
            &= \frac 18 (T-N_0) + \frac 18\E{\max\left\{N_1 - N_2,\, N_2 - N_1\right\}\mid N_0}\\
            \label{eq:random_walk}
            &= \frac 18 (T-N_0) + \frac 18\E{\lvert S_{T-N_0}\rvert\mid N_0}\\
            &\ge \frac 18 (T-N_0) + \frac{\sqrt{T - N_0}}{12}, \tag*{(\Cref{lem:random})}
        \end{align}
        where \Cref{eq:random_walk} follows by considering a symmetric random walk on a line on $T - N_0$ steps that goes left when $(s_t,b_t) = (0, \nicefrac 14)$, and goes right when $(s_t,b_t) = (\nicefrac 34,1)$. Now, we can take the expectation (with respect to $N_0$) on the first and last term of the previous chain of inequalities to get
        \begin{align}
            \E{\max_{p \in [0,1]}\sum_{t=1}^T \gft_t(p)} &= \E{\E{\max_{p \in [0,1]}\sum_{t=1}^T \gft_t(p) \bmid N_0}} \tag*{(Conditional expectation)}\\
            &\ge  \frac{1}{12} T + \frac {1}{12}\E{\sqrt{T - N_0}}\tag*{\text{($\E{N_0} = \nicefrac{T}{3}$)}}\\
            &\ge \frac 1{12} T + \frac {\sqrt{T}}{24} \mathbb P({N_0 \le \nicefrac{3}{4}T})\tag*{\text{(Conditioning on $N_0 \le \nicefrac{3}{4}T$)}}\\
            \label{eq:best_price_lower}
            &\ge \frac 1{12} T + \frac{5\sqrt{T}}{216},
        \end{align}
        where the last line follows from Markov's inequality.

        Now, we construct an upper bound on the gain from trade achievable by {\em any} deterministic learning algorithm (even without the constraint of enforcing global budget balance). Consider what happens at each fixed time steps $t$: the history of the realized valuations up to that point induce deterministically the pair of prices $(p_t, q_t)$ posted by the learning algorithm. We prove now that no matter $(p_t, q_t)$ chosen, the learner does not achieve more than an expected gain from trade of $\nicefrac{1}{12}$. To see this we study separately four cases: 
        \begin{itemize}
            \item If $(p_t, q_t) \in [0,\nicefrac 34) \times (\nicefrac 14,1]$, then $\gft_t(p_t,q_t) = 0$ with probability $1$ because it misses all the possible trades.
            \item If $(p_t, q_t) \in [0,\nicefrac 34) \times [0,\nicefrac 14]$, then the learner gets $\nicefrac 14$ gain from trade only when $(s_t,b_t) = (0, \nicefrac 14)$ is realized and $0$ otherwise, for an expected gain from trade of $\nicefrac 1{12}$
            \item Similarly, if $(p_t, q_t) \in (\nicefrac 34,1] \times (\nicefrac 14,1]$, then the learner gets $\nicefrac 14$ gain from trade only when $(s_t,b_t) = (\nicefrac 34,1)$ is realized and $0$ otherwise, for an expected gain from trade of $\nicefrac 1{12}$
            \item Finally. if $(p_t, q_t) \in [\nicefrac 34,1] \times [0,\nicefrac 14]$, then the learner always observes a trade, but the expected gain from trade it achieves is $0$ ($\nicefrac 14$ with probability $\nicefrac 23$ and $-\nicefrac{1}{2}$ with the remaining probability).
        \end{itemize}
        Therefore, no matter what the learner does, it gets expected gain from trade at most $\nicefrac{T}{12}$:
        \begin{equation}
        \label{eq:algorithm_upper}
            \E{\sum_{t=1}^T \gft_t(p_t,q_t)} \le \frac{T}{12}.
        \end{equation}

        We can conclude the proof of the Theorem by combining \Cref{eq:best_price_lower} and \Cref{eq:algorithm_upper} to get:
        \[
            \E{\max_{p \in [0,1]}\sum_{t=1}^T \gft_t(p) - \sum_{t=1}^T \gft_t(p_t,q_t)} \ge \left(\frac 1{12} T + \frac{5\sqrt{T}}{216}\right)- \frac{T}{12} = \frac{5\sqrt{T}}{216},
        \]
        where that the randomness is with respect to the sequence generated by the randomized adversary. This concludes the proof.
     \end{proof}

\section{Partial Feedback}\label{sec:partial}

    In this section, we study the more challenging partial feedback models. In \Cref{subsec:partialUpper}, we provide a positive result for the case of one-bit feedback ($z_t = \ind{s_t \le p_t} \cdot\ind{q_t \le b_t}$), where the learner only observes whether the trade happened or not. In particular, we show that  \egftmax, with a suitable initialization, achieves a regret of the order $\tilde{O}(T^{\nicefrac 3 4} )$.  
    Differently from the full-information setting, the design of a no-regret algorithm for the gain from trade (\ie $\cAG$) is particularly challenging as we need to build an estimator for the gain from trade by only playing non-budget balanced prices in $H_K$.
    
    In \Cref{subsec:partialLower} we complement the regret upper bound by proving that every algorithm has regret at least $\Omega(T^{\nicefrac 57})$, even with two-bit feedback ($z_t = (\ind{s_t \le p_t}, \ind{q_t \le b_t})$), \ie where each agent separately reveal their willingness to accept the prices posted.
    One of the main challenges posed by such a lower bound resides in handling non-budget balanced prices, as any algorithm could temporarily sacrifice some profit while collecting large \gft.

    \subsection{\texorpdfstring{$\tilde O(T^{\nicefrac 34})$}{T to the 3/4} Upper Bound with One-Bit Feedback} \label{subsec:partialUpper}

    We show how to employ \egftmax with a suitable choice of parameters $\beta$ and $K$, and regret minimizers $\cAP$ and $\cAG$ to achieve the desired regret bound. \Cref{subsec:reg min rev one bit} presents a regret-minimizing algorithm that can be employed as $\cAP$, while \Cref{subsec:reg min gft one bit} provides a suitable regret minimizer to be employed as $\cAG$. Finally, in \Cref{subsec: one bit final}, we present the final regret upper bound. 

   \subsubsection{Regret Minimizer for Profit under Partial Feedback}\label{subsec:reg min rev one bit}
    As in the full-information setting, we exploit \erevmax to maximize the profit until the accrued budget is at least a given threshold $\beta$.
    In particular, we instantiate the subroutine \erevmax with \textsc{EXP3.P}~\citep{AuerCFS02} as regret minimizer $\cAP$ and grid $F_K$.
    The following lemma shows that the gain from trade of any fixed price $p$ in the first phase is small enough up to the stopping time $\tau$ that terminates the first phase.
        \begin{lemma}\label{lm:erevmax2}
            Consider \erevmax with budget threshold $\beta$, grid $F_K$, and learning algorithm \textsc{EXP3.P} as $\cAP$. Then with probability at least $1-\nicefrac 1T$, we have that $\max_{p \in [0,1]}\!\sum_{t=1}^{\tau} \!\gft_t(p)$ is at most
            \(
                 8(\beta +1)\log T   + \tfrac {5T}K+256\log T  \sqrt{|F_K|T\log(|F_K|T)}.
            \)
        \end{lemma}
        \begin{proof}
    First, note that by \Cref{prop:multiplicative_discretization} there exists a pair of prices $(p^\ast,q^\ast)\in F_K$ such that
    \[
        \max_{p \in [0,1]} \sum_{t=1}^\tau \gft_t(p) \le 8\log T \cdot  \sum_{t=1}^\tau \prof_t(p^*,q^*) + \frac {5T}K.
    \]

    Moreover, \textsc{EXP.P} guarantees that, with probability at least $1-\nicefrac 1T$, it holds
    \begin{align*}
    \sum_{t=1}^\tau \prof_t(p^*,q^*) &\le \sum_{t=1}^\tau \prof_t(p_t,q_t) + 32 \sqrt{|F_K|T\log(|F_K|T)}.
    \end{align*}
    Then,
    \begin{align*}
    \max_{p \in [0,1]} \sum_{t=1}^\tau \gft_t(p) & \le 8\log T \cdot  \sum_{t=1}^\tau \prof_t(p^*,q^*) + \frac {5T}K\\
    &\le 8\log T \cdot \sum_{t=1}^\tau \prof_t(p_t,q_t) + \frac {5T}K + 256\log T  \sqrt{|F_K|T\log(|F_K|T)} \\
    &\le   8 \log T  B_\tau + \frac {5T}K+ 256 \log T  \sqrt{|F_K|T\log(|F_K|T)} \\
    &\le 8\log T  (\beta +1) + \frac {5T}K+ 256 \log T  \sqrt{|F_K|T\log(|F_K|T)}.
    \end{align*}
    This concludes the proof.
\end{proof}

    \subsubsection{Regret Minimizer for Gain from Trade under Partial Feedback}\label{subsec:reg min gft one bit}

    A crucial ingredient we need is an estimation procedure capable of extracting quantitative information from the gain from trade, having only access to one bit of feedback. More precisely, we need an estimation procedure of the gain from trade function $H_K\ni(p,q)\mapsto\gft_t(p,q)$.
    A similar challenge is faced in \citet{AzarFF22}, where the action set consists of a discretization of a single price (\ie their estimation procedure posts $p$ to both seller and buyer). However, in our scenario, such symmetry no longer applies. Here, we must consider the grid $H_K$, which employs distinct prices for the seller and the buyer ($p+\nicefrac 1K$ and $p$, respectively). Thus, our estimation procedure $\gftest$ has an asymmetric structure (see the pseudocode, in particular \Cref{line:est seller,line:est buyer}). 
    
    First, $\gftest$ draws a sample from a Bernoulli distribution with parameter $(pK+1)/(K+1)$ (\Cref{line:bernoulli}). If the result is $1$, it posts price $p$ to the buyer, and the seller receives a price drawn uniformly at random from $[0,p+\nicefrac 1K]$ (\Cref{line:est buyer}). Otherwise, if the result is $0$, $\gftest$ posts price $p$ to the seller, and the buyer's price is drawn uniformly at random from $[p,1]$. We denote the final estimate at $t$ by $\egft_t(p+\nicefrac 1K,p)$ (\Cref{line:est seller}). Overall, our estimator has a small bias, as formalized in the following Lemma.

\begin{algorithm}[t]        \caption{\blockdecomposition}
        \DontPrintSemicolon
        \SetKwInput{KwData}{Input}
        \KwData{\bull Number of rounds $T$ and number of blocks $N$\newline \bull Set of prices $H_K$}
        Initialize \hedge over action space $H_K$ and time horizon $N$\;
        Initialize random mappings $h_j$ for all $j\in\{0,\ldots,N-1\}$\;
        $\cB_j\gets\{j\tfrac TN +1,\ldots, (j+1)\tfrac TN\}$ for all $j\in\{0,\ldots,N-1\}$\;
        
        \For{$j\in \{0,\ldots, N-1\}$}{
            Receive from $\cA$ the distribution over pair of prices $\vx_j$\label{line:block call hedge}\;
            \For{$t\in \cB_j$}{
            \If{$t\notin S_j$}{
                Play $(p,q)\sim \vx_j$ and observe $\ind{s_t\le p\wedge q\le b_t}$\label{line:block play hedge}\;
            }
            \Else{
                Select prices $(p,q)$ such that $h_j(p,q)=t$\label{line:block random}\;
                Compute $\egft_t(p,q)$ through \gftest\;
                \label{line:estimate}
                $\bgft_j(p,q)\gets \egft_t(p,q)$\label{line:block rew}\;
            }
            }
            Update $\cA$ with reward vector $\bgft_j$\label{line:block update hedge}
        }
        \Hline{}
        \Fn{\text{\gftest}}{
                 \SetKwInput{KwData}{Input}
        \KwData{prices $(p+\nicefrac 1K,p)\in H_K$}
        Sample $Z$ from a Bernoulli with parameter $\frac{pK+1}{K+1}$\label{line:bernoulli}\;
        \If{$Z=1$}{
            Post price $(\tilde p,p)$, with $\tilde p\sim U[0,p+\nicefrac 1K]$\label{line:est buyer}\;
            $\widehat\gft_t(p+\nicefrac 1K,p)\gets\ind{s_t\le \tilde p}\ind{p\le b_t}$\;
            }
        \Else{
        Post price $(p,\tilde p)$, with $\tilde p\sim U[p,1]$\label{line:est seller}\;
        $\widehat\gft_t(p+\nicefrac 1K,p)\gets\ind{s_t\le  p}\ind{\tilde p\le b_t}$\;
        }
        \Return $\widehat\gft_t(p+\nicefrac 1K,p)$
        }
        \end{algorithm}


    \begin{lemma}\label{lm:bgftmax}
        For every $(p+\nicefrac 1K,p)\in H_K$, the random variable $\widehat\gft_t(p+\nicefrac 1K,p)$ is an $\nicefrac 1K$-biased estimate of $GFT_t(p+\nicefrac 1K,p)$, \emph{i.e.},
       \[
            \left|\gft_t\left(p+\tfrac 1K,p\right)-\mathbb{E}\left[\widehat\gft_t\left(p+\tfrac 1K,p\right)\right]\right|\le \frac2K.
       \]
    \end{lemma}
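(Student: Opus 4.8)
Write $P := p + 1/K$ for the price the estimator targets for the seller and $Q := p$ for the price it targets for the buyer, and let $E := \{s_t \le P\}\cap\{Q \le b_t\}$ be the trade event, so that $\gft_t(P,Q) = \ind{E}\,(b_t - s_t)$. The plan is to compute $\mathbb{E}\mleft[\widehat\gft_t(P,Q)\mright]$ exactly by conditioning on the Bernoulli variable $Z$ and on the uniform draw $\tilde p$, and then to compare it with $\gft_t(P,Q)$ through a one-line algebraic identity based on the decomposition $b_t - s_t = (P - s_t) + (b_t - Q) - (P - Q)$. Throughout write $(x)^+ := \max\{x,0\}$.

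First I would evaluate the two conditional expectations. In the branch $Z=1$ the seller faces $\tilde p \sim U[0,P]$ and the buyer faces $Q$, so since $0\le s_t\le 1$ we get $\mathbb{E}[\widehat\gft_t(P,Q)\mid Z=1] = \ind{Q \le b_t}\,(P-s_t)^+/P$; in the branch $Z=0$ the seller faces the deterministic price $P$ and the buyer faces $\tilde p \sim U[Q,1]$, so since $0\le b_t\le 1$ we get $\mathbb{E}[\widehat\gft_t(P,Q)\mid Z=0] = \ind{s_t \le P}\,(b_t-Q)^+/(1-Q)$. The Bernoulli parameter $P/(1+1/K)$ is picked precisely so that the importance weights telescope: $\mathbb{P}(Z=1)=P/(1+1/K)$ cancels the $1/P$, and $\mathbb{P}(Z=0)=(1-Q)/(1+1/K)$ (using $P-Q=1/K$) cancels the $1/(1-Q)$, which gives
\[
    \mathbb{E}\mleft[\widehat\gft_t(P,Q)\mright] = \frac{1}{1+1/K}\Bigl(\ind{Q \le b_t}\,(P - s_t)^+ + \ind{s_t \le P}\,(b_t - Q)^+\Bigr).
\]

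Next I would simplify the bracket. On $E$ both indicators are $1$ and both positive parts are attained, so the bracket equals $(P - s_t) + (b_t - Q) = (b_t - s_t) + (P - Q) = (b_t - s_t) + 1/K$; off $E$, either $s_t > P$ (killing $(P-s_t)^+$ and $\ind{s_t\le P}$) or $b_t < Q$ (killing $\ind{Q\le b_t}$ and $(b_t-Q)^+$), so it is $0$. Hence the bracket is $\ind{E}\bigl((b_t-s_t)+1/K\bigr)=\gft_t(P,Q)+\tfrac1K\ind{E}$, so $\mathbb{E}[\widehat\gft_t(P,Q)] = \bigl(\gft_t(P,Q)+\tfrac1K\ind{E}\bigr)/(1+1/K)$, which rearranges to
\[
    \gft_t(P,Q) - \mathbb{E}\mleft[\widehat\gft_t(P,Q)\mright] = \frac{\gft_t(P,Q) - \ind{E}}{K+1}.
\]
Finally I would bound the numerator: on $E$ we have $b_t - s_t \in [-1/K,\,1]$ (from $s_t \le P = Q+1/K \le b_t + 1/K$ together with $0\le s_t,b_t\le 1$), and off $E$ both $\gft_t(P,Q)$ and $\ind{E}$ vanish, so $\gft_t(P,Q) - \ind{E} = \ind{E}(b_t - s_t - 1) \in [-(1+1/K),\,0]$; dividing by $K+1$ yields $\bigl|\gft_t(P,Q) - \mathbb{E}[\widehat\gft_t(P,Q)]\bigr| \le (1+1/K)/(K+1) = 1/K \le 2/K$, the claimed bound (indeed the slightly sharper $1/K$).

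The step I expect to be the main obstacle is the indicator bookkeeping in the first two steps — in particular making sure the seller-acceptance event is $\{s_t \le P\}$ in \emph{both} branches. This is exactly the asymmetry the estimator is built around: on $\{Z=1\}$ the seller faces a uniformly random price in $[0,P]$, so importance weighting converts its acceptance indicator into the continuous seller-side slack $(P-s_t)^+$, whereas on $\{Z=0\}$ the seller faces the deterministic price $P$, which supplies the hard indicator $\ind{s_t \le P}$ that ``gates'' the buyer-side slack $(b_t - Q)^+$ so the two one-sided contributions recombine into a genuine trade event. One also has to check the harmless boundary cases ($Q=0$, $P=1$, or $s_t,b_t$ at an endpoint) do not create a division by zero or break the positive-part identities. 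The leftover $1/K$ bias is intrinsic: it is the $-(P-Q)=-1/K$ term in $b_t - s_t = (P - s_t) + (b_t - Q) - (P-Q)$, rescaled by the $1/(1+1/K)$ normalization the Bernoulli sampling introduces — which is why, unlike the unbiased symmetric estimator of \citet{AzarFF22}, this one is deliberately biased.
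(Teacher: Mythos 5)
Your proof is correct and follows essentially the same approach as the paper: condition on the Bernoulli draw $Z$, compute $\mathbb{E}[\widehat\gft_t]$ in closed form from the two conditional expectations (your positive-part notation $(\cdot)^+$ is a compact rewrite of the paper's indicator-times-slack expressions), observe that the chosen Bernoulli parameter makes the normalizing factors $P$ and $1-Q$ cancel, simplify on the trade event, and bound the residual. You incidentally get the sharper constant $1/K$ rather than the stated $2/K$: the paper invokes the crude generic inequality $|a-(a+\eps)/(1+\eps)|\le 2\eps$ valid for $a\in[-1,1]$, whereas you exploit that on the trade event $b_t-s_t\ge Q-P=-1/K$, so the numerator $\ind{E}(b_t-s_t-1)$ is at most $1+1/K$ in absolute value. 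One remark worth recording: you read the $Z=0$ branch as the seller facing price $P=p+1/K$, so the hard indicator is $\ind{s_t\le P}$; this matches the paper's own displayed computation, but the pseudo-code of \gftest literally writes ``Post price $(p,\tilde p)$'' with $\widehat\gft_t\gets\ind{s_t\le p}\ind{\tilde p\le b_t}$, and taken at face value that version falsifies the lemma (take $s_t$ just above $p$ and $b_t$ near $1$, which gives $\mathbb{E}[\widehat\gft_t]\approx 0$ while $\gft_t(p+1/K,p)\approx 1-p$). The algorithm box has a typo, and you correctly adopted the reading the proof requires.
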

    \begin{proof}
        First, we observe that for $\tilde p\sim U[0,p+\nicefrac 1K]$ (\ie drawn independently from the uniform distribution over the $U[0,p+\nicefrac 1K]$ interval) we have that
        \[\E{\ind{s_t\le \tilde p}}=\ind{s_t\le p +\nicefrac 1K}\left(1-\frac{s_t}{p+\nicefrac 1K}\right),\]
        and for $\tilde p\sim U[p,1]$ we have
        \[\E{\ind{\tilde p \le b_t}}=\ind{p\le b_t}\left(\frac{b_t-p}{1-p}\right).\]
        Using these two equations, we can compute the expected value of the random variable returned by \gftest. Indeed, by the law of total expectation, we have

        \begin{align*}
        \mathbb{E}&\left[\widehat\gft_t\left(p+\frac1K,p\right)\right] \\
        &= \frac{pK+1}{K+1}\ind{p\le b_t}\mathop{\mathbb{P}}\limits_{\tilde p\sim U[0,p+\frac1K]}\left[{s_t\le\tilde p}\right]+\frac{1-p}{1+\nicefrac 1K}\ind{s_t\le p+\nicefrac 1K}\mathop{\mathbb{P}}\limits_{\tilde p\sim U[p,1]}[{\tilde p\le b_t)}]\\
            &=\frac{pK+1}{K+1}\ind{p\le b_t}\,\ind{s_t\le p+\nicefrac 1K}\left(1-\frac{s_t}{p+\nicefrac 1K}\right)+\frac{1-p}{1+\nicefrac 1K}\ind{s_t\le p+\nicefrac 1K}\ind{p\le b_t}\frac{b_t-p}{1-p}\\
            &=\frac{K}{K+1}\left(b_t-s_t+\nicefrac 1K\right)\ind{s_t\le p+\nicefrac 1K}\ind{p\le b_t}
            \end{align*}
            
            We can thus conclude the proof by observing that:
            \begin{align*}
                \left|\gft_t\mleft(p+\frac1K,p\mright)-\mathbb{E}\left[\egft_t\mleft(p+\frac1K,p\mright)\right]\right|&
                =\left|\ind{s_t\le p+\frac1K}\ind{p\le b_t}\left(b_t-s_t-\frac{b_t-s_t+\frac1K}{1+\frac1K}\right)\right|\le \frac2K,
            \end{align*}
            where the last inequality holds since $\left|a-\frac{a+\eps}{1+\eps}\right|\le 2\eps$ for all $a\in[-1,1]$ and $\eps<1$.
    \end{proof}

    Given the estimation procedure \gftest, it is possible to turn any no-regret algorithm for the full-feedback setting into a regret minimizer for the partial feedback setting by the standard block decomposition technique (see, e.g., Chapter 4 of \citet{NRTV2007}). The procedure, which we call \blockdecomposition is described in the pseudocode. We assume to employ \hedge as the full-feedback regret minimizer $\cA$.
    
    \blockdecomposition works by subdividing the time horizon $T$ into $N$ blocks $\cB_1,\ldots,\cB_N$ of equal size and contiguous, that is $\cB_j\defeq\{j \nicefrac TN+1,\ldots, (j+1)\nicefrac TN\}$ for any $j\in\{0,1,\ldots,N-1\}$.
    In each block we select uniformly at random $K$ time steps (\ie one for each pair in $H_K$), and we randomly assign each of such time steps to one pair of prices in $H_K$. Formally, for each block $j$, we have a one-to-one map $h_j:\cB_j\to H_K$ which is a uniform random map from prices in $H_K$ to rounds in block $\cB_j$. We call the image of $h_j$ the {\em exploration rounds}, and we denote the set of such rounds by $S_j$. 
    For any block $j$, the algorithm builds a vector $\bgft_j$ such that the entry $\bgft_j(p,q)$ is an estimation of the reward of the pair $(p,q) \in H_K$ in block $\cB_j$. To do that, for any block $j$ and pair of prices $(p,q)\in H_K$, we let $\bgft_j(p,q)=\egft_t(p,q)$, where $t=h_j(p,q)$ and $\egft_t(p,q)$ is computed through the estimation procedure \gftest with prices $(p,q)$ (\Cref{line:block random,line:estimate}). For any block $j$, exploration rounds in $S_j$ are used to build $\bgft_j$. In all the other rounds in $\cB_j\setminus S_j$ the algorithm plays according to the strategy $\vx_j\in\Delta(H_K)$ (\Cref{line:block play hedge}) computed by $\cA$ at the beginning of block $j$ (\Cref{line:block call hedge}). At the end of each block $j$, the full-information subroutine $\cA$ is updated using $\bgft_j$ (\Cref{line:block update hedge}).

    Let $\gft_j(p,q)=\sum_{t\in \cB_j}\gft_t(p,q)/|\cB_j|$ be the average $\gft$ over block $\cB_j$.
    Since we choose exploration rounds uniformly at random throughout block $\cB_j$ we have that, for any $(p,q)\in H_K$,
    \begin{align*}
        \left|\E{\bgft_j(p,q)}-\gft_j(p,q)\right|&=\left|\sum\limits_{t\in \cB_j}\frac{1}{|\cB_j|}\left(\E{\egft_t(p,q)}-\gft_t(p,q)\right)\right|\\
        &\le\sum\limits_{t\in \cB_j}\frac{1}{|\cB_j|}\left|\E{\egft_t(p,q)}-\gft_t(p,q)\right|\\
        &\le\frac{2}{K},
    \end{align*}
    where the last equality follows from \Cref{lm:bgftmax}.
    This yields the following guarantees on the regret of \blockdecomposition.
    Let $\vx_t$ be the distribution over $H_K$ employed to sample $(p,q)$ at time $t$. At time $t$, $t\in \cB_j$, we have
        \[
        \vx_t\defeq\mleft\{\begin{array}{l}
                \displaystyle
                 \vx_j \hspace{4.5cm}\text{\normalfont if } t\notin S_j\,\, \text{\normalfont (\Cref{line:block play hedge})}\\[2mm]
                \text{\normalfont play } (p,q)  \,\,\text{\normalfont s.t. } h_j(p,q)=t\hspace{.5cm}\text{otherwise}\,\, \text{\normalfont (\Cref{line:estimate})}
            \end{array}\mright.
        \]
      where $\vx_j$ is the distribution computed by \hedge for block $j$.
      The following lemma states precisely the guarantees provided by \blockdecomposition.

    \begin{lemma}\label{lm:gftbandit} \blockdecomposition with $K=T^{\nicefrac 14}$ and $N=T^{\nicefrac 12}$ guarantees:
        \[
        \sup_{(p,q)\in H_K} \sum_{t=1}^T\gft_t(p,q) - \sum_{t=1}^T \mathop{\mathbb{E}}_{(p,q)\sim\vx_t}\!\!\!\![\gft_t(p,q)]\le \frac 52\,T^{\nicefrac 34}\sqrt{\log(T)}.
        \]
    \end{lemma}
    \begin{proof}
Let $R_N^{H}$ be the regret accumulated by \hedge over $N$ rounds when it observes utilities in $[0,1]$ and plays over $K$ actions. Each exploration round can cost at most 1 with respect to playing according to $\vx_j$, and there are $NK$ such rounds.  
        Then, we have that
        \begin{align*}
            \sum\limits_{t=1}^T \sum\limits_{(p,q)\in H_K}\gft_t(p,q)\vx_t(p,q)&\ge \sum\limits_{j\in[N]}\sum\limits_{(p,q)\in H_K}\frac{T}{N}\cdot\gft_j(p,q)\,\vx_j(p,q)-NK\\
            &\ge\sum\limits_{j\in[N]} \sum\limits_{(p,q)\in H_K}\frac{T}{N} \cdot\left(\E{\bgft_j(p,q)}-\frac2K\right)\vx_j(p,q)-NK\\
            &=\E{\sum\limits_{j\in[N]} \sum\limits_{(p,q)\in H_K}\frac{T}{N}\,\bgft_j(p,q)\,\vx_j(p,q)}-\frac{2T}{K}-NK\\
            &\ge \E{\sup\limits_{(p,q)\in H_K}\sum\limits_{j\in[N]}\frac{T}{N}\,\bgft_j(p,q)}-\frac{T}{N}R_N^{H}-\frac{2T}{K}-NK\\
            &\ge\sup\limits_{(p,q)\in H_K}\E{\sum\limits_{j\in[N]}\frac{T}{N}\,\bgft_j(p,q)}-\frac{T}{N}R^{H}_N-\frac{2T}{K}-NK\\
            &\ge\sup\limits_{(p,q)\in H_K}\sum\limits_{j\in[N]}\frac{T}{N}\,\gft_j(p,q)-\frac{T}{N}R^{H}_N-\frac{2T}{K}-NK\\
            &=\sup\limits_{(p,q)\in H_K} \sum\limits_{t=1}^T\gft_t(p,q)-\frac{T}{N}R^{H}_N-\frac{2T}{K}-NK.\\
        \end{align*}
        By rearranging we obtain that
        \[
\sup\limits_{(p,q)\in H_K} \sum\limits_{t=1}^T\gft_t(p,q) - \sum\limits_{t=1}^T \sum\limits_{(p,q)\in H_K}\gft_t(p,q)\vx_t(p,q)\le \frac{T}{N}R^{H}_N+\frac{2T}{K}+NK.
        \]
        It is known that $R_N^{H}\le 4\sqrt{N\log K}$~(see, \eg \citet{slivkins2019introduction}). Then, by setting $K=T^{1/4}$ and $N=T^{1/2}$ we obtain
        \begin{align*}
        \sup\limits_{(p,q)\in H_K} \sum\limits_{t=1}^T\gft_t(p,q) - \sum\limits_{t=1}^T \sum\limits_{(p,q)\in H_K}\gft_t(p,q)\vx_t(p,q)&\le T^{3/4}\left(3+4\sqrt{\log(T^{1/4})}\right)\\&\le 16\cdot T^{3/4}\sqrt{\log(T^{\nicefrac 14})},
        \end{align*}
        where the last inequality holds for all $T\ge 2$.
        This concludes the proof.
    \end{proof}

    \subsubsection{Putting Everything Together}\label{subsec: one bit final}

    \egftmax with the two regret minimizers described in \Cref{subsec:reg min rev one bit,subsec:reg min gft one bit} guarantees a $O(T^{\nicefrac 34})$ bound on the regret. 
   \begin{restatable}{theorem}{partialupper}\label{thm:partial_upper}
            Consider the repeated bilateral trade problem in the one-bit feedback model. There exists a learning algorithm $\cA$ that respects global budget balance and whose regret with respect to the best fixed price in hindsight verifies: 
            \[
                R_T(\cA) \le 1282\cdot T^{\nicefrac34}\log^2 T.
            \]
        \end{restatable}
        \begin{proof}
            The proof follows the same structure of \Cref{thm:full_upper}. In this case, we set $\beta=T^{\nicefrac 34}$ and $K=T^{\nicefrac 14}$, and consider \egftmax with \textsc{EXP3.P}~\citep{AuerCFS02} instantiated over set $F_K$ (see \Cref{subsec:reg min rev one bit}) as the regret minimizer $\cAP$, and \blockdecomposition  instantiated over $H_K$ (see \Cref{subsec:reg min gft one bit}) as the regret minimizer $\cAG$.
    

            By construction of \erevmax, for any stopping time $\tau$ the profit is at least $\beta$, and in rounds $\tau+1,\ldots,T$ the budget spent is at most $-\nicefrac 1K$. Therefore, the global budget balance condition is satisfied because the cumulative profit at $T$ is at least 
            \[
            \beta-(T-\tau)\frac 1 K \ge T^{\nicefrac 34}- T \frac{1}{T^{\nicefrac 14}}=0.
            \]

            Now we prove the regret upper bound. For rounds up to $\tau$ we can exploit~\Cref{lm:erevmax2}. On the other hand, for any $\tau$, on rounds in $\tau+1,\ldots, T$ we have 
            \begin{align}\label{eq:boundGFTsecondTerm2}
            \max_{p \in [0,1]} \sum_{t=\tau+1}^T \gft_t(p) &\le \max\limits_{(p,q)\in H_K} \sum_{t=\tau+1}^T \gft_t(p,q) + \frac TK\\\nonumber
            &\le \subE{(p_t,q_t)\sim \vx_t}{\sum_{t=\tau+1}^T \gft_t(p_t,q_t)}+\frac TK+5T^{\nicefrac 34}\sqrt{\log(T^{\nicefrac 14})}
            \end{align}
            where the first inequality follows from~\Cref{prop:additive_discretization}, and the second follows by \Cref{lm:gftbandit} by replacing $T$ with $T-\tau$.
            Then, assume that the regret bound \Cref{lm:erevmax2} holds, which happens with probability at least $1-\nicefrac 1T$. By summing Equation~\eqref{eq:boundGFTsecondTerm2} and Lemma~\ref{lm:erevmax2} we have that, with probability at least $1- \nicefrac 1T$,
            \begin{align*}
                \max_{p \in [0,1]} &\sum_{t=1}^T \gft_t(p)  \le \max_{p \in [0,1]} \sum_{t=1}^\tau \gft_t(p) +\max_{p \in [0,1]} \sum_{t=\tau+1}^T \gft_t(p) \\
                &\hspace{-0.5cm}\le 8\log T  (\beta +1) + \frac {5T}K+ 256 \log T  \sqrt{|F_K|T\log(|F_K|T)} +\max_{p \in [0,1]} \sum_{t=\tau+1}^T \gft_t(p) \\
                &\hspace{-0.5cm} \le  \E{\sum_{t=\tau+1}^T \gft_t(p_t,q_t)}+ 8\log T  (\beta +1) + \frac {6T}K+ 256 \log T  \sqrt{|F_K|T\log(|F_K|T)}+5T^{\nicefrac34}\sqrt{\log(T^{\nicefrac14})},
            \end{align*}
        where the second inequality follows from Lemma~\ref{lm:erevmax2} and the third one from \Cref{lm:gftbandit}.

        Then, by substituting $\beta = T^{\nicefrac34}$ and $K= T^{\nicefrac14}$ we obtain
        \begin{align*}
            \max_{p \in [0,1]} \sum_{t=1}^T \gft_t(p)& \le  \E{\sum_{t=\tau+1}^T \gft_t(p_t,q_t)}+ 8\log T  (T^{\nicefrac34} +1) +  {6T^{\nicefrac34}}\\
            &\hspace{0.cm}+256\cdot \log T  \sqrt{(2T^{\nicefrac14}(\log  T+1))T\log((2T^{\nicefrac14}(\log  T+1))T)}+5T^{\nicefrac34}\sqrt{\log(T^{\nicefrac14})}\\
            &\le \E{\sum_{t=\tau+1}^T \gft_t(p_t,q_t)}+ 1280 \cdot T^{\nicefrac34}\log^2 T.
        \end{align*}
        Then, by rearranging, with probability at least $1-\nicefrac 1T$ it holds
        \begin{align*}
        \max_{p \in [0,1]} \sum_{t=1}^T \gft_t(p) -\E{\sum_{t=1}^T \gft_t(p_t,q_t)}&\le \max_{p \in [0,1]} \sum_{t=1}^T \gft_t(p) -\E{\sum_{t=\tau+1}^T \gft_t(p_t,q_t)} \\
        & \le   1280 \cdot T ^{\nicefrac34}\log^2 T,
        \end{align*}
        where the first inequality follows from the fact that the gain from trade is always non-negative.

        Finally, the expected regret is at most
        \[
        R_T(\egftmax)\le\left(1-\frac 1 T\right) \left(1280 \cdot T ^{\nicefrac34}\log^2 T\right)+ \frac{1}{T} \cdot 2 T \le 1282 \cdot T ^{\nicefrac34}\log^2 T.
        \]This concludes the proof.
        \end{proof}

    \subsection{\texorpdfstring{$\Omega(T^{\nicefrac 57})$}{T to the 5/7} Lower Bound with Two-Bit Feedback}\label{subsec:partialLower}

    In this section, we provide a lower bound for learning the best price against any oblivious adversary, with global budget balance constraints and two-bit feedback. Our construction builds upon the one by \citet{CesaBianchiCCFL23}, but exhibits two key differences. First, we are not constrained to use smooth value distributions. This allows us to simplify the construction, avoiding the reduction to online learning with feedback graphs. 
    Second, we only require algorithms to be globally budget balanced (instead of per-round weakly budget balanced); looser budget balance constraints enhance the capabilities of the learning algorithm. All in all, we derive a lower bound that is slightly looser $T^{\nicefrac57}\approx T^{0.714}$ compared to the $\Omega(T^{\nicefrac34})$. We further elaborate on this comparison at the end of the Section.

    \begin{theorem}\label{thm:lower2bits}
        Consider the problem of repeated bilateral trade in the two-bit feedback model. Any learning algorithm that satisfies global budget balance suffers regret at least $\Omega(T^{\nicefrac57})$.
    \end{theorem}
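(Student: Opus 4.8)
The plan is to argue via Yao's minimax principle: I will exhibit a distribution over problem instances such that every \emph{deterministic} learner obeying the global budget balance constraint suffers expected regret $\Omega(T^{\nicefrac{5}{7}})$, which then transfers to the worst-case regret of any randomized algorithm. The instance family is a refinement of the one used for the $\Omega(T^{\nicefrac{3}{4}})$ bound of \citet{CesaBianchiCCFL23}, modified in two ways: the adversary is a plain oblivious adversary (no smoothing assumption is used), and the instances are engineered so that the extra power our model grants the learner --- posting pairs $(p_t,q_t)$ with $p_t>q_t$ --- cannot be exploited to collect more $\gft$.

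\textbf{The hard instances.} Fix an integer $K$ (number of ``cells''), a window width $w$, and a per-round gap $\epsilon$, all of which will be polynomial in $T$ and tuned at the end. Partition a subinterval of $[0,1]$ into $K$ equal cells; the family contains one instance $I_j$ per cell $j$, plus a reference instance $I_0$. In $I_j$ the valuations $(s_t,b_t)$ are supported inside cell $j$ and are designed so that: \emph{(a)} the best fixed price lies in cell $j$ and its cumulative $\gft$ exceeds, by $\Omega(\epsilon)$ per round, that of any price outside cell $j$ --- so committing to the wrong cell costs $\Omega(\epsilon T)$; \emph{(b)} for every posted pair $(p_t,q_t)$ the joint law of the two-bit observation $\big(\ind{s_t\le p_t},\ind{q_t\le b_t}\big)$ is \emph{the same under $I_j$ and $I_0$} unless at least one of $p_t,q_t$ lands in a designated width-$w$ ``probe window'' inside a cell, and even then the two laws differ only by a small amount, quantified by a per-round KL contribution $O(\delta^2)$ with $\delta=\delta(w,\epsilon)$; and \emph{(c)} --- the new ingredient --- for every pair $(p,q)$ with $p>q$ there is a pair $(p',q')$ with $p'\le q'$ whose $\gft_t$ is at least as large on every round. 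Property \emph{(c)} is enforced by placing the seller and buyer masses so that a trade realized only because $p>q$ (\ie with $s_t\in(q,p]$ or $b_t\in[q,p)$) carries zero surplus, while any trade with $s_t\le q<p\le b_t$ is already captured, with the same value, by the single price $q$. Checking \emph{(c)} is a short case analysis over the location of $(p,q)$ relative to the (deliberately thin) support of the valuations.

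\textbf{Information accounting and the regret tradeoff.} Since the two-bit observation is distributed identically under $I_j$ and $I_0$ on every round in which no posted price falls in a probe window, only the ``probe rounds'' carry information about the identity of the special cell. A chain-rule bound on the KL divergence between the transcript laws under $I_j$ and $I_0$, summed over probe rounds and using that each such round contributes at most $O(\delta^2)$, shows that unless the learner allocates $\Omega(\delta^{-2})$ probe rounds to cell $j$ it cannot distinguish $I_j$ from $I_0$; a standard Bretagnolle--Huber / Pinsker step turns this into a constant lower bound on the probability of failing to detect the special cell. As the special cell is uniform over the $K$ cells, with constant probability the learner must either spend $\Omega(K\,\delta^{-2})$ probe rounds --- each at a price that is suboptimal for the true instance, hence costing a fixed fraction of the relevant per-round gap --- or else commit to a wrong cell and pay $\Omega(\epsilon T)$. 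This yields an expected regret of order $\min\{\,(\text{probe cost})\cdot K\,\delta^{-2},\ \epsilon T\,\}$, subject to the structural constraints linking $w$, $\epsilon$, $\delta$ and $K$ (cell width $1/K$, realizability of the supports, and crucially the domination property \emph{(c)}, which prevents the adversary from simultaneously maximizing the gap $\epsilon$ and the per-probe signal $\delta$). Optimizing $K$, $w$ and $\epsilon$ under these constraints equalizes the two terms at $\Theta(T^{\nicefrac{5}{7}})$; Yao's principle then gives the theorem. The same constraints, \emph{without} requirement \emph{(c)}, would instead balance at $T^{\nicefrac{3}{4}}$, which is the source of the quantitative gap with \citet{CesaBianchiCCFL23}.

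\textbf{Main obstacle.} The crux is the construction rather than the information-theoretic bookkeeping: one must design the valuation laws so that, simultaneously, the two-bit feedback is essentially uninformative away from the probe windows, a genuine $\Omega(\epsilon)$ per-round $\gft$ advantage is concentrated in the special cell, and every $p>q$ pair is $\gft$-dominated by some $p\le q$ pair. These three requirements are in tension --- enforcing \emph{(c)} on a thin support while keeping \emph{(a)} and \emph{(b)} valid forces a weaker per-probe signal $\delta$ than in the smoothed construction --- and resolving the tension optimally is precisely what produces the exponent $\nicefrac{5}{7}$. A secondary but real point is redoing the KL accounting for \emph{two} correlated Bernoulli observations per round (the two-bit model) and confirming that the per-probe-round information is still $O(\delta^2)$, so that the lower bound holds even under the more informative two-bit feedback.
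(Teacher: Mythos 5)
Your high-level plan coincides with the paper's: Yao's principle over a family of $N$ near-indistinguishable stochastic instances, a designated ``exploration'' region that is the only place the two-bit feedback depends on the instance, KL/Pinsker accounting over probe rounds, and the extra step of dominating every $p>q$ pair by some $p\le q$ pair so that the learner cannot buy information (or $\gft$) by temporarily running a deficit. You also correctly identify that it is the domination requirement that degrades the exponent from $\nicefrac{3}{4}$ to $\nicefrac{5}{7}$. So far, so good.

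However, there is a genuine gap in how you propose to enforce property $(c)$, and it hides exactly the step that produces $\nicefrac{5}{7}$. You claim it suffices to make the trades captured ``only because $p>q$'' carry \emph{zero} surplus, so that $(q,q)$ matches $(p,q)$ pointwise. This cannot coexist with properties $(a)$ and $(b)$: to keep the base instance's expected $\gft$ flat in $p$ (so that only the tiny $\epsilon$-perturbation distinguishes instances and the marginals of $s$ and $b$ are instance-independent), the construction needs an extra family of low-seller, moderate-buyer valuations (the paper's $\cW_3$), which have strictly positive surplus. A $p>q$ pair then picks up some of these positive-surplus trades ``for free,'' so zero-surplus filler does \emph{not} yield domination. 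The paper fixes this by introducing trades of \emph{negative} surplus $-\delta$ (the set $\cW_4$) with probability $\gamma_4$, and domination (Lemma~\ref{lem:lowertriisdom}) requires $\gamma_4=\Theta(\gamma_1 N)$. Normalizing $\gamma_4 N\lesssim 1$ then forces $\gamma_1\lesssim N^{-2}$, hence $\epsilon\le\gamma_1\lesssim N^{-2}$, i.e.\ the constraint $\alpha\ge 2\beta$ rather than $\alpha\ge\beta$; solving $1-\alpha=\beta+2\alpha$ under $\alpha\ge 2\beta$ gives $\alpha=\nicefrac{2}{7}$, $\beta=\nicefrac{1}{7}$, and the exponent $\nicefrac{5}{7}$. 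Your writeup names this tension but defers the entire mechanism to ``a short case analysis,'' and the one concrete design choice you do make (zero surplus) would break that case analysis. To be a proof rather than a plan, you would need to exhibit the actual valuation supports and probabilities, verify the domination lemma against the positive-surplus filler trades, and derive the quantitative constraint $\epsilon\lesssim N^{-2}$ before optimizing.
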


    The rest of the Section is devoted to the proof of \Cref{thm:lower2bits}; for the missing details, we refer to \Cref{app:LBbandit}. 
    Our lower bound construction is based on $N$ stochastic sequences of valuations. Each one of these sequences is sampled in an i.i.d. way from distributions of valuations with two key properties: (i) they are close with respect to some statistical measure of distance (see \Cref{lem:KL}) and (ii) ensure that any pair of prices that reveals information on the underlying instance is highly suboptimal in terms of \gft~(\emph{i.e.}, gathering information is ``costly'', see \Cref{lem:explorationiscostly}). We proceed in 5 steps.
    
    \xhdr{i) Building a set of hard instances.} We start by introducing a set of $N=N(T)$, to be specified later, hard instances of the bilateral trade problem. Our goal is to show that any learning algorithm has regret at least $\Omega(T^{\nicefrac 57})$ in at least one of the $N$ instances.
    We define a distribution $\mu_k\in\Delta([0,1]^2)$ of valuations $(s,b)$ over $[0,1]^2$ for each $k\in\{0,\ldots, N-1\}$, where we have $N-1$ ``perturbed'' distributions corresponding to indices $k\in\{1,\ldots, N-1\}$, and a ``base'' distribution corresponding to $k=0$.
 
    Let $\ell=\nicefrac 1{12}$ and let $\Delta\defeq \nicefrac \ell{(N-1)}$, and $\delta= \nicefrac \Delta2$.
    Then, for any instance $k\in\{0,\ldots,N-1\}$, the distributions $\mu_k$ are supported on the same set $\cW$ of finitely many valuations. We describe the set $\cW$ by partitioning it into six different sets. An illustration of the valuations set can be found in \Cref{fig:instance_LB}.
    First, we define the two sets $\cW_1$ and $\cW_2$ (respectively red and blue in \Cref{fig:instance_LB}) as follows:
    \[
    \cW_1\defeq\left\{w^i_1=\left(\tfrac{1-\ell}2+i \Delta, 1-\ell\right):\, i=0,\ldots, N-1\right\}
    \]
    and 
    \[
    \cW_2\defeq\left\{w^i_2=\left(\tfrac{1-l}2+i \Delta, 1-\ell-\rho\right):\, i=0,\ldots, N-1\right\},
    \]
    where $\rho=\nicefrac 1{32}$. 
    These valuations are ``balanced out'' by the $N$ valuations in $\cW_3$ (green in \Cref{fig:instance_LB}):
    \[
    \cW_3\defeq\left\{w_3^i=\left(0, \tfrac{1-\ell}{2}-\delta+i \Delta\right):\, \ i=0,\ldots, N-1\right\}.
    \] Moreover, we have a set $\cW_4$ of ``deficit-generating'' valuations (brown in \Cref{fig:instance_LB})
    \[
    \cW_4\defeq\left\{w_4^i=\left(\tfrac{1-\ell}2+i \Delta, \tfrac{1-\ell}{2}-\delta+i \Delta\right):\, i=0,\ldots, N-1\right\},
    \]
    and a single valuation belonging to $\cW_5$ (orange in \Cref{fig:instance_LB})
    \[
    \cW_5\defeq\left\{\left(0, \tfrac{1-\ell}{2}\right)\right\}.
    \] 
    We conclude by defining the set $\cW_6$ (purple in \Cref{fig:instance_LB}) of the four ``extremal'' valuations (in practise, they are needed for \Cref{lem:KL} to hold):
    \[
    \cW_6\defeq\left\{(0, 0), (0,1), (1,1), (1, 0)\right\}.
    \]

    We assign different probabilities to the valuations in each set $\cW_j$ depending on the instance.
    In particular, for any instance $k\in\{1,\ldots,N-1\}$ with distribution $\mu_k$, we have that
    \begin{equation}\label{eq:LBmu1}
    \mu_k(w^i_j)=\frac{1}{64N^2}=\gamma_1,\quad \forall j\in\{1,2\}, i\notin \{k,k+1\},
    \end{equation}
    while we perturb by $\eps$ the probability of the following valuations:
    \begin{align}\label{eq:LBmu2}
    \mu_k(w^k_1)=\gamma_1+\eps,\quad \mu_k(w^{k+1}_1)=\gamma_1-\eps,\quad
    \mu_k(w^k_2)=\gamma_1-\eps,\quad
    \mu_k(w^{k+1}_2)=\gamma_1+\eps.
    \end{align}
    Conversely, for the base instance $\mu_0$, we place equal probability $\mu_0(w)=\gamma_1$ on all the valuations $w\in\cW_1\cup\cW_2$, and hence all these valuations have the same probability.
    For each instance $k \in\{0,\ldots, N-1\}$ with distribution $\mu_k$, the probability of valuations $w_3^i$, with $i\in\{0,\ldots, N-1\}$, is set as
    \[
    \mu_k(w_3^i) = \gamma_1\cdot\frac{1-\ell-\rho-2i \Delta}{\frac{1-\ell}{2}-\delta+i \Delta}\in\left(0,2\gamma_1\right).
    \]
    
    Let $\gamma_3^{\textnormal{tot}}\defeq\sum_{w\in \cW_3}\mu_k(w)<2\gamma_1N$ be the total probability assigned to valuations in $\cW_3$.
    Moreover, for any instance $k\in\{1,\ldots,N-1\}$ with distribution $\mu_k$, we assign to every point in $\cW_4$ probability $\gamma_4\defeq4\gamma_1(13N-14)$, \ie
    \begin{equation}\label{eq:LBmu3}
        \mu_k(w)=4\gamma_1(13N-14), \quad\forall w\in \cW_4.
    \end{equation}
    Then, for any instance $k\in\{1,\ldots,N-1\}$ with distribution $\mu_k$, we assign probability $\gamma_5$ to the single valuation in $\cW_5$, \ie $\mu_k(0, \nicefrac {(1-\ell)}2)=\gamma_5\defeq \nicefrac 1{64}$.
    Finally, all the remaining probability is equally divided into the $4$ extremal points in $\cW_6$, \ie
    \[
    \mu_k(w)=\frac{1-\left(2\gamma_1N+\gamma_3^{\textnormal{tot}} + 4\gamma_1N(13N-14) + \gamma_5\right)}4=\gamma_6,\quad\forall w\in \cW_6.
    \]

    In \Cref{app:LBbandit}, \Cref{lem:welldefinedmuk} shows that this probabilities are positive and,  therefore, $\mu_k$ defines a probability distribution for every $k$.

    Now, we define $\cG_\cW$ as the grid generated by such valuations. Formally:
    \[\cG^s_\cW=\{s\,:\,\exists\, (s,\cdot)\in\cW\},\,\cG^b_\cW=\{b\,:\,\exists\, (\cdot,b)\in\cW\},\quad\text{and}\quad 
    \cG_\cW\defeq\mleft\{(s,b)\,:\,s \in \cG^s_\cW \text{ and } b\in\cG^b_\cW\mright\}.
    \]
    Thus, $\cG^s_\cW$ and $\cG^b_\cW$ represent the projections of $\cG_\cW$ onto its first (seller) and second (buyer) component, respectively.
    \begin{figure} \label{fig:plat}
        \begin{subfigure}[t]{.5\linewidth}
            \centering\scalebox{.6}{\includegraphics{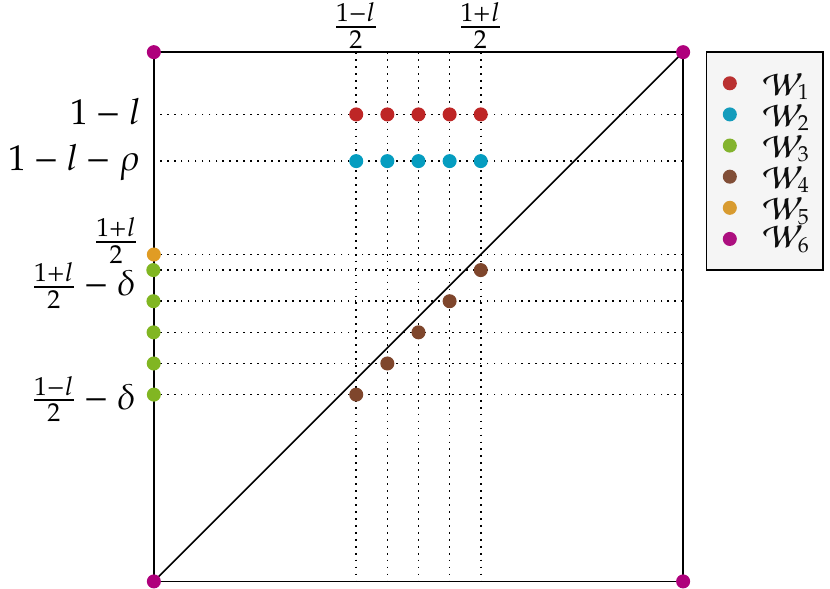}}
            \caption{}
            \label{fig:instance_LB}
            \end{subfigure}
            \begin{subfigure}[t]{.5\linewidth}
            \centering\scalebox{.6}{\includegraphics{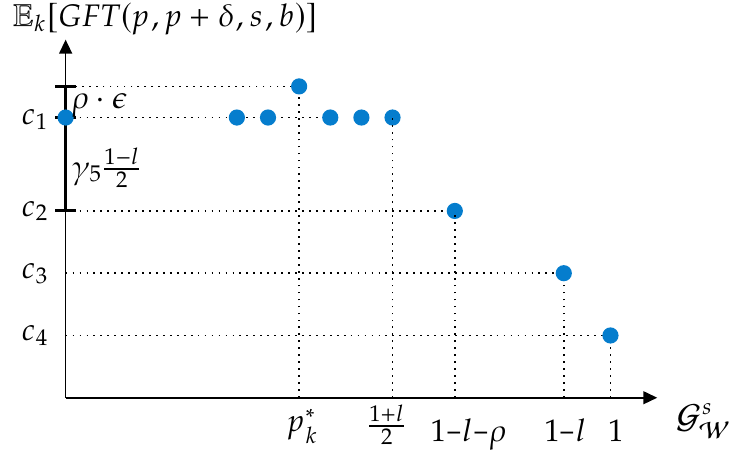}}
            \caption{}
            \label{fig:expectedGFT}
      \end{subfigure}
      \caption{\Cref{fig:instance_LB} represents the valuations support $\cW$ of the instances distributions $\mu_k$, while \Cref{fig:expectedGFT} represents the value of posting the same price to the seller and the buyer in instance $\mu_k$.}
  	\end{figure}

    \xhdr{ii) Analysis of the gain from trade.}
    As a first step, we argue that we can focus on algorithms that play only actions in $\cG_\cW$, without loss of generality. 
    Consider infact any instance $k\in\{1,\ldots, N\}$ and any algorithm $\cA$. Similarly to the proof of~\Cref{pr:benchmark_2} (more specifically \Cref{cl:discrete} therein), one can easily prove that there exists an equivalent algorithm $\cA'$ (in terms of both feedback, \gft, and profit), that only has distribution supported on the grid $\cG_\cW$ generated by the valuations $\cW$. 
    
    Next, for any $p\in\cG^s_\cW$, we characterize the value of posting the pair of prices $(p,p+\delta)$ under distribution $\mu_k$, with $k \in \{0,\ldots,N-1\}$. 
    Note that posting the pair $(p,p+\delta) \in \cG_\cW$ under any instance $\mu_k$, is equivalent to posting a single price $p\in \cG_\cW^s$ to both the seller and the buyer, with the only difference that $(p,p)\notin\cG_\cW$, while $(p,p+\delta)\in\cG_\cW$.
    Then, for any $p\in\cG_\cW^s$, we relate the \gft~obtained by posting a pair $(p,p+\delta)$ under valuations sampled from $\mu_k$, with $k\in\{1,\ldots, N-1\}$, and under the base distribution $\mu_0$.
    For every $k\in\{0,\ldots, N-1\}$, let $\mathbb{E}_k$ and $\mathbb{P}_k$ denote the expectation and the probability measure under instance $\mu_k$, respectively.
    Direct calculations shows that, for all $p\in \cG_\cW^s$ and $k\in\{1,\ldots, N-1\}$, it holds
    \[
    \mathbb{E}_k[\gft(p, p+\delta,s,b)] = \mathbb{E}_0[\gft(p,p+\delta,s,b)] + \rho\eps\ind{p=p_k^*},
    \]
    where $\gft(p, p+\delta,s,b)$ is simply the gain from trade when the prices posted are $(p,p+\delta)$ and valuations $(s,b)$, and $p^*_k=\frac{1-\ell}2+k \Delta$.
    Moreover, for all $p\in\cG_\cW^s$ it holds:
    \[
    \mathbb{E}_0[\gft(p, p+\delta,s,b)] = 
    \begin{cases}
        c_1\defeq\gamma_5\frac{1+\ell}{2}+\mu_0(0,1)+\gamma_1\frac{77}{96}N=&\text{if } p\in[0,\frac{1+\ell}2]\\
        c_2\defeq\mu_0(0,1)+\gamma_1\frac{77}{96}N&\text{if } p\in(\frac{1+\ell}2, 1-\ell-c]\\
        c_3\defeq\mu_0(0,1)+\gamma_1\frac{5}{12}N&\text{if } p\in(1-\ell-c, 1-\ell]\\
        c_4\defeq\mu_0(0,1)&\text{if } p\in(1-\ell, 1]\\
    \end{cases}
    \]

    \Cref{fig:expectedGFT} gives a representation of $\mathbb{E}_k[\gft(p, p+\delta,s,b)]$. 
    From these calculations, we show that in an instance $k\in\{1,\ldots, N-1\}$ the pair that maximizes the expected gain from trade is $(p^*_k,p^*_k+\delta)$.

    \begin{lemma}\label{lem:LBlem1}
        For any instance $k\in\{1,\ldots, N-1\}$, we have that:
        \[
        \max\limits_{(p,q)\in[0,1]^2, \,p\le q}\mathbb{E}_k[\gft(p,q,s,b)] = \mathbb{E}_k[\gft(p^*_k,p^*_k+\delta,s,b)]=c_1+\rho\cdot \eps.
        \]
    \end{lemma}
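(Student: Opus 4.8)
The plan is to leverage the two facts derived in the paragraph preceding the statement: the identity $\mathbb{E}_k[\gft(p,p+\delta,s,b)]=\mathbb{E}_0[\gft(p,p+\delta,s,b)]+\rho\epsilon\,\ind{p=p^*_k}$, valid for $p\in\cG^s_\cW$, together with the explicit four-case expression for $\mathbb{E}_0[\gft(p,p+\delta,s,b)]$, whose largest value is $c_1$ (the chain $c_1>c_2>c_3>c_4$ being immediate from $\gamma_1,\gamma_5>0$ and $\tfrac{77}{96}>\tfrac{5}{12}$). What remains is to show that the unconstrained maximum over $\{(p,q)\in[0,1]^2:p\le q\}$ is attained by a ``near-diagonal'' pair $(p,p+\delta)$ with $p\in\cG^s_\cW$, after which the value can be read off from these formulas.

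I would first reduce to the grid: arguing exactly as in \Cref{cl:discrete} (the discretization claim used in the proof of \Cref{pr:benchmark_2}), it suffices to maximize $\mathbb{E}_k[\gft(p,q,s,b)]$ over $(p,q)\in\cG_\cW$ with $p\le q$. For any such pair every traded valuation $(s,b)\in\cW$ obeys $s\le p\le q\le b$, so $b-s\ge 0$, and hence $\mathbb{E}_k[\gft(p,q,s,b)]=\sum_{(s,b)\in\cW}\mu_k(s,b)\,\ind{s\le p}\,\ind{q\le b}\,(b-s)$ is non-decreasing in $p$; I may therefore take $p$ to be the largest element of $\cG^s_\cW$ with $p\le q$. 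Since $\Delta=2\delta$ and the seller- and buyer-grids are offset by exactly $\delta$, for all but finitely many admissible $q$ this forces $p=q-\delta$, i.e. the pair is $(p,p+\delta)$ with $p\in\cG^s_\cW$; for those the two displayed formulas apply verbatim, and since $p^*_k=\tfrac{1-l}{2}+k\Delta\in[0,\tfrac{1+l}{2}]$ they give $\mathbb{E}_k[\gft(p,p+\delta,s,b)]\le c_1+\rho\epsilon$, with equality exactly at $p=p^*_k$, i.e. at $(p^*_k,p^*_k+\delta)$.

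It then remains to handle the finitely many ``exceptional'' pairs --- those whose buyer coordinate is one of the non-generic grid values (the endpoints $0$ and $1$, and a few values near $\tfrac{1-l}{2}$ and near $1-l$), each paired with its maximal admissible seller price. For each I would list the small explicit set of traded valuations and bound the resulting expectation by $c_1+\rho\epsilon$. The useful structural observation is that the balancing choice $\mu_k(w_3^i)=\gamma_1\cdot\frac{1-l-\rho-2i\Delta}{(1-l)/2-\delta+i\Delta}$ yields $\mu_k(w_3^i)\cdot(\text{buyer value of }w_3^i)=\gamma_1(1-l-\rho-2i\Delta)$, and symmetrically the valuations $w_1^i,w_2^i$ contribute $\gamma_1(1-l-\rho-2i\Delta)$ up to an $O(\epsilon)$ perturbation in ``column $i$''; because under $p\le q$ the seller- and buyer-thresholds never allow a column to be counted twice, the total $\cW_1\cup\cW_2\cup\cW_3$ contribution of any feasible pair is at most $\gamma_1\sum_{i=0}^{N-1}(1-l-\rho-2i\Delta)=\gamma_1\tfrac{77}{96}N$, matching the corresponding term in $c_1$; the valuations of $\cW_4$ are never traded when $p\le q$; and whatever extra gain an exceptional pair can collect from $\cW_5$ or $\cW_6$ is controlled by a short case check using $\gamma_1=1/(64N^2)$, $\gamma_5=1/64$, $l=1/12$, $\rho=1/32$ and $N>32$. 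Combining the two cases gives $\max_{(p,q):p\le q}\mathbb{E}_k[\gft(p,q,s,b)]=c_1+\rho\epsilon=\mathbb{E}_k[\gft(p^*_k,p^*_k+\delta,s,b)]$, which is the claim.

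The main obstacle is this last step: the near-diagonal case is essentially a corollary of the formulas already established, but excluding the handful of exceptional buyer prices requires an elementary but fiddly case-by-case computation, and it is there that the precise numerical constants and the balancing of the $\cW_3$ (and $\cW_1,\cW_2$) probabilities are genuinely needed.
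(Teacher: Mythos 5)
The paper does not supply a proof of \Cref{lem:LBlem1} --- it is stated as an immediate consequence of the displayed formulas for $\mathbb{E}_0[\gft(p,p+\delta,s,b)]$ and the perturbation identity --- so I am judging your argument on its own.

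Your overall route is the natural one and almost certainly the one the authors have in mind: round $(p,q)$ to the grid $\cG_\cW$ as in \Cref{cl:discrete}, observe that for $p\le q$ the expectation $\mathbb{E}_k[\gft(p,q,s,b)]=\sum_{(s,b)\in\cW}\mu_k(s,b)\ind{s\le p}\ind{q\le b}(b-s)$ is non-decreasing in $p$ (every newly captured $(s,b)$ satisfies $b-s\ge q-p\ge 0$), so replace $p$ by the largest seller-grid point $\le q$, and for generic $q$ this produces a pair $(p,p+\delta)$, $p\in\cG^s_\cW$, to which the displayed formulas apply. That part is sound.

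The genuine work, however, is exactly the part you postpone: showing that the finitely many exceptional pairs are dominated by $(p^*_k,p^*_k+\delta)$. You describe this as a ``short case check,'' but when one carries it out, it does not close as written. Consider the exceptional pair $(0,\tfrac{1-l}{2}-\delta)$, i.e.\ the smallest $\cW_3$ buyer value together with its maximal feasible seller price $0$. Because the seller prices of $\cW_3$ are all $0$, every $w_3^i$ still trades here, and by the balancing choice of $\mu_k(w_3^i)$ the full $\cW_1\cup\cW_2\cup\cW_3$ contribution $\gamma_1\sum_{i=0}^{N-1}(1-l-\rho-2i\Delta)=\gamma_1\tfrac{77}{96}N$ is already realized, just as at every $(p^*_i,p^*_i+\delta)$. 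But in addition, $(0,\tfrac{1-l}{2}-\delta)$ captures the $\cW_5$ valuation $(0,\tfrac{1-l}{2})$ (since $q=\tfrac{1-l}{2}-\delta\le\tfrac{1-l}{2}$), contributing an extra $\gamma_5\cdot\tfrac{1-l}{2}=\Theta(1)$, whereas $(p^*_k,p^*_k+\delta)$ does \emph{not} capture $\cW_5$ for any $k\ge 1$ (its buyer threshold $q=p^*_k+\delta>\tfrac{1-l}{2}$ is strictly above $\cW_5$'s buyer value). The perturbation term you gain at $p^*_k$ is only $\rho\epsilon\le\rho\gamma_1=O(1/N^2)$, which is orders of magnitude smaller than $\gamma_5\tfrac{1-l}{2}$. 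So on these definitions $(0,\tfrac{1-l}{2}-\delta)$ strictly dominates $(p^*_k,p^*_k+\delta)$ and the identity $\max=\mathbb{E}_k[\gft(p^*_k,p^*_k+\delta)]$ would fail.

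This also exposes the reason you didn't notice: your argument takes the paper's displayed constant $c_1=\gamma_5\tfrac{1+l}{2}+\mu_0(0,1)+\gamma_1\tfrac{77}{96}N$ at face value, but that $\gamma_5\tfrac{1+l}{2}$ term does not match $\cW_5=(0,\tfrac{1-l}{2})$ as stated (wrong buyer level, and not traded at the near-diagonal pairs at all). So the ``short case check'' is precisely where the claim currently does not go through, and it cannot be completed without first pinning down the intended role of $\cW_5$/$\gamma_5$ in the construction --- the reduction and monotonicity steps are fine, but they only get you to this unresolved core computation.
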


    The previous lemma characterizes the optimal fixed budget balanced price.
    Then, we show that all the strategies that are \emph{not} budget balanced are dominated.
    Indeed, one of the main challenges of our reduction is that, in general, a globally budget balanced algorithm could get a larger \gft~by temporarily sacrificing some profit and posting prices $(p,q)$ with $q<p$.
    In the following lemma we show that our instances are built in such a way that these strategies are dominated and thus can be discarded. Intuitively, every tuple of prices $p,q$ that tries to gain higher \gft~than the one obtained by playing on the diagonal  must win also trades in $\cW_4$. Then, since trades in $\cW_4$ have negative \gft~and happen with sufficiently high probability $\gamma_4$, we have that posting prices $q<p$ is dominated.

    \begin{restatable}{lemma}{lowertriisdom}\label{lem:lowertriisdom}
        For every pair of posted prices $(p,q)\in\cG_\cW\cap\{(p,q)\in[0,1]^2\,|\,p< q\}$, $(p',q')\in\cG_\cW\cap\{(p,q)\in[0,1]^2\,|\,p\geq q\}$, and instance $k\in\{0,\ldots, N-1\}$, we have that
        \[
        \mathbb{E}_k[\gft(p,q,s,b)]\le \mathbb{E}_k[\gft(p',q',s,b)]\le c_1+\rho\,\eps\,\ind{(p',q')=(p_k^*, p_k^*+\delta)}.
        \]
    \end{restatable}

    We complete this section by showing that also strategies that propose a high price to the buyer are dominated in every instance.
    In particular, we show that when the algorithm places prices $(p,q)$ with $q>\nicefrac{(1+\ell)}2$, it looses a constant \gft~with respect to choosing a smaller $q$. This is because the learner cannot induce the trade $\cW_5$ which guarantees expected \gft~of $\Theta(\gamma_5)$. Formally,

    \begin{lemma}\label{lem:explorationiscostly}
        For any instance $k\in\{0,\ldots, N-1\}$, price $p\in \cW_{\cG}^s\cap \mleft[\frac{1-\ell}{2},\frac{1+\ell}{2}\mright]$, and price $q\in\mleft(\tfrac{1+\ell}{2},1\mright]\cap\cG_\cW^b$ we have that
        \[
        \mathbb{E}_k[\gft(p,p+\delta)]\ge \mathbb{E}_k[\gft(p,q)] + \frac{\gamma_5}{3}.
        \]
    \end{lemma}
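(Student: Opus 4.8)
The plan is to compare the two price pairs directly, using that both post the \emph{same} price $p$ to the seller and differ only on the buyer's side. The first step is to record the chain
\[
q\ \ge\ 1-l-\rho\ >\ \tfrac{1+l}{2}+\delta\ \ge\ p+\delta ,
\]
valid for every $q\in\mleft(\tfrac{1+l}{2},1\mright]\cap\cG_\cW^b$ and every $p\in\mleft[\tfrac{1-l}{2},\tfrac{1+l}{2}\mright]$: by inspection of $\cW$ the only buyer valuations strictly above $\tfrac{1+l}{2}$ are $1-l-\rho$, $1-l$ and $1$, and the numerical choices $l=1/12$, $\rho=1/32$, $\delta=\Delta/2$ give $1-l-\rho-\tfrac{1+l}{2}=\tfrac{1-3l-2\rho}{2}=\tfrac{11}{32}$, which dwarfs the infinitesimal $\delta$. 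Consequently, for every valuation pair $(s,b)$ the set of buyers accepting the price $q$ is contained in the set accepting $p+\delta$, whence the pointwise identity
\[
\gft(p,p+\delta,s,b)-\gft(p,q,s,b)\ =\ \ind{s\le p}\,(b-s)\,\ind{p+\delta\le b<q}\ \ge\ 0 .
\]

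The second step is to take expectations under $\mu_k$: the difference $\mathbb{E}_k\mleft[\gft(p,p+\delta)\mright]-\mathbb{E}_k\mleft[\gft(p,q)\mright]$ equals the $\mu_k$-weighted sum of $(b-s)$ over all valuations $w=(s,b)\in\cW$ with $s\le p$ and $p+\delta\le b<q$, and every term of this sum is non-negative; so it suffices to exhibit one term of size at least $\gamma_5/3$. The natural choice is the single valuation of $\cW_5$: its seller coordinate $0$ is at most $p$ (since $p\ge\tfrac{1-l}{2}>0$), and its buyer coordinate is $\tfrac{1+l}{2}$, the value underlying the summand $\gamma_5\tfrac{1+l}{2}$ of the constant $c_1$. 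As $p+\delta\le\tfrac{1+l}{2}<q$, this buyer coordinate lies in the window $[p+\delta,q)$, so the $\cW_5$ trade is won by $(p,p+\delta)$ but lost by $(p,q)$, contributing exactly $\gamma_5\cdot\tfrac{1+l}{2}$ to the sum. Since $\tfrac{1+l}{2}=\tfrac{13}{24}>\tfrac13$ and all remaining terms are non-negative, the bound $\gamma_5/3$ follows, with room to spare.

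The main obstacle is the topmost grid point $p=\tfrac{1+l}{2}$, for which $p+\delta>\tfrac{1+l}{2}$, so the $\cW_5$ valuation drops out of the window $[p+\delta,q)$ and the argument above degenerates --- indeed, for this $p$ and the smallest admissible choice $q=1-l-\rho$ the two pairs win exactly the same set of trades. I would dispatch this endpoint separately: either by noting that the lemma is only invoked for seller prices $p\le\tfrac{1+l}{2}-\delta$, \ie strictly below the top of the grid, or, if the endpoint must be covered, by rerouting a learner pinned there to a strictly better pair via the domination facts already established in \Cref{lem:LBlem1} and \Cref{lem:lowertriisdom}. Everything else is routine verification of the inequalities $p+\delta\le\tfrac{1+l}{2}$ and $q>p+\delta$ under the explicit constants.
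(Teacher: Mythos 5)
Your argument is the natural (and, given the surrounding text, evidently intended) one --- the paper states this lemma without proof, offering only the one-line intuition that posting $q>(1+l)/2$ forfeits the $\cW_5$ trade --- and your core steps are correct: the chain $q\ge 1-l-\rho>\tfrac{1+l}{2}+\delta\ge p+\delta$ verified from the explicit constants, the pointwise decomposition $\gft(p,p+\delta,s,b)-\gft(p,q,s,b)=\ind{s\le p}(b-s)\ind{p+\delta\le b<q}\ge 0$ (nonnegative since any $(s,b)$ in the window has $s\le p<p+\delta\le b$), and singling out the $\cW_5$ valuation to produce a gap of $\gamma_5\cdot\tfrac{1+l}{2}=\tfrac{13}{24}\gamma_5>\gamma_5/3$. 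Two things should be made explicit rather than left implicit.

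First, you are silently correcting a typo in the instance. You read the $\cW_5$ buyer coordinate off the summand $\gamma_5\tfrac{1+l}{2}$ in $c_1$, taking $\cW_5=\{(0,\tfrac{1+l}{2})\}$, but the paper's displayed definition reads $\cW_5:=\{(0,\tfrac{1-l}{2})\}$. These are genuinely inconsistent: under the literal $\tfrac{1-l}{2}$ reading, $(p,p+\delta)$ already fails to win the $\cW_5$ trade at the smallest admissible $p=\tfrac{1-l}{2}$ (since then $p+\delta>\tfrac{1-l}{2}$), so the $\gamma_5$ term drops out of your pointwise difference entirely and the lemma becomes false; moreover the claimed constancy of $\mathbb{E}_0[\gft(p,p+\delta,s,b)]=c_1$ over $p\in[0,\tfrac{1+l}{2}]$ cannot hold under that reading either. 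Your $\tfrac{1+l}{2}$ reading is the one that makes the $c_1$ formula, this lemma, and the downstream regret accounting mutually consistent, and it deserves to be stated as a correction rather than tacitly inferred from $c_1$.

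Second, the degeneration you flag at the topmost grid point $p=\tfrac{1+l}{2}$ is a real gap in the lemma as literally stated, and your first remedy is the right one: the lemma is only invoked to bound the loss incurred in the exploration regions, and every $(p,q)\in\cF_k$ has $p\in\bigl[\tfrac{1-l}{2}+(k-1)\Delta,\tfrac{1-l}{2}+k\Delta\bigr)$ with $k\le N-1$, hence $p\le\tfrac{1+l}{2}-\Delta$ and $p+\delta<\tfrac{1+l}{2}$, exactly the regime where your argument goes through. The alternative remedy via \Cref{lem:lowertriisdom} is not needed.
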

    Intuitively, the previous lemma shows that exploring is costly. Indeed, as we show in the following paragraph, the algorithm must post $q\ge \nicefrac{(1+\ell)}2$ to gain information on the instance, \ie on the $k$ that determines the instance.
    
    \xhdr{iii) Analysis of the feedback.}
    In the two-bit feedback model, for a valuation $(s,b)$ we have that posting prices $(p,q)$ generates the feedback $(\ind{s\le p}, \ind{q\le b})$.
    Now, we show that for any instance $\mu_k$ and any posted prices $(p,q)$, the distribution of the feedback is independent on the instance almost everywhere. Specifically, the feedback distribution depends on the instance $k$ only within a ``small'' and instance-dependent region of prices.
    For every instance $k\in\{1,\ldots, N-1\}$, let 
        \[
        \cF_k\defeq\left[\tfrac{1-\ell}{2}+(k-1) \Delta, \tfrac{1-\ell}{2}+k \Delta\right)\times (1-\ell-c, 1-\ell].
        \] 
    It is a simple exercise to see that, for each pair of prices outside the sets $\cF_k$, the feedback received by the learner is independent of the specific instance that is generating the valuations  (see~\citep[Claim~2]{CesaBianchiCCFL23} for a similar result).
    \begin{lemma}\label{lem:same_FB}
        For all $(p,q)\in[0,1]^2\setminus \bigcup_{k'\in\{1,\ldots, N-1\}}\cF_{k'}$ it holds:
        \[
        \mathbb{P}_k\mleft[(\ind{s\le p}, \ind{q\le b})=z\mright]=\mathbb{P}_j\mleft[(\ind{s\le p}, \ind{q\le b})=z\mright],\quad \forall z\in\{0,1\}^2, \forall j,k\in\{0,\ldots, N-1\}.
        \]
    \end{lemma}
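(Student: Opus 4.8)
By transitivity it suffices to prove that $\PP_k[(\ind{s\le p},\ind{q\le b})=z]=\PP_0[(\ind{s\le p},\ind{q\le b})=z]$ for every $k\in\{1,\ldots,N-1\}$, every $z\in\{0,1\}^2$, and every $(p,q)\notin\cF_k$. The plan is to identify the signed measure $\mu_k-\mu_0$ and exploit the fact that it is a ``rank-one'' perturbation supported on only four valuations. Reading off from \Cref{eq:LBmu1,eq:LBmu2}, and using that $\mu_0$ assigns the same (instance-independent) probabilities as $\mu_k$ to every valuation in $\cW_3\cup\cW_4\cup\cW_5\cup\cW_6$ and the uniform value $\gamma_1$ to all of $\cW_1\cup\cW_2$, one gets that $\mu_k-\mu_0$ is supported on $w^k_1,w^{k+1}_1,w^k_2,w^{k+1}_2$ with respective masses $+\epsilon,-\epsilon,-\epsilon,+\epsilon$, i.e.
\[
\mu_k-\mu_0=\epsilon\,\big(\delta_{w^k_1}-\delta_{w^{k+1}_1}-\delta_{w^k_2}+\delta_{w^{k+1}_2}\big)
=\epsilon\,\big(\delta_{a_k}-\delta_{a_{k+1}}\big)\otimes\big(\delta_{1-l}-\delta_{1-l-\rho}\big),
\]
where $a_i\defeq\frac{1-l}{2}+i\Delta$ is the seller coordinate common to $w^i_1$ and $w^i_2$, and $1-l,\,1-l-\rho$ are the two buyer coordinates involved.

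The second step is to observe that, for a fixed price pair $(p,q)$ and outcome $z$, the event $\{(\ind{s\le p},\ind{q\le b})=z\}$ is a product (``rectangle'') event $\{s\in A_z\}\cap\{b\in B_z\}$, where $A_z$ is one of $[0,p]$, $(p,1]$ and $B_z$ is one of $[0,q)$, $[q,1]$. Integrating the indicator of this rectangle against the rank-one signed measure above gives
\[
\PP_k[(\ind{s\le p},\ind{q\le b})=z]-\PP_0[(\ind{s\le p},\ind{q\le b})=z]
=\epsilon\,\big(\ind{a_k\in A_z}-\ind{a_{k+1}\in A_z}\big)\big(\ind{1-l\in B_z}-\ind{1-l-\rho\in B_z}\big).
\]
Since $a_k<a_{k+1}$ and $A_z$ is a half-line, the first factor is nonzero only if $p$ falls strictly between these two consecutive grid points; since $1-l-\rho<1-l$ and $B_z$ is a half-line, the second factor is nonzero only if $q\in(1-l-\rho,1-l]$. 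Hence the difference vanishes for every $z$ unless $(p,q)$ lies in the rectangle cut out by these two conditions, which is exactly $\cF_k$. Therefore $\PP_k[\cdot=z]=\PP_0[\cdot=z]$ for all $z$ whenever $(p,q)\notin\cF_k$.

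Putting this together, for $(p,q)\notin\bigcup_{k'\in\{1,\ldots,N-1\}}\cF_{k'}$ we get $\PP_k[\cdot=z]=\PP_0[\cdot=z]$ for every $k\in\{0,\ldots,N-1\}$ (the case $k=0$ being trivial), and hence $\PP_k[\cdot=z]=\PP_j[\cdot=z]$ for all $j,k$, which is the claim. The argument contains no difficult step: the only things requiring care are purely bookkeeping, namely (i) checking that $\mu_0$ does coincide with every $\mu_k$ on $\cW_3\cup\cW_4\cup\cW_5\cup\cW_6$, so that $\mu_k-\mu_0$ really is the stated four-point, rank-one perturbation; (ii) aligning the index ranges in the definition of $\cF_k$ with those of the perturbed valuations $w^k_j,w^{k+1}_j$; and (iii) keeping the boundary conventions for $A_z$ and $B_z$ consistent so that the ``ambiguous'' price region comes out to be precisely $\cF_k$. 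Step (i) is the one I would treat as the main point to verify carefully, since the whole argument hinges on the perturbation factoring as a tensor product.
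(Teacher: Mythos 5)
Your proof is correct. The paper itself does not actually give a proof of this lemma — it only gestures at \citep[Claim~2]{CesaBianchiCCFL23} just before stating it and moves straight on to ``Partition the pairs of prices'' — so your argument is genuinely filling a gap rather than reproducing an argument. The rank-one observation is exactly the right one: writing $\mu_k-\mu_0=\epsilon\,(\delta_{a_k}-\delta_{a_{k+1}})\otimes(\delta_{1-l}-\delta_{1-l-\rho})$ is verified directly from \eqref{eq:LBmu1}--\eqref{eq:LBmu2} and from the fact that the probabilities assigned on $\cW_3\cup\cW_4\cup\cW_5\cup\cW_6$ are written independently of $k$; and since every two-bit feedback event is a half-line rectangle $A_{z_1}\times B_{z_2}$, the signed mass of that rectangle under $\mu_k-\mu_0$ factors as you state. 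Carrying out your factor-by-factor analysis with the boundary conventions $A_1=[0,p]$, $A_0=(p,1]$, $B_1=[q,1]$, $B_0=[0,q)$ gives that the difference is non-zero exactly on $[a_k,a_{k+1})\times(1-l-\rho,1-l]$, so strictly speaking ``falls strictly between'' should read ``$p\in[a_k,a_{k+1})$''. Note also that this region is shifted by one index relative to the paper's written definition $\cF_k=[a_{k-1},a_k)\times(1-l-c,1-l]$ (and the paper's $c$, never defined, is presumably $\rho$); you anticipated exactly this in your item (ii), and the discrepancy is a bookkeeping slip in the paper (it also shows up in the fact that $p^*_k=a_k$ is not contained in the paper's $\cE_k$ as written). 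Up to that re-indexing of $\cF_k$, every step of your argument holds and the lemma follows.
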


    \xhdr{iv) Price regions.}
    The properties uncovered so far naturally partition the square $[0,1]^2$ into the following three regions:
    \begin{itemize}
     \item \textcolor{black}{\emph{Exploration regions}}. We have the $N-1$ regions $\cF_k$. These are the regions in which the probability of observing a certain two-bit feedback depends on the instance $\mu_k$ from which the valuations are sampled.
    
    \item \textcolor{black}{\emph{Exploitation regions}}. We define the regions $\cE_k$ for any $k\in\{1,\ldots,N-1\}$ as follows
    \[
    \cE_k\defeq\left\{(p,q)\in [0,1]^2\,\,\Big\vert\,\, q\ge p,\,\, q\le \tfrac{1+\ell}{2},\,\, p\in\left[\tfrac{1-\ell}{2}+(k-1)\Delta, \tfrac{1-\ell}{2}+k\Delta\right)\right\}.
    \]
    All these regions are such that the \gft~collected by posting $(p,q)\in \cE_k$ is close (and smaller than or equal to) to the optimal \gft, \ie the one obtained by posting $(p_k^*, p_k^*+\delta)$. 
    
    \item \textcolor{black}{\emph{Dominated regions}}. We define $\cD$ as the remaining set of possible valuations, that is
    \[
    \cD\defeq[0,1]^2\setminus\left(\cup_k (\cF_k\cup\cE_k)\right).
    \] 
    It's easy to verify that by posting $(p,q)\in \cD$ one obtains a \gft~that is at most $c_1$.
    \end{itemize}

    \begin{figure}[t!]
        \scalebox{0.6}{\includegraphics{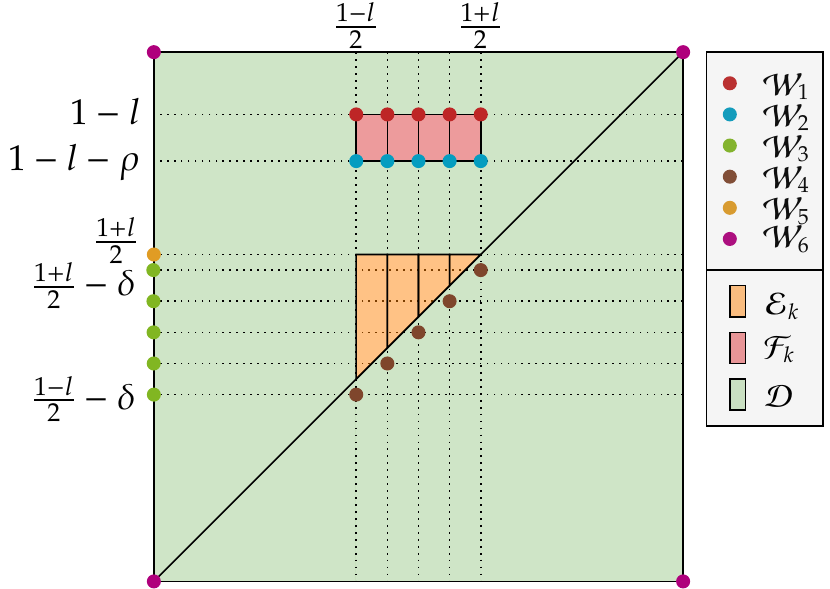}}
        \vspace{0.2cm}
        \centering
        \caption{Partition of $[0,1]^2$ in exploration regions $\cF_k$, exploitation regions $\cE_k$, and dominated regions $\cD$.}
        \label{fig:partitionLB}
    \end{figure}

    \Cref{fig:partitionLB} shows the partition of the square $[0,1]^2$ into exploration, exploitation and dominated figures, which are depicted in \textcolor{Red2}{\textbf{red}}, \textcolor{orange2}{\textbf{orange}} and \textcolor{mgreen}{\textbf{green}}, respectively. Next, we define 
    \begin{align*}
    \textstyle{\cN_k\defeq\sum_{t\in[T]}\ind{(p_t,q_t)\in\cF_k},\quad
    \cM_k\defeq\sum_{t\in[T]}\ind{(p_t,q_t)\in\cE_k},\quad
    \cO\defeq\sum_{t\in[T]}\ind{(p_t,q_t)\in\cD},}
    \end{align*}
    which are the number of times an algorithm plays in the exploration, exploitation and dominated regions, respectively.
    Then, we can upper bound the gain from trade of an algorithm $\cA$ considering only the number of times $\cA$ plays in each region.
    In particular, it holds that in any instance $k$:
    \begin{itemize}
        \item \textbf{Cost of exploration:} the \gft~collected by posting prices in $\cF_j$ is at most $c_2$ for all $j$ (\Cref{lem:explorationiscostly});
        \item \textbf{Exploitation:} the \gft~collected by posting prices in $\mathcal{E}_j$ is at most $c_1+\rho\cdot\eps\ind{j=k}$ (\Cref{lem:lowertriisdom});
        \item \textbf{Cost of domination:} the \gft~collected by posting prices in $\cD$ is at most $c_1$ (\Cref{lem:lowertriisdom}). 
        
    \end{itemize}
    Formally, these observations lead to the following upper bound.

    \begin{lemma}\label{lem:upperboundGFTLB}
        Let $\{(p_t,q_t)\}_{t\in[T]}$ be the sequences of prices posted by any algorithm $\cA$. Then
        \[
        \sum\limits_{t=1}^T\mathbb{E}_k[\gft(p_t, q_t, s, b)]\le \mathbb{E}_k\left[\rho\eps\cdot\cM_k+\sum_{k=1}^{N-1}\left(c_1\cM_j+c_2\cN_j +c_1\cO \right)\right].
        \]
    \end{lemma}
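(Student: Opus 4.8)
The plan is to treat this lemma as the bookkeeping step that converts the three regional bounds on the per-round expected gain from trade — Lemmas~\ref{lem:LBlem1}, \ref{lem:lowertriisdom} and \ref{lem:explorationiscostly} — into one global inequality, using that $\{\cF_j\}_j$, $\{\cE_j\}_j$ and $\cD$ partition $[0,1]^2$. First I would reduce to algorithms whose posted pairs all lie on the grid $\cG_\cW$: by the discretization argument used in the proof of Proposition~\ref{pr:benchmark_2}, any algorithm admits an equivalent one (identical feedback distribution and identical expected $\gft$ in every instance $\mu_k$) that randomizes only over pairs in $\cG_\cW$. This reduction is exactly what makes Lemmas~\ref{lem:lowertriisdom} and \ref{lem:explorationiscostly}, which are stated for pairs in $\cG_\cW$, applicable to the posted prices $(p_t,q_t)$.

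Next I would fix an instance $k$ and pass to the per-round expected gain from trade $g_k(p,q)\defeq\mathbb{E}_{(s,b)\sim\mu_k}[\gft(p,q,s,b)]$; since $(s_t,b_t)$ is drawn from $\mu_k$ independently of $(p_t,q_t)$, the tower rule gives $\mathbb{E}_k[\gft(p_t,q_t,s,b)]=\mathbb{E}_k[g_k(p_t,q_t)]$. The elementary check one needs here is that $\{\cF_j\}_j$, $\{\cE_j\}_j$ and $\cD$ are pairwise disjoint and cover $[0,1]^2$: the half-open $p$-intervals defining the $\cF_j$ (and those defining the $\cE_j$) are disjoint, $\cF_j$ and $\cE_j$ have disjoint $q$-ranges because $1-l-c>(1+l)/2$ for our choice of constants, and $\cD$ is defined as the complement. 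Hence for every $(p,q)\in\cG_\cW$,
\[
g_k(p,q)\le c_1\,\ind{(p,q)\in\cD}+\sum_{j=1}^{N-1}c_2\,\ind{(p,q)\in\cF_j}+\sum_{j=1}^{N-1}\bigl(c_1+\rho\epsilon\,\ind{(p,q)=(p_k^*,p_k^*+\delta)}\bigr)\ind{(p,q)\in\cE_j},
\]
where the three cases use, in order: Lemma~\ref{lem:lowertriisdom} for $\cD$ (a dominated pair is worth at most $c_1$, and it is not the optimal pair $(p_k^*,p_k^*+\delta)$, which sits in an exploitation region); Lemma~\ref{lem:explorationiscostly} for $\cF_j$ (posting a buyer price $q>(1+l)/2$ costs at least $\gamma_5/3$, lowering the value to at most $c_2$); and Lemmas~\ref{lem:LBlem1} and \ref{lem:lowertriisdom} for $\cE_j$ (any weakly budget-balanced pair is worth at most $c_1$, with the extra $\rho\epsilon$ only for the unique exploitation region containing $(p_k^*,p_k^*+\delta)$).

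Finally I would sum this inequality over $t\in[T]$, take $\mathbb{E}_k[\cdot]$, and interchange the finite sums, using $\sum_{t}\ind{(p_t,q_t)\in\cD}=\cO$, $\sum_{t}\ind{(p_t,q_t)\in\cF_j}=\cN_j$, $\sum_{t}\ind{(p_t,q_t)\in\cE_j}=\cM_j$, and $\sum_{t}\ind{(p_t,q_t)=(p_k^*,p_k^*+\delta)}\le\cM_k$ (the optimal pair lies in a single exploitation region). This gives
\[
\sum_{t=1}^T\mathbb{E}_k[\gft(p_t,q_t,s,b)]\le\mathbb{E}_k\Bigl[\rho\epsilon\,\cM_k+c_1\,\cO+\sum_{j=1}^{N-1}\bigl(c_1\,\cM_j+c_2\,\cN_j\bigr)\Bigr],
\]
which is the claimed bound. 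I do not expect a real obstacle: the only two points requiring care are the grid reduction (so the regional lemmas apply verbatim) and the disjointness/exhaustiveness of the partition; everything analytic is already inside the three cited lemmas, and the remainder is linearity of expectation.
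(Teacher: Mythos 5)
Your proof is correct and follows exactly the route the paper intends: the paper does not give a separate proof of Lemma~\ref{lem:upperboundGFTLB}, it simply states the three per-region bounds in the bulleted list immediately preceding it and lets the bookkeeping speak for itself. You supply that bookkeeping -- the grid reduction to $\cG_\cW$ (which the paper invokes earlier in the ``Analysis of the gain from trade'' paragraph), the verification that $\{\cF_j\}$, $\{\cE_j\}$, $\cD$ partition $[0,1]^2$, the pointwise inequality on $\cG_\cW$ from the three regional lemmas, and the final sum over $t$ with linearity of expectation and the identification of the counts $\cN_j$, $\cM_j$, $\cO$. You also silently correct the obvious typo in the statement (the outer sum index should be $j$, and $c_1\cO$ belongs outside it), writing the inequality in the form the paper actually uses downstream.

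One small thing to be aware of, which is a feature of the paper's own exposition rather than a flaw in your argument: the exploration-region bound ``$\le c_2$'' is attributed to Lemma~\ref{lem:explorationiscostly}, but that lemma only yields $\mathbb{E}_k[\gft(p,q)]\le c_1+\rho\epsilon-\gamma_5/3$, which is numerically a bit larger than $c_2=c_1-\gamma_5(1-l)/2$ (since $(1-l)/2>1/3$ at $l=1/12$). The constant $c_2$ as defined is indeed a valid upper bound on $\cF_j$ (one can check it by direct computation over the support $\cW$), but it does not follow from Lemma~\ref{lem:explorationiscostly} alone. Since you are replicating the paper's attribution and the constant only affects the gap $c_1-c_2$ by a bounded factor, nothing in the downstream $\Omega(T^{5/7})$ analysis changes; but if you wanted the argument to be self-contained you would either weaken $c_2$ to $c_1+\rho\epsilon-\gamma_5/3$ (which suffices for the end result) or establish the sharper constant by computing $\mathbb{E}_k[\gft]$ on $\cF_j$ directly.
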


    \xhdr{v) Relating the algorithm behavior on different instances.}
    Now we relate the expected number of exploitation rounds $\cM_k$ in different instances $k$. This difference depends on the probability measures $\mathbb{P}_k$ and $\mathbb{P}_0$ through the Pinsker's inequality on a suitably defined multinomial random variable that encodes the four possible feedback observed when playing in the exploration regions $\cE_k$.

    \begin{restatable}{lemma}{pinsker}\label{lem:KL}
    For all $k\in\{1,\ldots,N-1\}$ we have that
    \[
    \mathbb{E}_k[\cM_k] - \mathbb{E}_0[\cM_k]\le T\eps\sqrt{\frac{2}{\gamma_6}\mathbb{E}_0[\cN_k]}.
    \]
    \end{restatable}
    
\xhdr{vi) Lower bounding the regret.}

We define the expected regret of an algorithm under instance $k\in\{0,\ldots,N-1\}$ as:
\[
R_T^k\defeq\max\limits_{(p,q)\in[0,1]^2, p\ge q}\mathbb{E}_k\left[\sum\limits_{t=1}^T\gft_t(p,q) - \sum\limits_{t=1}^T\gft_t(p_t,q_t)\right].
\]
Then, combining all the previous results leads to the following lemma which gives a lower bound in terms of $\eps,N$, and $T$.

\begin{restatable}{lemma}{lemlowerboundbandit}\label{lem:lowerboundbandit}
There is an instance $k\in\{0,\ldots,N-1\}$ and an absolute constant $c\in(0,1)$ such that:
\[
R_T^k\ge c\cdot\min\left(\tfrac{N}{\eps^2}, {\eps} T\right).
\]
\end{restatable}

    By using \Cref{lem:lowerboundbandit} we can readily conclude the proof of \Cref{thm:lower2bits} as follows.
    Let $\eps=T^{-\alpha}$ and $N=T^\beta$, with $\alpha,\beta>0$.
    Now we simply have to optimize over the choice of parameters $\alpha$ and $\beta$.
    In doing so, we need to take into account the additional constraints necessary to have well-defined instance distributions $\mu_k$. 
    In particular, we have that $\eps\le\gamma_1$ from \Cref{eq:LBmu2}, and $2N\gamma_1<1$ from \Cref{eq:LBmu1}. Moreover, we also need to impose $\gamma_4N<1$ by \Cref{eq:LBmu3}. Since $\gamma_4=4\gamma_1(14N-13)$, this also implies that $\gamma_1<\nicefrac{1}{4N(13N-14)}<\nicefrac{1}{N^2}$ for  $N>2$.
    Therefore, the  constraint  $\eps<\gamma_1$ implies:
    $\eps=T^{-\alpha}\le \nicefrac{1}{T^{2\beta}}=\nicefrac{1}{N^2}$
    which yields that $\alpha\ge2\beta$. Note that this dominates the constraint $\eps<\nicefrac 1N$ (or equivalently written as $\alpha\ge\beta$) that would have been implied by \Cref{eq:LBmu2} alone.
    
    The lower bound of is maximized when $\alpha$ and $\beta$ are solution of the following program:
    \begin{align*}
        &\max\limits_{\alpha\ge0}\ (1-\alpha)\\
        &\textnormal{s.t.}\quad\alpha\ge2\beta\,,\,\, 1-\alpha=\beta+2\alpha
    \end{align*}
    which gives as solution the values $\alpha= \nicefrac 27$ and $\beta=\nicefrac 17$. This implies a $\Omega(T^{\nicefrac 57})$ lower bound.

    \paragraph{Connection with the $\Omega(T^{\nicefrac 34})$ lower bound of~\citet{CesaBianchiCCFL23}.}
    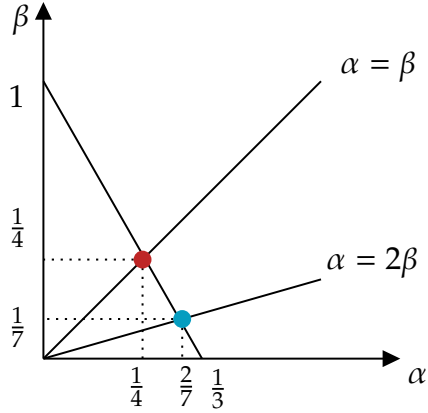
\begin{figure}
        \centering
        \tikzset{every picture/.style={line width=0.75pt}} 

\begin{tikzpicture}[x=0.75pt,y=0.75pt,yscale=-1,xscale=1]

\draw    (230,220) -- (230,43) ;
\draw [shift={(230,40)}, rotate = 90] [fill={rgb, 255:red, 0; green, 0; blue, 0 }  ][line width=0.08]  [draw opacity=0] (8.93,-4.29) -- (0,0) -- (8.93,4.29) -- cycle    ;
\draw    (230,220) -- (407,220) ;
\draw [shift={(410,220)}, rotate = 180] [fill={rgb, 255:red, 0; green, 0; blue, 0 }  ][line width=0.08]  [draw opacity=0] (8.93,-4.29) -- (0,0) -- (8.93,4.29) -- cycle    ;
\draw    (310,220) -- (230,80) ;
\draw    (370,80) -- (230,220) ;
\draw    (370,180) -- (230,220) ;
\draw  [dash pattern={on 0.84pt off 2.51pt}]  (280,220) -- (280,170) ;
\draw  [dash pattern={on 0.84pt off 2.51pt}]  (280,170) -- (230,170) ;
\draw  [dash pattern={on 0.84pt off 2.51pt}]  (300,200) -- (230,200) ;
\draw  [dash pattern={on 0.84pt off 2.51pt}]  (300,200) -- (300,220) ;
\draw [shift={(300,200)}, rotate = 0] [color=blueGrotto  ,draw opacity=1 ][fill=blueGrotto  ,fill opacity=1 ][line width=0.75]      (0, 0) circle [x radius= 4, y radius= 4]   ;
\draw [shift={(280,170)}, rotate = 0] [color=niceRed  ,draw opacity=1 ][fill=niceRed  ,fill opacity=1 ][line width=0.75]      (0, 0) circle [x radius= 4, y radius= 4]   ;

\draw (412,223.4) node [anchor=north west][inner sep=0.75pt]    {    \large$\alpha $};
\draw (225,58) node [anchor=south east][inner sep=0.75pt]    {    \large$\beta $};
\draw (211,80.4) node [anchor=north west][inner sep=0.75pt]    {    \large$1$};
\draw (312,223.4) node [anchor=north west][inner sep=0.75pt]    {    \large$\frac{1}{3}$};
\draw (378,63.4) node [anchor=north west][inner sep=0.75pt]    {    \large$\alpha =\beta $};
\draw (371,162.4) node [anchor=north west][inner sep=0.75pt]    {    \large$\alpha =2\beta $};
\draw (296,222.4) node [anchor=north west][inner sep=0.75pt]    {    \large$\frac{2}{7}$};
\draw (271,222.4) node [anchor=north west][inner sep=0.75pt]    {    \large$\frac{1}{4}$};
\draw (211,142.4) node [anchor=north west][inner sep=0.75pt]    {    \large$\frac{1}{4}$};
\draw (211,192.4) node [anchor=north west][inner sep=0.75pt]    {    \large $\frac{1}{7}$};

\end{tikzpicture}
        \caption{Order of $\alpha$ and $\beta$ reachable by~\citet{CesaBianchiCCFL23} (\textcolor{niceRed}{\textbf{red}}) and this work (\textcolor{blueGrotto}{\textbf{blue}}).}
        \label{fig:LBalpha}
    \end{figure}

    While our result and the one of \citet{CesaBianchiCCFL23} build on a similar constructions (at least conceptually), we obtain a weaker lower bound.
   The main reason is that the learner in~\citet{CesaBianchiCCFL23} is weak budget balanced, while in our work the learner has only a global budget balance constraint. To preclude this option to the learner,  we penalize the \gft~of prices in the lower triangle by adding the set of valuations $\cW_4$. If $\gamma_4$ is large enough w.r.t. $\gamma_1$, then posting prices in the lower triangle is dominated. In particular, we must choose $\gamma_4=\Theta(\gamma_1 N)$ as we prove in \Cref{lem:lowertriisdom}. Once we prove that the lower triangle is dominated, we can conceptually reduce our problem to the one of \citet{CesaBianchiCCFL23}. However, the choice of $\gamma_4=\Theta(\gamma_1 N)$ imposes the additional constraint $\alpha\ge 2\beta$, which is not needed in the original construction.
    Hence, they can set $\alpha=\beta=\nicefrac14$, and get a bound of $\Omega(T^{\nicefrac 34})$. This difference is depicted in \Cref{fig:LBalpha}.

\section{Best Feasible Distribution of Prices}\label{sec:best randomized prices}
    
    In this section, we analyse the regret with respect to the best fixed distribution over prices which satisfies global budget balance {\em on average}. First, we present a negative result that clearly separates this new benchmark from the best fixed price in hindsight: in \Cref{thm:alpha_lower}, we prove that it is impossible to achieve sublinear $(1+\varepsilon)$-regret with respect to the best feasible distribution, even in the full feedback setting. On the positive side, we show that the two benchmarks are only a multiplicative factor $2$ apart (\Cref{thm:2_upper}). This implies that any learning algorithm that exhibits sublinear regret with respect to the best fixed price in hindsight automatically achieves sublinear $2$-regret with respect to the best feasible distribution. Finally, we complement this positive result by proving that this multiplicative gap of $2$ is tight (\Cref{thm:2_lower}).

    \subsection{Linear Lower Bound}
        The best feasible distribution has a crucial advantage with respect to any budget balanced learner: it has the possibility to ``run some deficit'' in a preliminary phase of the sequence as it knows it will be possible to extract enough profit to ensure global budget balance in some later stages. 
        For instance, consider a half-sequence where $(s_t,b_t)$ is either $(0,\nicefrac 13)$ or $(\nicefrac 23, 1)$, for $t \le \nicefrac T2$. Any learning algorithm has to enforce budget balance at time $\nicefrac T2$ (to be protected about the possibility that $(s_t,b_t) = 0$ for all future $t$), while the randomized benchmark, which knows the future, may run a deficit and collect more gain from trade by posting the budget unbalanced prices $(\nicefrac 23, \nicefrac 13)$ with some probability. Inspired by this example, we state the following Lemma.

        \begin{lemma}
            \label{lemma:S_1}
                For any algorithm $\cA$ that enforces global budget balance, there exists a deterministic sequence of valuations $\cS_1$ with the following properties: $(i)$ the expected gain from trade of $\cA$ is at most $\nicefrac T9$; $(ii)$ the valuations $(s_t,b_t)$ are either $(0,\nicefrac 13)$ or $(\nicefrac 23,1)$ for all $t \le\nicefrac T2$; $(iii)$ the valuations $(s_t,b_t)$ are equal to $(0,0)$ for all $t > \nicefrac T2$.
        \end{lemma}
               \begin{proof}
                Consider the following randomized instance: $(s_t,b_t)= (0,0)$ for $t > \nicefrac T2$, while for the other time steps the valuations are either $(0,\nicefrac 13)$ or $(\nicefrac 23, 1)$, independently and uniformly at random. Each realized instance of such randomized sequence clearly satisfies requirements $(ii)$ and $(iii)$. Finally, we show that in expectation (with respect to the randomization of the algorithm and of the instance), the total gain from trade of $\cA$ is at most $\nicefrac T9.$ Then the existence of an instance $\cS_1$ with the desired properties follows by an averaging argument.

                Focus on the first $\nicefrac T2$ time steps, and let $N_1$ be the random variable that counts the number of time steps in which $\cA$ posts prices $q_t\le 1/3$ and $p_t\ge 2/3$. Moreover, let $N_2 \defeq \nicefrac T2 - N_1$, and $n_i = \E{N_i}$ for $i\in\{1,2\}$. By assumption, Algorithm $\cA$ is global budget balanced, which this means that 
                \[
                    - \frac {n_1}3 +  \frac {n_2}6 \ge 0.
                \]
                The first term of the inequality follows from the fact that every time the learner posts $q_t\le 1/3$ and $p_t\ge 2/3$, it loses at least $1/3$ revenue. The second term follows from the fact that, by posting other pairs of prices, the learner can extract at most a revenue of $1/6$ (\ie the trade happens with probability $1/2$, and the learner receives $1/3$ revenue). On the other hand, $n_1$ is directly proportional to the final gain from trade of $\cA$, thus the best possible gain from trade is achieved for $n_1 = T/6$ and $n_2 = T/3$, which yields an expected gain from trade of at least
                \[
                    \frac{n_1}3 + \frac{n_2}6 = \frac T{9}.
                    \]
                This concludes the proof of the claim.
\end{proof}

        The lemma is crucial in proving the impossibility result in the following Theorem, which holds even under full feedback. 
        \begin{theorem}\label{thm:alpha_lower}
            Fix any constant $\alpha \in [1,\nicefrac {36}{35})$, and any globally budget balanced learning algorithm $\cA$ with full-feedback. Then there exists a sequence of valuations such that
            \[
                \sum_{t=1}^T \mathop{\mathbb{E}}_{(p,q)\sim\gamma^*}{\gft_t(p,q)} - \alpha \cdot \sum_{t=1}^T \E{\gft_t(p_t,q_t)}  \ge \tfrac 5{18} \left(\tfrac{36}{35} - \alpha  \right)T,  
            \]
            where distribution $\gamma^*$ is the optimal feasible distribution.
        \end{theorem}
        \begin{proof}
            Fix any $\alpha \in [1,\nicefrac {36}{35})$ and any learning algorithm $\cA$. Starting from sequence $\cS_1$ as in \Cref{lemma:S_1}, construct a second sequence of valuations $\cS_2$ which coincides with $\cS_1$ for the first half of the time horizon. In the second half we set $(s_t,b_t) = (\hat s, \hat b)$ for all $t > \nicefrac T2$, where $(\hat s, \hat b)$ is the most frequent value in the first half of $\cS_1$. We compare the total gain from trade collected by $\cA$ on $\cS_2$ with that of the best fixed distribution of prices over $\cS_2$, whose gain from trade we denote $\OPT$. The expected gain from trade of $\cA$ on $\cS_2$ is at most $\nicefrac T9$ in the first half (Claim \ref{lemma:S_1}) and $\nicefrac T6$ in the second half (as it can extract at most $\nicefrac 13$ gain from trade in each one of $\nicefrac T2$ time steps). Therefore, $\cA$ extracts at most a total of $\nicefrac{5T}{18}$ expected gain from trade. On the other hand, the best feasible distribution $\gamma^*$  must perform at least as well as the feasible distribution $\gamma$, under which the prices are $(\hat s, \hat b)$ with probability $\nicefrac 47$, and $(\nicefrac23,\nicefrac 13)$ with the remaining probability. First, we argue that distribution $\gamma$ is indeed budget-feasible:
            \begin{align*}
                \sum_{t=1}^T \subE{(p,q)\sim\gamma}{\prof_t(p,q)} \ge \tfrac 37 \mleft(-\tfrac {T}3\mright) + \tfrac 47 \mleft(\tfrac 13 \cdot \tfrac {3T}4\mright) = 0,
            \end{align*}
            where we used that prices $(\nicefrac 23,\nicefrac 13)$ are posted with probability $\nicefrac 37$ and always induces a negative profit of $\nicefrac 13$ and that, by construction, there are at least $\nicefrac{3T}4$ time steps where $(s_t,b_t) = (\hat s, \hat b)$.
            To conclude the proof, we analyze in a similar way the total gain from trade achieved by $\gamma$: 
            \begin{align*}
                \sum_{t=1}^T\subE{(p,q)\sim\gamma}{\gft_t(p,q)} \ge \tfrac 37 \mleft(\tfrac {T}3\mright) + \tfrac 47 \mleft(\tfrac 13 \cdot \tfrac {3T}4\mright) =  \tfrac 27 T.    
            \end{align*}
            All in all, we have constructed an instance, $\cS_2$ where $\cA$ exhibits an expected gain from trade of at most $\nicefrac{5T}{18}$, while $\OPT$ is at least $\nicefrac{2T}7$. This means that $\cA$ suffers at least the following $\alpha$-regret
            \begin{align*}
                \OPT - \alpha \cdot \sum_{t=1}^T \E{\gft_t(p_t,q_t)} & \ge \sum_{t=1}^T\subE{(p,q)\sim\gamma}{\gft_t(p,q)}    - \alpha \cdot \sum_{t=1}^T \E{\gft_t(p_t,q_t)}\\
                &\ge \left(\tfrac 27 - \alpha \tfrac 5{18} \right) T= \tfrac 5 {18} \left(\tfrac {36}{35} - \alpha \right) T. \qedhere
            \end{align*}            
        \end{proof}

    \subsection{Comparison of the Two Benchmarks}
    Surprisingly, it holds that the performance of the optimal fixed price is to not far from that of optimal global budget balanced distribution.
    \begin{theorem}
        \label{thm:2_upper}
        Denote with $p^*$, resp. $\gamma^\ast$, the best fixed price, resp. the best feasible distribution. Then, for any sequence of valuations: 
        \[
            \sum_{t=1}^T \mathop{\mathbb{E}}_{(p,q)\sim \gamma^\ast}{\gft_t(p,q)} \le 2 \sum_{t=1}^T \gft_t(p^\ast).      
        \]
    \end{theorem}
    \begin{proof}
        Fix any sequence of valuations $\cS$, and let $p^*$ be as in the statement. By standard analytic arguments, it is possible to show that there exists an optimal feasible distribution $\gamma^*$ whose support is either one or two points (we refer to \Cref{pr:benchmark_2} in \Cref{app:model} for a formal proof). We prove the result using this $\gamma^*$ and considering two separate cases, according to the cardinality of the support of $\gamma^*$.If the support of $\gamma^*$ consists of only one point $(p,q)$, and since $\gamma^\ast$ has respect budget feasibility, then it is safe to assume without loss of generality that such point lies above the diagonal, \ie $p\le q$ and that the gain from trade achieved by $\gamma^*$ is exactly the same provided by $p^*$.

        In the second case, the support of $\gamma^*$ consists of two different points $(p_1, q_1)$ and $(p_2,q_2)$. If both prices lie in the upper left diagonal (i.e., $p_1 \le q_1$ and $p_2 \le q_2$), then the total gain from trade is exactly the same as $p^*$, by maximality of $p^*$. If one of the two pair of prices is strongly budget balance, let's say $p_1 = q_1$ and $q_2 < p_1$, then the only possibility (by the budget balance condition) is that these prices never incur in negative profit, so that their gain from trade is once again at most that of $p^*$. All in all, the only meaningful case to study is when $p_1<q_1$ and $p_2>q_2$. 
        
        Consider then this case, i.e., $p_1<q_1$ and $p_2>q_2$, let $\cT_{0}$ be the set of time steps in which the trade is lost by $(p_2, q_2)$, that is $\cT_0=\{t\in[T]\,\vert\, s_t > p_2 \text{ or } b_t<q_2\}.$ For all other $t\in [T]\setminus \cT_0$, every prices $(p_2, q_2)$ make the trade happen. We further partition these time steps as follows:
        \begin{align*}
            \cT_1=\{t:\,(s_t,b_t)\in[0,p_2]\times(p_2,1]\},\,
            \cT_2=\{t:\, (s_t,b_t)\in[0,q_2)\times[q_2,p_2]\},
            \cT_3=\{t:\, (s_t,b_t)\in[q_2,p_2]^2\}.
        \end{align*}
        The sets $\cT_0, \dots, \cT_3$ partition the time horizon.
        Now, for each one of these subset of time steps $\mathcal{T}_i$ it is possible to define two functions over $[0,1]^2$: 
        \begin{align*}\label{eq:f_and_g}
            f_i(p,q) = \sum\limits_{t\in\mathcal{T}_i}\gft_t(p,q),\quad g_i(p,q)=\sum\limits_{t\in\mathcal{T}_i}\prof_t(p,q).
        \end{align*}
        We adopt the usual convention to omit the second argument if it coincides with the first one. Clearly, the sum of the $f_i$ yields the total $\gft$, while that of the total $g_i$ the $\prof$. We relate the value of functions $f_0, f_1, f_2$ in $(p_2,q_2)$ with the total gain from trade it collects. The trades in $ \cT_0$ are lost by $(p_2,q_2)$, so it holds that $f_0(p_2,q_2)=0$ and $g_0(p_2,q_2)=0$. For $\cT_1$ and $\cT_2$ these simple bounds hold:
        \begin{equation}
            \label{eq:T_1 T_2}
            f_1(p_2, q_2) + f_2(p_2, q_2) \le f_1(p_2) + f_2(q_2) \le 2 \sum_{t=1}^T \gft_t(p^*). 
        \end{equation}
        We move our attention to $f_3$, where a more sophisticated argument is needed. As a preliminary step, we prove that the profit extracted by $(p_1,q_1)$ is at most the optimal gain from trade:
        \begin{align}
        \nonumber
            \sum_{t=1}^T \prof_t(p_1, q_1)&=(q_1-p_1)\sum\limits_{t=1}^T\ind{s_t\le p_1}\ind{q_1\le b_t}\\
            &\le\sum\limits_{t=1}^T(b_t-s_t)\ind{s_t\le p_1}\ind{q_1\le b_t}\nonumber\\
            \label{eq:aux_1}&\le\sum\limits_{t=1}^T\gft_t(p_1) \le \sum_{t=1}^T \gft_t(p^*).
        \end{align}
        Let $\pi_1$, respectively $\pi_2$, be the probability with which $\gamma^\ast$ draws $(p_1,q_1)$, respectively $(p_2,q_2)$ we have:
        \begin{align} \label{eq:T_3}
        f_3(p_2,q_2) & \le - g_3(p_2,q_2) 
         \le - \sum_{t=1}^T \prof_t(p_2, q_2)\le \tfrac{\pi_1}{\pi_2} \sum_{t=1}^T \prof_t(p_1, q_1) \le \tfrac{\pi_1}{\pi_2} \sum_{t=1}^T \gft_t(p^*),
        \end{align}
    where the first inequality follows by the definition of $\cT_3$, the second by the fact that the only negative profit by posting $(p_2,q_2)$ comes from $\cT_3$, the third by global budget balance of $\gamma$, and the last one by \Cref{eq:aux_1}.
    We finally have all the ingredients to conclude the proof:
    \begin{align*}
        \sum_{t=1}^T \subE{(p,q)\sim\gamma^*}{\gft_t(p,q)}&=\pi_1\sum_{t=1}^T \gft_t(p_1, q_1)+\pi_2 \sum_{i=0,\dots,3}f_i(p_2, q_2) \tag{linearity of exp.}\\
        &\le \pi_1 \sum_{t=1}^T \gft_t(p_1, q_1)+(\pi_1+2\pi_2) \sum_{t=1}^T \gft_t(p^*) \tag{by Eq. \ref{eq:T_1 T_2} and \ref{eq:T_3}}\\
        &\le 2\sum_{t=1}^T \gft_t(p^*),
    \end{align*}
    where the last inequality follows by optimality of $p^*$ with respect to the budget balanced prices $(p_1,q_1)$ and using that $\pi_1 + \pi_2 =1$.
\end{proof}

As a corollary, we have that any algorithm that achieves sublinear regret with respect to the best fixed price also guarantees sublinear 2-regret with respect to the best feasible prices distribution.

\begin{corollary}  
\label{cor:2-regret}
    Let $\cA$ be a learning algorithm for the repeated bilateral trade problem which guarantees an upper bound of $f(T)$ on the regret with respect to the best fixed price in hindsight. Then, the $2$-regret of $\mathcal{A}$ with respect to the best budget feasible distribution over prices is at most $f(T)$.
\end{corollary}

Surprisingly, the factor $2$ between the two benchmarks is optimal. This implies that the analysis of the performance of the algorithms in \Cref{cor:2-regret} is essentially tight.

    \begin{theorem}
        \label{thm:2_lower}
        For any $\eps >0$, there exists a sequence of valuations such that
        \[
            \sum_{t=1}^T \mathbb{E}_{(p,q)\sim\gamma^*}{\gft_t(p,q)} \ge (2-\eps) \sum_{t=1}^T \gft_t(p^*),      
        \]
        where $p^*$ and $\gamma^*$ are the best fixed price and global budget balanced distribution, respectively. 
    \end{theorem}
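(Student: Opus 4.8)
\emph{Plan.} The plan is to exhibit, for each $\eps>0$, an explicit sequence of valuations (depending on a small parameter $\delta=\delta(\eps)>0$) on which some budget-feasible two-point distribution over prices extracts gain from trade arbitrarily close to twice that of the best fixed price; optimality of $\gamma^*$ then gives the claim. The whole construction is driven by a single idea: a pair $(p,q)$ with $p>q$ that is ``wide'' can win two kinds of trades that no single price can simultaneously serve, and by making $p-q$ tiny the revenue deficit it incurs can be offset with arbitrarily little probability mass placed on a revenue-positive pair.

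\emph{The instance.} Pick $\delta\in(0,1/2)$ and let $\cS$ consist of $T/2$ rounds with valuations $(s_t,b_t)=(0,\tfrac12)$ (call these \emph{type I}) and $T/2$ rounds with $(s_t,b_t)=(\tfrac12+\delta,1)$ (\emph{type II}), in any order (if $T$ is odd, split the rounds as evenly as possible; this is immaterial). The structural point is that the seller value in type-II rounds, $\tfrac12+\delta$, strictly exceeds the buyer value in type-I rounds, $\tfrac12$, so \emph{no single price wins both types}: a price $p\le\tfrac12$ wins only the type-I rounds (gain $\tfrac12$ each), a price $p\ge\tfrac12+\delta$ wins only the type-II rounds (gain $\tfrac12-\delta$ each), and a price $p\in(\tfrac12,\tfrac12+\delta)$ wins nothing. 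Hence the best fixed price $p^*$ (take $p^*=\tfrac12$) satisfies $\sum_t\gft_t(p^*)=T/4$.

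\emph{The distribution.} Consider the distribution $\gamma$ supported on $(0,\tfrac12)$ and $(\tfrac12+\delta,\tfrac12)$. The pair $(\tfrac12+\delta,\tfrac12)$ wins \emph{every} round: it collects gain $\tfrac12$ on each type-I round and $\tfrac12-\delta$ on each type-II round, for total gain $T/2-T\delta/2$, while its revenue is $\tfrac12-(\tfrac12+\delta)=-\delta$ on each of the $T$ rounds, i.e.\ total revenue $-T\delta$. The pair $(0,\tfrac12)$ wins exactly the type-I rounds, collecting gain $\tfrac12$ and revenue $\tfrac12$ on each, for total gain $T/4$ and total revenue $T/4$. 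Assigning weights $\gamma_1=\tfrac{4\delta}{1+4\delta}$ to $(0,\tfrac12)$ and $\gamma_2=\tfrac{1}{1+4\delta}$ to $(\tfrac12+\delta,\tfrac12)$, the expected total revenue is $\gamma_1\cdot\tfrac T4-\gamma_2\cdot T\delta=0$, so $\gamma$ is budget feasible, and its expected total gain from trade is $\gamma_1\cdot\tfrac T4+\gamma_2\bigl(\tfrac T2-\tfrac{T\delta}2\bigr)=\tfrac{T/2+T\delta/2}{1+4\delta}$.

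\emph{Conclusion and main obstacle.} Since $\gamma^*$ is an optimal budget-feasible distribution, $\sum_t\subE{(p,q)\sim\gamma^*}{\gft_t(p,q)}\ge\tfrac{T/2+T\delta/2}{1+4\delta}=\tfrac{2+2\delta}{1+4\delta}\sum_t\gft_t(p^*)$, and $\tfrac{2+2\delta}{1+4\delta}\to2$ as $\delta\to0$, so taking $\delta$ small enough (e.g.\ $\delta\le\eps/6$) yields $\sum_t\subE{(p,q)\sim\gamma^*}{\gft_t(p,q)}\ge(2-\eps)\sum_t\gft_t(p^*)$. The only step requiring care is the identity $\sum_t\gft_t(p^*)=T/4$, i.e.\ that the $\delta$-perturbation of the type-II seller value above $\tfrac12$ genuinely forbids any single price from serving both trade types; this is settled by the three-regime case analysis above, and everything else is arithmetic.
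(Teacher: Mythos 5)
Your proof is correct, and it is essentially the same construction as the paper's: two types of valuations, $T/2$ each, separated by a small gap so that no single price can win both, combined with a two-point distribution whose mass is concentrated on a pair $(p,q)$ with $p>q$ that wins all trades at a tiny per-round deficit, offset by a small weight on the revenue-positive single-price pair. The paper perturbs both endpoints (using $(0,\tfrac12-\eps)$ and $(\tfrac12+\eps,1)$ with the pair $(\tfrac12+\eps,\tfrac12-\eps)$) whereas you perturb only the seller side, but this is a cosmetic reparametrization; the idea and the arithmetic structure are identical.
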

    \begin{proof}
        Fix any $\varepsilon>0$, and let $\delta$ be a positive number we set later. Consider the sequence where $(s_t,b_t)=(0, \nicefrac 12-\delta)$ if $t$ is odd, and $(s_t,b_t)=(\nicefrac 12 +\delta, 1)$ otherwise. Any fixed price can make at most half of the trades happen, with a total gain from trade of at most $\nicefrac{T}{2}\left(\nicefrac{1}{2}-\delta \right)$. 
        
        Consider now the distribution over prices $\gamma$ selecting $(p_1,q_1) = (\nicefrac 12 +\delta, \nicefrac 12 - \delta )$ with probability $\alpha=\nicefrac{(1-2\delta)}{(1+6\delta)}$, and $(p_2,q_2) = (0,\nicefrac 12 -\delta)$ otherwise.
        We conclude the proof by arguing that $\gamma$ satisfies the budget balance constraints, and attains total gain from trade that is roughly twice that of $p^*$.
        First, we show that $\gamma$ is global budget balanced. We have 
        \begin{align*}
            \sum\limits_{t\in [T]} \subE{(p,q)\sim\gamma}{\prof_t(p,q)} &= \alpha \, \sum\limits_{t\in[T]}\prof_t(p_1,q_1)+ (1-\alpha) \sum\limits_{t\in [T]}\prof_t(p_2,q_2) \\
            &\ge \alpha \, ( - 2  \delta T) + (1-\alpha) \tfrac T2 (\tfrac 12 - \delta)= 0,
        \end{align*}
        where in the last equality we use the definition of $\alpha$. We move our attention to the gain from trade:
        \begin{align*}
            \sum_{t=1}^T \mathbb{E}_{(p,q)\sim\gamma}[\gft_t(p,q)] &=      \alpha \, \sum_{t=1}^T \gft_t(p_1,q_1) + (1-\alpha) \, \sum_{t=1}^T \gft_t(p_2,q_2) \\ 
            &\ge \alpha  T \left(\tfrac 12-\delta\right) + (1-\alpha) \tfrac{T}{2}\left(\tfrac 12-\delta\right)\\
            &\ge \left(1 + \alpha \right)\sum_{t=1}^T\gft_t(p^*).
        \end{align*}
        Plugging in the last formula the definition of $\alpha$ and setting $\delta = \nicefrac \eps 8$ yields the desired result. 
    \end{proof}

\section{Final Remarks and Open Problems}

In this paper we introduce the notion of global budget balance in the repeated bilateral trade problem. With this notion, we show for the first time that it is possible to achieve sublinear regret with respect to the best fixed price in hindsight, without relying on any additional assumption. In the full feedback model we prove that the minimax regret rate of the learning problem is $\tilde \Theta(\sqrt T)$, while in the partial feedback models, we provide an upper bound on the regret of order $\tilde O(T^{\nicefrac 34})$, which is complemented with a $\Omega(T^{\nicefrac57})$ lower bound. Our regret results proves a clear separation between the two feedback models, but leave an open gap between the $T^{\nicefrac57}$ and $T^
{\nicefrac 34}$ rates in partial feedback. 

Inspired by Bandits with Knapsack, we formulated a new benchmark: the best feasible distribution over prices. Against this harder benchmark we prove that it is possible to achieve sublinear $2$-regret, while no algorithm can achieve sublinear $(1+\eps_0)$-regret. We leave as an open question the characterization of the optimal competitive ratio $\alpha \in [1+\eps_0,2]$ obtainable against this benchmark.

\section*{Acknowledgments}
MB, MC, AC, FF are partially supported by the FAIR (Future Artificial Intelligence Research) project PE0000013, funded by the NextGenerationEU program within the PNRR-PE-AI scheme (M4C2, investment 1.3, line on Artificial Intelligence). FF is also partially supported by ERC Advanced Grant 788893 AMDROMA “Algorithmic and Mechanism Design Research in Online Markets”, and PNRR MUR project IR0000013-SoBigData.it. MC is also partially supported by the EU Horizon project ELIAS (European Lighthouse of AI for Sustainability, No. 101120237). AC is partially supported by MUR - PRIN 2022 project 2022R45NBB funded by the NextGenerationEU program.

\bibliographystyle{plainnat}
\bibliography{references}

\clearpage
\appendix

\section{Appendix}
\subsection{Well Posedness of the Two Benchmarks}
\label{app:model}

    \begin{proposition}\label{prop:Benchmark1}
          For any sequence of valuations $\cS$ there exists a price $p^* \in [0,1]$ such that: 
            \[
                R_T(\cA,\cS) = \sum_{t=1}^T \gft_t(p^*) - \E{\sum_{t=1}^T\gft_t(p_t,q_t)}.
            \]
    \end{proposition}
     \begin{proof}
            %

            Denote the cumulative gain from trade of a pair of price $(p,q)$ as follows:
            \begin{equation}
            \label{eq:usc}
                f(p,q) \defeq \sum_{t=1}^T \gft_t(p,q).
            \end{equation}
            This function is upper semi-continuous on the upper left triangle $\{(p,q) \in [0,1]^2 \mid p \le q\}$ (see \Cref{cl:usc}), thus the $\sup$ in the definition is indeed a max. For the remaining part of the statement, let $(\hat p, \hat q)$ be any pair of prices in the $\argmax$ of \Cref{def:benchmark_1}.
            It is easy to see that any price $p^* \in [\hat p, \hat q]$ achieves the same total gain from trade, while trivially respecting the budget balance constraint. 
        \end{proof}
        \begin{claim}
            \label{cl:usc}
                The function $f$ defined in \cref{eq:usc} is upper semi-continuous on $\{(p,q) \in [0,1]^2 \mid p \le q\}$.
        \end{claim}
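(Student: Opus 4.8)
The plan is to reduce the claim to a statement about each individual summand $\gft_t$, using the elementary fact that a finite sum of upper semi-continuous (u.s.c.) functions is u.s.c. So it suffices to show that, for each fixed $t$, the map $(p,q)\mapsto\gft_t(p,q)$ is u.s.c. on the triangle $\cT\defeq\{(p,q)\in[0,1]^2\mid p\le q\}$.

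First I would rewrite $\gft_t(p,q)=(b_t-s_t)\cdot\ind{s_t\le p}\cdot\ind{q\le b_t}$ and observe that the product of indicators is itself the indicator of the set $A_t\defeq\{(p,q)\in[0,1]^2\mid s_t\le p\text{ and }q\le b_t\}$, which is the intersection of two closed half-planes with $[0,1]^2$, hence a closed subset of $[0,1]^2$. Since the indicator of a closed set is u.s.c. (its superlevel sets $\{\ind{A_t}\ge\alpha\}$ are either $\emptyset$, $A_t$, or the whole space, all closed), the function $\ind{A_t}$ is u.s.c. on $[0,1]^2$, and in particular on $\cT$.

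Next I would split into two cases according to the sign of the scalar $b_t-s_t$. If $b_t\ge s_t$, then $\gft_t=(b_t-s_t)\ind{A_t}$ is a non-negative scalar multiple of a u.s.c. function, hence u.s.c. on $\cT$. If instead $b_t<s_t$, then on $\cT$ the set $A_t$ is empty: a point $(p,q)\in A_t\cap\cT$ would satisfy $s_t\le p\le q\le b_t$, forcing $s_t\le b_t$, a contradiction. Thus $\gft_t\equiv 0$ on $\cT$, which is trivially u.s.c. This case distinction is the one place where restricting to the upper triangle is essential — on the full square $[0,1]^2$ the term with $b_t<s_t$ would be a negative multiple of an indicator of a closed set, which is only \emph{lower} semi-continuous — so I would emphasize that step.

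Finally, summing over $t=1,\dots,T$ and invoking closure of u.s.c. functions under finite sums gives that $f=\sum_{t=1}^T\gft_t$ is u.s.c. on $\cT$, as claimed. I do not anticipate any real obstacle here; the only subtlety, as noted, is the sign of $b_t-s_t$, which the geometry of $\cT$ handles cleanly.
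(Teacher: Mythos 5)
Your proof is correct and takes essentially the same route as the paper's: both reduce the claim to each summand $\gft_t$, use the fact that on the upper triangle only steps with $b_t \ge s_t$ contribute, observe that each surviving term is an upper semi-continuous step function (in your case, a non-negative multiple of the indicator of a closed set), and invoke closure of u.s.c. functions under finite sums. Your write-up simply spells out the details that the paper states more tersely — in particular, making explicit that $\ind{A_t}$ is u.s.c. because $A_t$ is closed, and why restricting to the triangle $p\le q$ is what kills the problematic terms with $b_t<s_t$.
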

        \begin{proof}[Proof of Claim \ref{cl:usc}]
            The function $f$ is the sum of $T$ terms of the following form: 
            \[
            \gft_t(p,q) = \ind{s_t \le p} \ind{q\le b_t} (b_t - s_t).
            \]
            Moreover, for pairs in $\{(p,q) \in [0,1]^2 \mid p \le q\}$  the gain from trade is non-zero only for steps $t\in [T]$ such that $s_t \le b_t$. This implies that $f$ is the sum of at most $T$ step-functions that are upper semi-continuous.
        \end{proof}

        \begin{proposition}
        \label{pr:benchmark_2}
            The definition of the best fixed distribution is well-posed. Moreover, there always exists a feasible distribution $\gamma^*$ with support at most two that attains the $\sup$.
        \end{proposition}
    \begin{proof}
            Fix any sequence of valuations $\{(s_t,b_t)\}_{t=1}^T$, we introduce two auxiliary functions: 
            \[
                f(p,q) \defeq \sum_{t=1}^T \gft_t(p,q) \quad \text{ and } \quad g(p,q) \defeq \sum_{t=1}^T \prof_t(p,q).
            \]
            We can rewrite the program in \Cref{def:benchmark_2} as:
            \begin{subequations}
            \begin{align*}
            \sup\limits_{\gamma\in\Delta([0,1]^2)}	&\quad \subE{(p,q)\sim\gamma}{f(p,q)} \\
    	s.t. &\quad\subE{(p,q)\sim\gamma}{g(p,q)}\ge 0.
            \end{align*}
            \end{subequations}
            As a first step, we show that the support of $\gamma$ can be restricted to a discrete grid $G$.
            To simplify the exposition, we sort the sets of valuations $\{0,1, s_1, \dots, s_T\}$ and $\{0,1, b_1, \dots, b_T\}$ in increasing order.
            Formally, we define the set $\{s^0 = 0, s^1, \dots, s^T, s^{T+1} = 1\}$, where $s^i\le s^{i+1}$ for each $i$, and $\{s^i\}^{T+1}_{i=0} = \{s_t\}_{t=1}^T \cup \{0,1\}$.
            Similarly, we define the set $\{b^0 = 0, b^1, \dots, b^T, b^{T+1} = 1\}$, where $b^i\le b^{i+1}$ for each $i$, and $\{b^i\}^{T+1}_{i=0} = \{b_t\}_{b=1}^T \cup \{0,1\}$.\footnote{For the sake of clarity, we assume that $s_i\neq s_j$ and $s_i,s_j\notin \{0,1\}$, and $b_i\neq b_j$ and $b_i,b_j\notin\{0,1\}$ for each $i,j \in [T]$.
            It is easy to extend our results to the general setting.}
            The grid $G$ contains all the points of the form $(s^i,b^j)$ with $i,j \in \{ 0,1,\dots, T+1\}$.

            Now, we assign any point in $[0,1]^2$ to a point in the grid $G$.
            In particular, we define the map $\pi_G$ that associates each $(p,q) \in [0,1]^2$ to the upper-most and left-most point in its subset of the partition. Formally, $\pi_G:(p,q)\mapsto(s^i,b^j)$, where $i$ is the greatest index such that $ s^{i}\le p$, and $j$ is the smallest index such that $ b^{j}\ge q$.  
            
            From \Cref{cl:discrete} we have that in Program~\eqref{eq:benchmark_2} we can restrict our attention to distributions $\gamma$ that are supported on $G$ and thus are discrete.
            Hence, we can rewrite Program~\eqref{eq:benchmark_2} as the  following linear program: \[\max_{(x_1,x_2)\in\cC} x_1\textnormal{ s.t. }x_2\ge 0,\] 
            where $\mathcal{C}$ is the convex hull of $\cX \defeq \{(f(p,q), g(p,q))\,|\, (p,q)\in G\}$. Consider any optimal solution $(x_1^\star,x_2^\star)$ of such linear program.
            Since $(x_1^\star,x_2^\star)$ belongs to $\cC$, which is a convex hull of a finite set of points, by Caratheodory's Theorem ( see \eg Theorem 5.1 of \citet{Schrijver03}) it can be expressed as a convex combination of $3$ points in $\cX$.

            As a direct implication of first-order optimality conditions (\ie the gradient $(1,0)$ has to belong to the normal cone of $\cC$ at $(x_1^\star,x_2^\star)$) we have that $(x_1^\star,x_2^\star)$ must be on the boundary $\partial \cC$ of $\cC$. 
            This also yields the existence of an hyperplane supporting $\cC$ at $(x_1^\star,x_2^\star)$ \citep[see, \eg Lemma 4.2.1 of ][]{hiriart2004fundamentals}.
            Since $\cC$ is entirely contained in one of the halfspaces defined by the supporting hyperplane, and since $(x_1^\star,x_2^\star)\in\partial\cC$, it must be the case that either $(x_1^\star,x_2^\star)\in \cX$, or we can write the optimal point as a convex combination two points belonging to $\cX$ (\ie the two points defining the face of the polytope containing $(x_1^\star,x_2^\star)$).
            Call $(p_1, q_1)$ and $(p_2, q_2)$ the two points that are in the preimage of the two points generating $(x_1^\star,x_2^\star)$ according to $f$ and $g$, respectively. We showed that there exists an optimal solution whose support consists of the two (possibly coinciding) points $(p_1, q_1)$ and $(p_2, q_2)$. This concludes the proof. 
        \end{proof}

            \begin{claim}
            \label{cl:discrete}
                Let $\gamma \in \Delta([0,1]^2)$. There exists a distribution $\gamma_G\in \Delta([0,1]^2)$ with the following three properties: 
            \begin{OneLiners}
                 \item $\subE{(p,q)\sim\gamma}{f(p,q)} = \subE{(p,q)\sim\gamma_G}{f(p,q)}$;
                 \item $\subE{(p,q)\sim\gamma}{g(p,q)} \ge \subE{(p,q)\sim \gamma_G}{g(p,q)}$;
                 \item $\supp(\gamma_G) \subseteq G$.
            \end{OneLiners}
            \end{claim}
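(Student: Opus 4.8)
The plan is to define $\gamma_G$ as the pushforward of $\gamma$ under the map $\pi_G$, i.e.\ $\gamma_G \defeq (\pi_G)_\# \gamma$, so that sampling $(p,q)\sim\gamma$ and then applying $\pi_G$ produces a sample from $\gamma_G$. Property \emph{(iii)} is then immediate since the image of $\pi_G$ is contained in $G$. For the remaining two properties, the key observation is that $\pi_G$ sends each point $(p,q)$ to the top-left corner $(s^i, b^j)$ of the cell of the grid partition that contains it, and that both $f$ and $g$ are ``constant enough'' on such cells: moving $p$ up to the next seller breakpoint $s^i$ (staying below it) does not change which indicators $\ind{s_t \le p}$ fire, and moving $q$ down to the previous buyer breakpoint $b^j$ (staying above it) does not change which indicators $\ind{q \le b_t}$ fire. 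Hence $\ind{s_t \le p}\ind{q \le b_t}$ is unchanged under $\pi_G$ for every $t$, which immediately gives $f(\pi_G(p,q)) = f(p,q)$ for all $(p,q)$, and by taking expectations, property \emph{(i)}.

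For property \emph{(ii)}, the indicators are again unchanged, but the \emph{revenue} weight $q-p$ does change: under $\pi_G$ the price $p$ only increases (to $s^i \ge$ the part of $p$ it replaces is at most $p$; precisely $s^i \le p$) and the price $q$ only decreases (to $b^j \le q$). Wait — $\pi_G$ sends $p \mapsto s^i$ with $s^i \le p$, so $p$ weakly decreases, and $q \mapsto b^j$ with $b^j \ge q$, so $q$ weakly increases. Let me recheck against the revenue direction: we want $q - p$ to not decrease, so that $g$ does not increase would be the wrong direction. Re-examining: $\pi_G$ maps to the ``upper-most and left-most'' point, $i$ greatest with $s^i \le p$ (so $s^i \le p$, price to seller decreases) and $j$ smallest with $b^j \ge q$ (so $b^j \ge q$, price to buyer increases). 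Then $b^j - s^i \ge q - p$, so on each cell $g$ \emph{increases} under $\pi_G$, giving $\subE{\gamma}{g} \le \subE{\gamma_G}{g}$ — the opposite of what \emph{(ii)} claims. So the correct reading must be that $\pi_G$ maps to the bottom-left / or the statement's inequality forces the map to shrink $q-p$; in any case I will fix the direction of $\pi_G$ so that it maps each cell to the corner minimizing $q - p$ among grid points that preserve all indicators, namely $p \mapsto$ the largest breakpoint $\le p$ and $q \mapsto$ the largest breakpoint $\le q$ that still lies in the same buyer-cell — equivalently decreasing $p$ to $s^i$ and decreasing $q$ to the appropriate $b^{j}$; then $q - p$ can only decrease, so $g(\pi_G(p,q)) \le g(p,q)$, and taking expectations yields \emph{(ii)}.

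\textbf{Main obstacle.} The only subtlety is pinning down the definition of $\pi_G$ so that it simultaneously (a) preserves every indicator $\ind{s_t\le p}$ and $\ind{q\le b_t}$ — which forces $p$ to move within $[s^i, s^{i+1})$ and $q$ within $(b^{j-1}, b^j]$ — and (b) moves $q-p$ in the direction that makes the revenue inequality point the right way. These two requirements constrain $p$ and $q$ to opposite ends of their respective cells, so one must check the rounding is consistent with the grid $G = \{(s^i,b^j)\}$ actually containing the target point; since $G$ is the full product grid of the sorted breakpoints, every such corner is in $G$, so there is no conflict. Once $\pi_G$ is fixed this way, the three properties follow from a one-line cellwise argument plus linearity of expectation, and no further estimates are needed. (The footnote's genericity assumption on distinct valuations makes the cell structure clean; the general case follows by the same argument with ties broken arbitrarily.)
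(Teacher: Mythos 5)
You correctly spotted a real inconsistency: with the paper's $\pi_G$ (round $p$ \emph{down} to $s^i\le p$, round $q$ \emph{up} to $b^j\ge q$), every indicator $\ind{s_t\le p}$ and $\ind{q\le b_t}$ is preserved, which gives $f(\pi_G(p,q))=f(p,q)$, but the revenue weight changes as $q'-p'=b^j-s^i\ge q-p$, so $g(\pi_G(p,q))\ge g(p,q)$. Integrating against $\gamma$ gives $\subE{\gamma}{g}\le\subE{\gamma_G}{g}$, which is indeed the \emph{opposite} of the inequality printed in the claim. That is a genuine typo in the paper: the prose in the paper's own proof says ``the revenue does not decrease,'' and the downstream use of the claim (reducing the feasible set of Program~\eqref{eq:benchmark_2} to grid-supported distributions) needs exactly $\subE{\gamma_G}{g}\ge\subE{\gamma}{g}\ge 0$.

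Where your proposal goes wrong is in trying to ``repair'' the stated inequality by redefining $\pi_G$ to move $q-p$ downward. That redefinition is not available. The set of prices $p$ that produce the same values of $\ind{s_t\le p}$ for all $t$ is the half-open interval $[s^i,s^{i+1})$, and the set of $q$ producing the same values of $\ind{q\le b_t}$ is $(b^{j-1},b^j]$; the unique grid point of $G$ in the cell $[s^i,s^{i+1})\times(b^{j-1},b^j]$ is $(s^i,b^j)$, which is precisely the paper's choice. Any other grid corner changes some indicator and therefore breaks property \emph{(i)} — e.g.\ rounding $q$ down to $b^{j-1}$ wins the trades with $b_t=b^{j-1}$ that $(p,q)$ did not win, and those extra terms $(b_t-s_t)\ind{s_t\le p}$ can have either sign, so $f$ is no longer preserved. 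So the statement ``there is no conflict'' in your write-up is exactly where the argument fails: preserving all indicators and shrinking $q-p$ are incompatible demands, and you cannot satisfy both by picking a different corner of the cell. The correct resolution is to keep the paper's $\pi_G$, keep the proof you outlined for \emph{(i)} and \emph{(iii)} (pushforward under $\pi_G$), and flip the inequality in \emph{(ii)} to $\subE{\gamma}{g}\le\subE{\gamma_G}{g}$; with that direction the cellwise argument you computed goes through verbatim and the reduction in Proposition~\ref{pr:benchmark_2} is still valid.
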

            \begin{proof}
                We define the distribution $\gamma_G$ on $G$ by assigning to each point of the grid $(s^i,b^j)$, with $i\in \{1,\ldots,T\}$ and $j\in\{0,\ldots,T-1\}$, the probability mass which $\gamma$ assigns to points in the cell of the grid $\{(p,q)\in[0,1]^2:s^{i-1}\le i\le s^i, b^j\le q\le b^{j+1}\}$. Formally, the distribution $\gamma_G$ is such that 
                \[
                    \subP{(p,q)\sim \gamma_G}{(p,q) = (s^i,b^j)} = \subP{(p,q)\sim\gamma}{\pi_G(p,q) = (s^i,b^j)}.
                \]
                The new distribution $\gamma_G$ is clearly supported on $G$ and thus verifies the third point of the claim.
                We now prove the remaining two points.
                First, by construction, the expected gain from trade is not affected by the change in distribution.
                Indeed, for each $t$, $p$, and $q$,  it holds 
                \[\gft_t(p,q) = \gft_t(\pi_G(p,q))\] since $\ind{s_t \le p} \ind{q\le b_t}=1$ if and only if $\ind{s_t \le p'} \ind{q'\le b_t}=1$, where   $(p',q')\coloneqq \pi_G(p,q)$. We conclude the proof by showing that the profit does not decrease. It is sufficient to prove that for each $t$, $p$, and $q$, it holds 
                \[\prof_t(p,q) \ge \prof_t(\pi_G(p,q)).\]
                Since $(q-p)\le (q'-p')$, $\pi_G(p,q)$ and $(p,q)$ make the same trades happen. Then,  $\pi_G(p,q)$ extracts at least the same profit of the pair $(p,q)$.
                This concludes the proof.
            \end{proof}

\subsection{Missing Proofs from \Cref{subsec:partialLower}}\label{app:LBbandit}

\begin{lemma}\label{lem:welldefinedmuk}
The distributions $\mu_k$ are well defined for all $k\in\{0,\ldots, N-1\}$.
\end{lemma}
\begin{proof}
    Since for all $w\in\cW_1\cup\cW_2\cup\cW_3\cup\cW_4\cup\cW_5$ the weights $\mu_k(w)$ are positive for all $k\in\{0,\ldots,N-1\}$ we just need to prove that $\mu_k(w)$ is positive for all $w\in \cW_6$.
 
    Then, using the upper bound on $\gamma_3^{\textnormal{tot}}<2\gamma_1N\le 1/32N$ we get that:
    \begin{align*}
        \gamma_6&\ge\frac{1}{4}\left(1-\left(\frac{1}{32N}+\frac{1}{32N}+\frac{13N-14}{16N}+\frac{1}{64}\right)\right)\\
        &= \frac{1}{4}\left(1-\left(\frac{13(N-1)}{16N}+\frac{1}{64}\right)\right)\ge\frac{1}{32}.
    \end{align*}
    This, together with the fact that $\gamma_6\le1/4$, proves that all the probabilities in the instances' distributions $\mu_k$ are well defined.
\end{proof}

\lowertriisdom*
    \begin{proof}
        Consider any point $(p^1,q^1)=\left(\frac{1-\ell}{2}+i\Delta,\frac{1-\ell}{2}+\delta+i\Delta\right)$. For any $i\in\{0,\ldots,N-1\}$ we define $(p^1,q^{1'}_j)=\left(\frac{1-\ell}{2}+i\Delta,\frac{1-\ell}{2}-\delta+(i-j)\Delta\right)$ for each $j\in\{0,\ldots,i\}$.
        Simple calculations show that:
        \begin{align*}
            \mathbb{E}_k[\gft(p^1,q^1,s,b)&-\gft(p,q_j',s,b)] = \sum\limits_{\iota=j}^{i-1}\left[\mu_k(w_3^\iota)\left(\frac{1-\ell}{2}-\delta+\iota\Delta\right)+\gamma_4\delta\right]\\
            &=\sum\limits_{\iota=j}^{i-1}\left[\gamma_1\cdot\frac{1-\ell-\rho-2\iota \Delta}{\frac{1-\ell}{2}-\delta+\iota \Delta}\left(\frac{1-\ell}{2}-\delta+\iota\Delta\right)+4\gamma_1(13N-14)\delta\right]\\
            &=\sum\limits_{\iota=j}^{\iota-1}\left[\gamma_1(1-\ell-\rho-2\iota\Delta+4\delta(13N-14))\right],
        \end{align*}
        which has to hold for all $i\in\{0,\ldots,N-1\}$. The worst case is when $\iota=N-1$. This yields
        \[
        \gamma_1\left(1-\ell-\rho-2\ell+2\frac{\ell}{N-1}(13N-14)\right)>0,\quad \forall N>1
        \]
        and thus for all $j$ we have that:
        \[
        \mathbb{E}_k[\gft(p^1,q^1,s,b)-\gft(p^1,q^{1'}_j,s,b)]>0.
        \]
        The proof is concluded by noting that, for any $(p,q)\in\cG_\cW\cap\{(p,q)\in[0,1]^2\mid p< q\}$ in the lower triangle, the gain from trade is upper bounded by that of some $(p^1,q^{1'}_j)$, that is
        \[
        \mathbb{E}_k[\gft(p,q,s,b)]\le\mathbb{E}_k[\gft(p^1,q^{1'}_j,s,b)].
        \]
        On the other hand, for any $(p',q')\in\cG_\cW\cap\{(p,q)\in[0,1]^2\mid p\geq q\}$ in the upper triangle, the gain from trade is lower bounded by that of a point $(p^1,q^1)$, that is
        \[
        \mathbb{E}_k[\gft(p^1,q^1,s,b)]\le\mathbb{E}_k[\gft(p',q',s,b)].
        \]
        This concludes the proof of the lemma.
    \end{proof}

\pinsker*

    \begin{proof}

    A simple application of the Pinsker's inequality shows that
    \begin{align}\label{eq:KL1}
        \mathbb{E}_k[\cM_k] - \mathbb{E}_0[\cM_k]\le T\sqrt{\frac{1}{2}\textnormal{KL}(\mathbb{P}_0, \mathbb{P}_k)}.
    \end{align}
    
    By \Cref{lem:same_FB} and by the standard $\textnormal{KL}$ decomposition theorem of the $\textnormal{KL}$ divergence~\citep[Chapter~6]{cesa2006prediction}, we have that the KL divergence between $\mathbb{P}_0$ and $\mathbb{P}_k$ only depends on the expected number of times that exploring actions were played, and on the KL divergence of the feedback distributions in such regions:
    \begin{align*}
        \textnormal{KL}(\mathbb{P}_0, \mathbb{P}_k)&=\mathbb{E}_0[\cN_k]\cdot \textnormal{KL}(\cH_0, \cH_k),
    \end{align*}
    where, for each $k$, $\cH_{k}$ is a discrete distribution on the $4$ possible outcomes of the two-bit feedback. Formally,
    \begin{align*}
    \cH_k(z)=
    \begin{cases}(k+1)\gamma_1+\eps \ind{k\neq0} + \gamma_6,&\text{if}\quad z={(1,1)}\\ 
    \gamma_1-\eps\ind{k\neq0}+\gamma_1(N-2-k)+\gamma_6,&\text{if}\quad z={(0,1)}\\
    (k+1)\gamma_1-\eps\ind{k\neq0}+\gamma_6+\gamma_5+\gamma_4(k+1)+\gamma_3^{\textnormal{tot}},&\text{if}\quad z={(1,0)}\\
    \gamma_1+\eps\ind{k\neq0}+\gamma_1(N-2-k)+\gamma_6+\gamma_4(N-k-1)&\text{if}\quad z={(0,0)}
    \end{cases}
    \end{align*}
    By upperbounding the $\textnormal{KL}$ divergence with the $\chi^2$-distance~\citep[Chapter~14]{lattimore2020bandit}, we immediately obtain that 
    \begin{align}
        \textnormal{KL}(\cH_0, \cH_k)\le\chi^2(\cH_0, \cH_k)
        =\sum\limits_{z\in\{0,1\}^2} \frac{(\cH_0(z)-\cH_k(z))^2}{\cH_0(z)}
        \le \eps^2\frac{4}{\gamma_6},\label{eq:KL2}
    \end{align}
    where the last inequality holds since $\cH_0(z)\ge\gamma_6$.
    \end{proof}

\lemlowerboundbandit*
\begin{proof}
    By summing over instances $k\in\{1,\ldots, N-1\}$ the result of \Cref{lem:KL} and using Jensen's inequality, we obtain:
    \begin{align}
        \frac{1}{N-1}\sum\limits_{k=1}^{N-1}(\mathbb{E}_k[\cM_k]-\mathbb{E}_0[\cM_k])
       \le\eps T\sqrt{\frac{2}{\gamma_6}\frac{1}{N-1}\sum\limits_{k=1}^{N-1}\mathbb{E}_0[\cN_k]}.\label{eq:KL3}
    \end{align}
    Then, by rearranging \Cref{lem:upperboundGFTLB} by and substituting $c_2=c_1-\gamma_5\frac{1-\ell}{2}$ we obtain:
    \[
    \sum\limits_{t=1}^T\mathbb{E}_k[\gft(p_t,q_t,s,b)]\le \mathbb{E}_k\left[c_1 T+\rho\cdot \eps \cM_k- \gamma_5\frac{1-\ell}{2}\sum\limits_{j=1}^{N-1}\cN_{j}\right],
    \]
    and by summing over $k\in\{1,\ldots,N-1\}$, dividing by $N-1$, and using \Cref{eq:KL3} we get
    \begin{align*}
        \frac{1}{N-1}\sum\limits_{t=1}^T\sum\limits_{k=1}^{N-1}\mathbb{E}_k[\gft(p_t,q_t,s,b)]
        &\le c_1T+\rho\eps\frac{1}{N-1}\sum\limits_{k=1}^{N-1}\mathbb{E}_k[\cM_k]\\
        &\le c_1T +\rho\eps\frac{1}{N-1}\sum\limits_{k=1}^{N-1}\mathbb{E}_0[\cM_k]+\rho\eps^2 T\sqrt{\frac{2}{\gamma_6}\frac{1}{N-1}\sum\limits_{k=1}^{N-1}\mathbb{E}_0[\cN_k]}.
    \end{align*}
    
    Then, the average of the regret $R_T^k$ over instances $k\in\{1,\ldots, N-1\}$ can be lower bounded by
    \begin{align}
        \frac{1}{N-1}\sum\limits_{k=1}^{N-1}
        R_T^k&\ge T(c_1+\rho\eps)-\frac{1}{N-1}\sum\limits_{k=1}^{N-1}\mathbb{E}_k[\gft(p_t,q_t,s,b)]\nonumber\\
        &\ge T(c_1+\rho\eps)-\left(c_1T +\rho\eps\frac{1}{N-1}\sum\limits_{k=1}^{N-1}\mathbb{E}_0[\cM_k]+\rho\eps^2 T\sqrt{\frac{2}{\gamma_6}\frac{1}{N-1}\sum\limits_{k=1}^{N-1}\mathbb{E}_0[\cN_k]}\right)\nonumber\\
        &\ge\rho\eps T \left(\frac{1}{2}-\eps \sqrt{\frac{2}{\gamma_6}\frac{1}{N-1}\sum\limits_{k=1}^{N-1}\mathbb{E}_0[\cN_k]}\right)\label{eq:LB4},
    \end{align}
    where the first inequality follows by \Cref{lem:LBlem1} while the last inequality holds for any $N\ge 2$.

    Then, we divide the analysis in two cases. Intuitively, 
    the first one correspond to the cases in which the algorithm does not explore enough (\ie $\sum_{k}\mathbb{E}_0[\cN_k]$ is small) and, therefore, it cannot correctly identify the instance. In the second case the algorithm spends a large time exploring  (\ie $\sum_{k}\mathbb{E}_0[\cN_k]$ is large), and thereby accumulates large regret (by \Cref{lem:explorationiscostly}).
    
    Formally,
    if $\eps \sqrt{\frac{2}{\gamma_6}\frac{1}{N-1}\sum_{k=1}^{N-1}\mathbb{E}_0[\cN_k]}\le\frac14$ then \Cref{eq:LB4} implies that the average regret over instances $k\in\{1,\ldots,N-1\}$ is at least
    \[
    \frac{1}{N-1}\sum_{k=1}^{N-1}R_T^k\ge
    \frac14\rho\eps T\ge\frac{1}{10^3}\eps T.
    \]
    Then, there must exist at least an instance $k\in\{1,\ldots,N-1\}$ in which $R_T^k=\Omega(\eps T)$.
    
    Otherwise, if $\eps \sqrt{\frac{2}{\gamma_6}\frac{1}{N-1}\sum_{k=1}^{N-1}\mathbb{E}_0[\cN_k]}\ge \frac14$, then the regret of the base instance can be upper bounded by
    \[
    R_T^0\ge \frac{\gamma_5}{3}\mathbb{E}_0\left[\sum_{k=1}^{N-1} \cN_k\right]\ge \frac{\gamma_5}{3}\frac{\gamma_6}{2}\left(\frac{1}{4\eps}\right)^2(N-1)\ge \frac{\gamma_5}{3}\frac{\gamma_6}{4}\left(\frac{1}{4\eps}\right)^2N\ge\frac{1}{10^{6}}\frac{N}{\eps^2},
    \]
    and hence in the base instance the regret is at least $\Omega\left(\frac{N}{\eps^2}\right)$.
\end{proof}

\end{document}